\begin{document}
\newtheorem{lemma}{Lemma}
\newtheorem{corollary}{Corollary}
\newtheorem{theorem}{Theorem}
\newtheorem{proposition}{Proposition}
\newtheorem{definition}{Definition}
\newcommand{\e}{\begin{equation}}
\newcommand{\ee}{\end{equation}}
\newcommand{\eqn}{\begin{eqnarray}}
\newcommand{\eeqn}{\end{eqnarray}}
%  调整公式间距
\newenvironment{shrinkeq}[1]
{ \bgroup
\addtolength\abovedisplayshortskip{#1}
\addtolength\abovedisplayskip{#1}
\addtolength\belowdisplayshortskip{#1}
\addtolength\belowdisplayskip{#1}}
{\egroup\ignorespacesafterend}
\title{Terahertz Massive MIMO with Holographic Reconfigurable Intelligent Surfaces}

\author{Ziwei Wan, Zhen Gao, \IEEEmembership{Member,~IEEE},
Feifei Gao, \IEEEmembership{Fellow,~IEEE}, \\
Marco Di Renzo, \IEEEmembership{Fellow,~IEEE}, and Mohamed-Slim Alouini, \IEEEmembership{Fellow,~IEEE}
\thanks{The codes and some other associated materials of this work may be available at https://gaozhen16.github.io.}
\thanks{Z. Wan, and Z. Gao are with the School of Information and Electronics, Beijing Institute of Technology, Beijing 100081, China, and also with the Advanced Research Institute of Multidisciplinary Science, Beijing Institute of Technology, Beijing 100081, China (e-mail: \{ziweiwan, gaozhen16\}@bit.edu.cn).}
\thanks{F. Gao is with the Institute for Artificial Intelligence, Tsinghua University (THUAI), Beijing 100084, China, also with the State Key Laboratory of Intelligent Technologies and Systems, Tsinghua University, Beijing 100084, China, and also with the Beijing National Research Center for Information Science and Technology (BNRist), Department of Automation, Tsinghua University, Beijing 100084, China (e-mail: feifeigao@ieee.org).}
\thanks{M. Di Renzo is with Universit\'e Paris-Saclay, CNRS, CentraleSup\'elec, Laboratoire des Signaux et Syst\`emes, 3 Rue Joliot-Curie, 91192 Gif-sur-Yvette, France(e-mail: marco.di-renzo@universite-paris-saclay.fr).}
\thanks{M.-S. Alouini is with King Abdullah University of Science and Technology (KAUST), Thuwal, Makkah Province, Saudi Arabia (e-mail: slim.alouini@kaust.edu.sa).}
}
\maketitle

\begin{abstract}
We propose a holographic version of a reconfigurable intelligent surface 
(RIS) and investigate its application to terahertz (THz) massive multiple-input multiple-output systems.
Capitalizing on the miniaturization of THz electronic components, RISs can be implemented by densely packing sub-wavelength unit cells, so as to realize continuous or quasi-continuous apertures and to enable {\it holographic communications}. 
In this paper, in particular, we derive the beam pattern of a holographic RIS.
Our analysis reveals that the beam pattern of an ideal holographic RIS can be well approximated by that of an ultra-dense RIS, which has a more practical hardware architecture.
In addition, we propose a closed-loop channel estimation (CE) scheme to effectively estimate the broadband channels that characterize THz massive MIMO systems aided by holographic RISs. The proposed CE scheme includes a downlink coarse CE stage and an uplink finer-grained CE stage. The uplink pilot signals are judiciously designed for obtaining good CE performance.
Moreover, to reduce the pilot overhead, we introduce a compressive sensing-based CE algorithm, which exploits the dual sparsity of THz MIMO channels in both the angular domain and delay domain. Simulation results demonstrate the superiority of holographic RISs over the non-holographic ones, and the effectiveness of the proposed CE scheme.
\end{abstract}

\begin{IEEEkeywords}
Terahertz communications, reconfigurable intelligent surface, massive MIMO, holographic communications, compressive sensing (CS), channel estimation.
\end{IEEEkeywords}

\IEEEpeerreviewmaketitle

\section{Introduction}
Over the past few years, the demand for wireless data traffic has increased significantly due to the explosive growth of mobile devices and multimedia applications \cite{THz1,THz2}. To accommodate these demands, the possible use of the terahertz (THz) band has attracted great interest from both industry and academia \cite{THz1,THz2,DifSca,MM,UMM2,THzModel1,THzModel2}. The THz band can provide more abundant bandwidth (from 0.1 THz to 10 THz), higher data rates (from tens of Gbps to several Tbps), and lower latency (of the order of micro-seconds \cite{THz1}), as compared with the millimeter-wave (mmWave) band. The THz band is considered to be a promising candidate to enable beyond 5G and 6G communications.

In spite of the appealing advantages of the THz band, establishing a reliable transmission link at THz frequencies is a non-trivial task. This is because {(i)} there exist strong atmospheric attenuation and extremely high free-space losses in the THz band; and {(ii)} the line-of-sight (LoS) link is very sensitive to blockage effects in the THz band and thus the links are usually intermittent. These disadvantages may negatively affect the communication range and may severely degrade the service coverage of THz communication systems. The deployment of massive \cite{MM} or even ultra-massive \cite{UMM2} multiple-input multiple-output (MIMO) systems in the context of THz communications may provide considerable beamforming gain in order to overcome the mentioned limitations, but it may result in an unaffordable power consumption and may put an overweight burden on the overall communication system design.

{Recently, the emerging technology of reconfigurable intelligent surface (RIS) has been proposed and applied to wireless communications in order to enhance the communication performance in various scenarios, such as MIMO communications, physical layer security, unmanned aerial vehicle communications, simultaneous wireless information and power transfer, cognitive radio systems \cite{CR,Model2,Y,Reviewer,THzIRS2,
UAV,UAV2,CHuang,Model,SWIPT,B,C,D,MDR,
THzIRS,Europe,Fellow3,RZhang1,RZhang2,MSlim,FFFCE,FSFCE,AA,XYuan,My1,SLiu}.} Made of passive and metamaterial-based reconfigurable elements, an RIS can manipulate both the phase and amplitude of the incident electromagnetic (EM) signals so as to reflect them towards the desired directions. More importantly, unlike other transmission technologies such as active relay \cite{B}, an RIS does not need power-hungry radio frequency chains (RFCs) and power amplifiers, which may be beneficial for developing green and cost-efficient communications. Although the application of RISs to the mmWave band has been investigated recently \cite{AA,XYuan,My1,SLiu}, the utilization of RIS for THz communications is still at its infancy. 

\vspace{-3mm}
\subsection{Prior Work}
RIS-aided MIMO communications have attracted lots of research interest lately. The authors of \cite{RZhang2} propose a joint active and passive beamforming scheme based on convex optimization to maximize the signal-to-interference-plus-noise ratio (SINR) at the receivers.
A similar scenario is considered in \cite{MSlim}, where the phases of the RIS that maximize the SINR are computed via the projected gradient ascent algorithm.
%In \cite{KYing}, broadband beamforming for RIS-aided hybrid mmWave MIMO systems is investigated.
%A geometric mean decomposition (GMD)-based beamforming scheme is applied to achieve better bit-error-rate (BER) performance than conventional singular value decomposition (SVD)-based beamforming.
In \cite{CHuang}, the advanced deep reinforcement learning (DRL) method is considered for the beamforming design in RIS-aided multiuser systems.
{The authors of \cite{Reviewer} develop a scalable optimization framework for large RISs, based on which a two-stage method, including an offline design and an online optimization, is proposed to optimize the RIS.}
{A practical amplitude and phase shift model for the reflecting element of an RIS is investigated in \cite{Model}, where the beamforming optimization is conducted by considering the phase-dependent amplitude variation of each reflecting element.}
{In \cite{Model2}, the authors introduce a new communication model for RISs that accounts for the mutual coupling and the interplay between the amplitude and phase response of the scattering elements of the RIS.}

The designs in \cite{Reviewer,Model,Model2,RZhang2,MSlim,CHuang} rely on the knowledge of global channel state information (CSI).
Since no active elements are used in RISs, channel estimation (CE) is a challenging task and an essential prerequisite in RIS-aided MIMO systems.
In \cite{FFFCE} and \cite{FSFCE}, the authors propose a CE algorithm for RIS-aided systems, based on the least square (LS) estimator, for application to frequency-flat fading channels and frequency-selective fading channels, respectively.
To further facilitate the CE task when large-size arrays are deployed, compressive sensing (CS)-based CE schemes are investigated in \cite{XYuan,My1}, where the inherent sparsity of mmWave channels in the angular domain is exploited in order to reduce the pilot overhead.
In \cite{AA} and \cite{SLiu}, a new architecture of RIS for application to mmWave band is proposed, where a few active RFCs are available at the RIS in order to learn the channel in real time. Based on this architecture, deep learning (DL) techniques are adopted for CE and beamforming.

As far as the application of RISs to THz communications is concerned, the authors of \cite{THzIRS} present an RIS-aided THz MIMO communication system for indoor applications. Joint CE and data rate maximization schemes are proposed in \cite{THzIRS} based on both CS and DL techniques.
{In \cite{THzIRS2}, the problem of RIS-assisted secure transmission in the THz band is investigated. The phase shifts at the RIS that maximize the secrecy rate are obtained based on convex optimization.}
Based on the current state of research, we evince that the design of RIS-aided THz communication systems is still at an early development stage.

Recently, the concept of {\it holographic communication} has been proposed as a new paradigm shift in MIMO \cite{Holo1,Holo2,Holo3} and RIS-aided \cite{Holo4} communications. One of the main features of holographic communications is the integration of very large numbers of tiny and inexpensive antennas or reconfigurable elements into a compact space in order to realize a holographic array with a spatially continuous aperture \cite{Holo1,Holo2,Holo3,Holo4}.
{This holographic architecture is easier to realize in the THz band thanks to the miniaturization of THz electronic components.
In \cite{THzIRS}, for example, the size of each graphene-based reflecting element is $200$ $\mu$m $\times$ $190$ $\mu$m at a carrier frequency of $0.22$ THz that corresponds to a wavelength $\lambda \approx 1360$ $\mu$m.
Therefore, the reflecting elements can be spaced more densely than $\lambda/2$
%, as done in previous works \cite{Model,THzIRS,RZhang2,MSlim,CHuang,FFFCE,FSFCE,XYuan,My1,SLiu,AA}, 
so as to form a spatially continuous surface \cite{Holo4}, since the resulting surface is homogenizable \cite{MDR}. This densely spaced or continuous implementation of RISs is referred to as {\it holographic RIS}.}
Channel modeling and data transmission schemes based on active holographic surfaces are investigated in \cite{Holo1} and \cite{Holo4}, respectively. However, to the best of our knowledge, no current research works have tackled the physical layer transmission design of passive holographic communications, where nearly-passive holographic RISs with spatially continuous apertures are deployed.
%A possible realization of RISs for application to THz frequencies is based on using novel metamaterials, such as graphene, which offers the possibility of realizing sub-wavelength reflecting elements \cite{UMM1,THzIRS}. 

\vspace{-2mm}
\subsection{Paper Contributions}
In this work, we focus our attention on the analysis of holographic RISs for application to massive MIMO systems in the THz band. In particular, the main contributions of this paper can be summarized as follows:
\begin{itemize}
{\item {\bf We derive the beam pattern of an RIS made of discrete elements, and propose an angular-domain beamforming framework.} By applying Fourier analysis to the reflection coefficients of the elements at the RIS, we prove that the beam pattern of an RIS with discrete elements can be represented as a weighted integral of Dirichlet kernel functions. On this basis, we propose an angular-domain beamforming framework. The weighting factors in the beam pattern are designed and the corresponding reflection coefficients of the RIS are reconstructed via the Fourier transform of the obtained weighting factors.
}
{\item {\bf We generalize the analysis and design to (continuous) holographic RISs.} Based on the proposed beamforming framework, we derive and obtain closed-form solutions for the beamforming design in two important cases, i.e., {\it narrow beam steering} (NBS) and {\it spatial bandpass filtering} (SBF), which play an important role in RIS-aided communication systems. We further extend these solutions to holographic RISs, in which the elements are closely spaced so as to yield a virtually continuous spatial aperture. The results reveal that the beam pattern of an ideal holographic RIS can be well approximated by an ultra-dense RIS, which has a practical hardware architecture.}
{\item {\bf We propose a closed-loop CE scheme to effectively estimate the broadband channels of THz massive MIMO systems based on holographic RISs.}
The proposed approach consists of downlink and uplink transmissions. In the downlink transmission, the holographic RIS uses SBF beamforming so that the users can coarsely estimate the range of LoS angles.
In the subsequent uplink transmission, the users with similar LoS angles are scheduled into the same {\it group}, and the coarsely-estimated LoS angles are exploited to design the uplink pilot signals for the finer-grained uplink CE.
To further reduce the uplink pilot overhead, a CS-based CE scheme is introduced, where the dual sparsity of THz MIMO channels in both the angular domain and delay domain is leveraged.}
\end{itemize}
\begin{table*}[h]
\caption{Some Important System Parameters and Channel Variables}
\centering
\begin{tabular}{|l|l|l|l|l|}
\hline
\multicolumn{1}{|c|}{\textbf{Notation}} & \multicolumn{1}{|c|}{\textbf{Definition}} & \multicolumn{1}{|c|}{\textbf{Notation}} & \multicolumn{1}{|c|}{\textbf{Definition}}  \\ \hline
 $M^{\rm B}$& Number of antennas at the BS & $M^{\rm U}$ & Number of antennas at the UE \\ \hline 
 $N_{\rm RF}$ & Number of RFCs at the BS & $A_x \times A_y$ & Physical size of the RIS \\ \hline
 $d$ & Element spacing of the DPA-based RIS & $\Phi(m,n)$ & Reflection coefficient of the ($m$,$n$)-th RIS element \\ \hline
\end{tabular}

\vspace{1mm}

\begin{tabular}{|l|l|}
\hline
\multicolumn{1}{|c|}{\textbf{Notation}$^*$} & \multicolumn{1}{c|}{\textbf{Definition}} \\ \hline
$L$, $K_{\rm f}$ & Number of NLoS paths of the RIS-UE channel, and the corresponding Rican factor \\ \hline
${\bm \psi}^{\rm B}$ (${\bm \psi}^{\rm R}$) & LoS angle of the BS-RIS channel seen by the BS (RIS) \\ \hline
${\bm \mu}^{\rm LoS}$ (${\bm \nu}^{\rm LoS}$) & LoS angle of the RIS-UE channel seen by the RIS (UE) \\ \hline
${\bm \mu}_{l}$ (${\bm \nu}_{l}$) & The $l$-th NLoS angle of the RIS-UE channel seen by the RIS (UE) \\ \hline
$\tau^{\rm DL,LoS}$ ($\tau^{\rm DL}_l$) & Delay offsets of the LoS ($l$-th NLoS) path in the downlink effective baseband channel \\ \hline
$\alpha^{\rm DL}$, $\beta^{\rm DL,LoS}$, $\beta^{\rm DL}_l$ & Channel coefficients of the BS-RIS-UE channel accounting for large-scale fading \\ \hline
${g_r^{{\rm{B}}}}\left( {{\bm \psi}_{\rm out}} \right)$ / ${g^{{\rm{U}}}}\left( {{\bm \psi}_{\rm out}} \right)$ & General beam pattern of the MIMO systems at the $r$-th RFC of the BS / at the UE \\ \hline
${g}\left( {{\bm \psi}_{\rm out}},{{\bm \psi}_{\rm in}} \right)$& General beam pattern of the RIS corresponding to an incident angle ${{\bm \psi}_{\rm in}}$ \\ \hline
\end{tabular}
\begin{tablenotes}
	\footnotesize
	\item *: The uplink version of these symbols is obtained by replacing the superscript ``DL'' with ``UL''. The user index $u$ may be added as a subscript if necessary.
\end{tablenotes}
\end{table*}
\vspace{-2mm}
\subsection{Notation}
Column vectors and matrices are denoted by lower- and upper-case boldface letters, respectively. ${\left(  \cdot  \right)\!^*}$, ${\left(  \cdot  \right)\!^T}$, ${\left( \cdot \right)\!^H}$ and ${\left(  \cdot  \right)\!^{\dag}}$ denote the conjugate, transpose, conjugate transpose and the pseudo-inverse, respectively. 
$j = \sqrt{-1}$ is the imaginary unit.
$\mathbb{C}$ and $\mathbb{Z}$ are the sets of complex-valued numbers and integers, respectively.
{$a \propto b$ denotes $a = Cb$ with $C$ being a non-zero constant.}
${[ \cdot ]_{m}}$ and ${[ \cdot ]_{m,n}}$ represent the ${m}$-th element of a vector and the ${m}$-th row, ${n}$-th column element of a matrix, respectively. 
{$[\bf A]_{\cal I}$ denotes the submatrix consisting of the columns of $\bf{A}$ indexed by the ordered set $\cal I$.}
${\rm diag}(\cdot)$, ${\left\| \cdot \right\|_F}$, and $ \otimes $ represent the diagonalization, Frobenius norm, and Kronecker product, respectively.
${\rm{vec}}\left(  \cdot  \right)$ is the vectorization operation according to the columns of the matrix, and ${\rm{vec}}^{-1}\left(  \cdot  \right)$ is the corresponding inverse operation.
{${\rm card} ({\cal I})$ is the cardinality of the set ${\cal I}$.}
${\Xi _N}\left(  x  \right)$ is the $N$-order Dirichlet kernel function given by ${\Xi _N}\left( x \right) = {\frac{{\sin \left( {Nx/2} \right)}}{{N\sin (x/2)}}}$ for ${x \ne 2k\pi }$, and ${\Xi _N}\left( 2k\pi \right) = {{\left( { - 1} \right)}^{k\left( {N - 1} \right)}}$, $k \in \mathbb{Z}$.
%\begin{equation*}
%{\Xi _N}\left( x \right) = \left\{ {\begin{array}{*{20}{l}}
%{\frac{{\sin \left( {Nx/2} \right)}}{{N\sin (x/2)}},}&{x \ne 2k\pi }\\
%{{{\left( { - 1} \right)}^{k\left( {N - 1} \right)}},}&{x = 2k\pi }
%\end{array}} \right.,k \in \mathbb{Z} \text{.}
%\end{equation*}
The ``sinc'' function is defined as ${\rm{sinc}}\left( x \right) = \frac{{\sin {x} }}{{x}}$ for $x \ne 0$, and ${\rm{sinc}}\left( 0 \right) = 1$.
$\delta(x)$ is the Dirac function.
${{\bf{F}}_N}$ is an $N \times N$ normalized discrete Fourier transformation (DFT) matrix whose elements are ${\left[ {{{\bf{F}}_N}} \right]_{m,n}} = \frac{1}{{\sqrt N }}{e^{ - j\frac{{2\pi(m-1)(n-1)}}{N}}}$. $\mathcal{U}(a,b)$ denotes the uniform distribution within $(a,b)$. 
{Important system parameters and channel variables used in this paper are collected in Table I.}
\begin{figure}[t]
\vspace{-3mm}
\centering
%\captionsetup{font={footnotesize,color={red}}, name = {Fig.}, labelsep = period}
\subfigure[]{\includegraphics[width=3in]{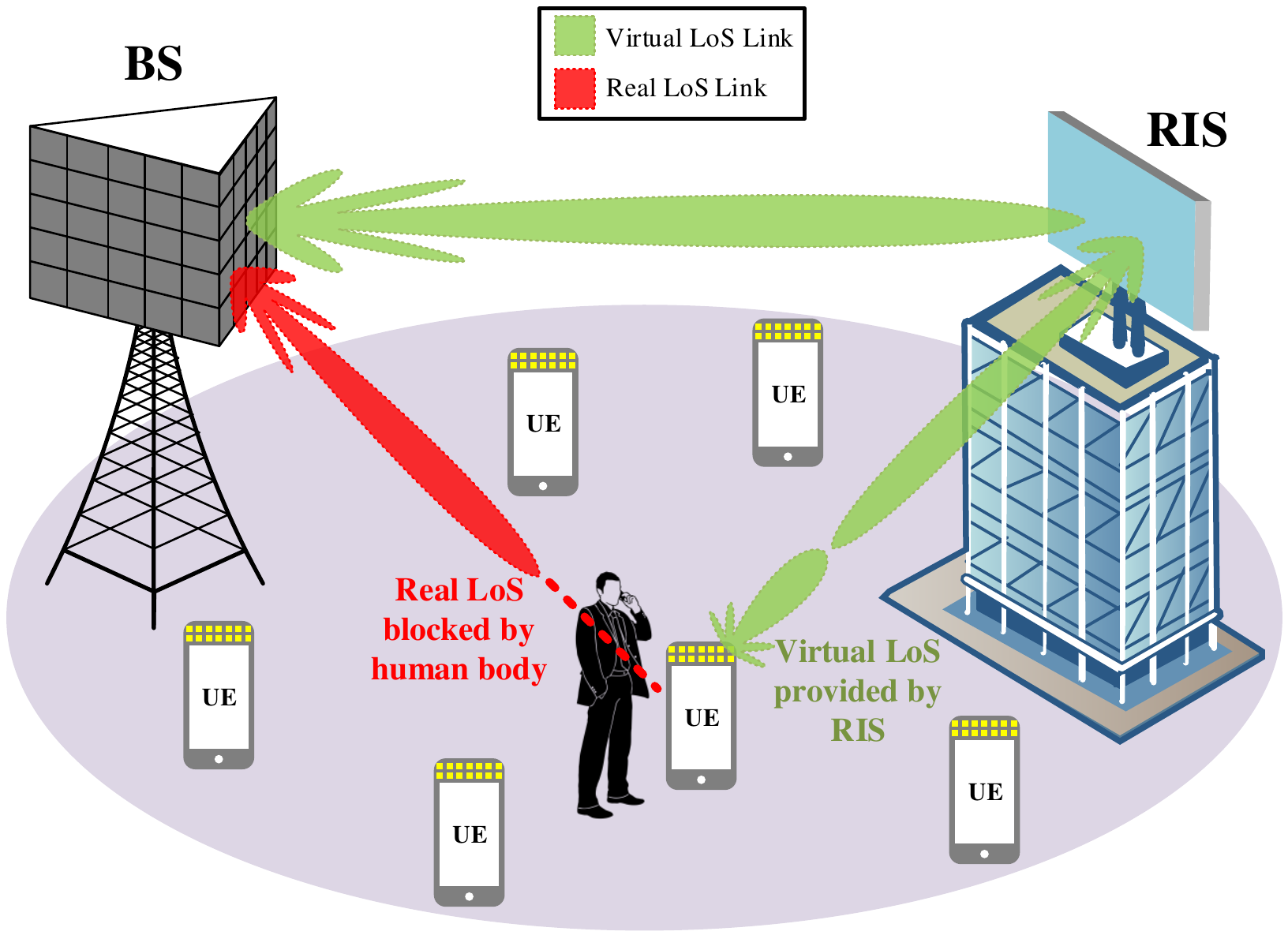}}
\\
\subfigure[]{\includegraphics[width=3in]{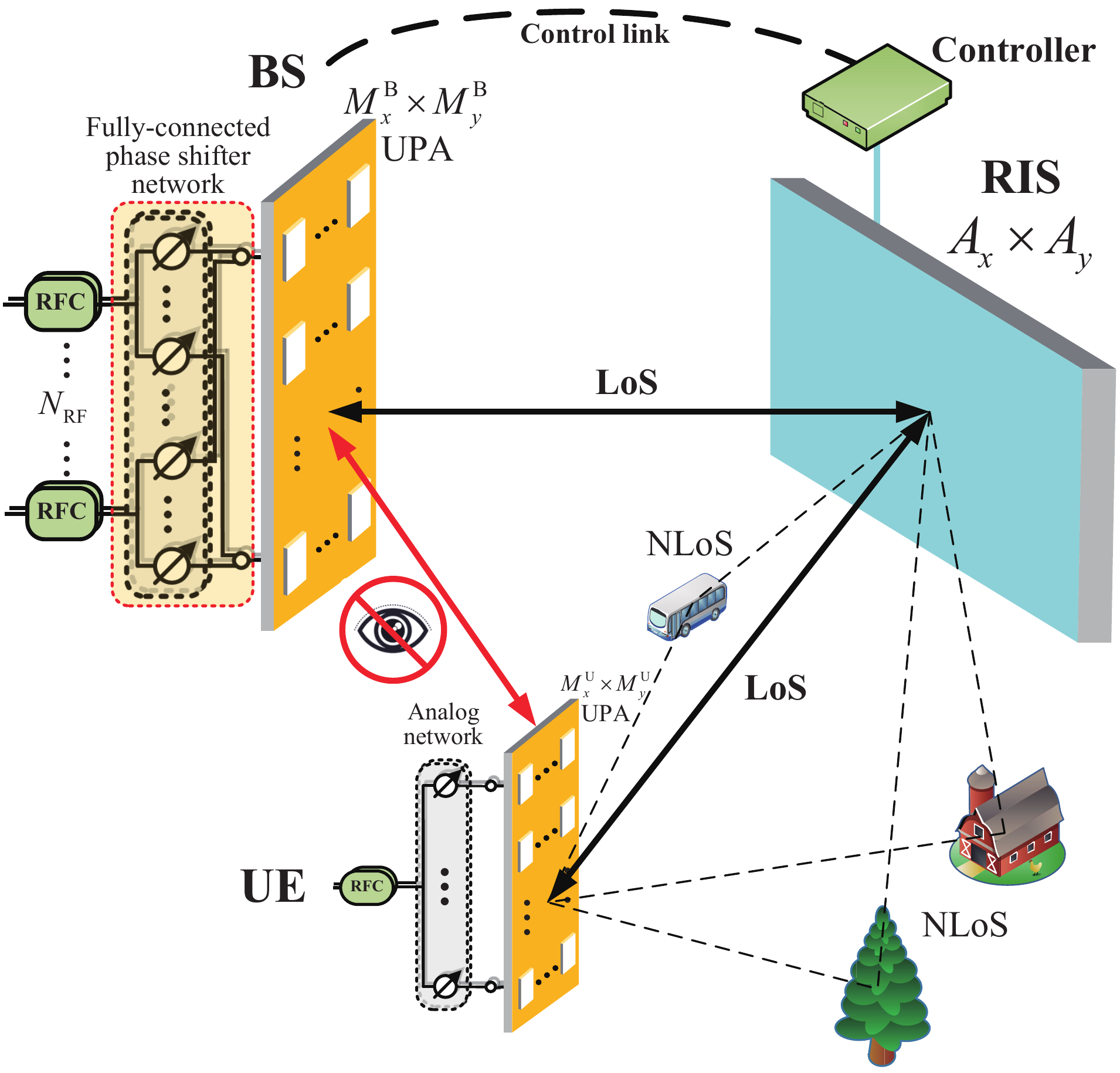}}
\caption{The model of an RIS-aided THz massive MIMO system. (a) A multi-antenna UE is served by the BS with the help of an RIS when the LoS path is blocked by possible obstacles. (b) The hardware architectures at the BS and the UE.}
\label{fig:SysMod}
\vspace{-5mm}
\end{figure}

\vspace{-3mm}
\section{System Description and Channel Model}
In this section, we present the system model and the effective baseband channel model of the considered RIS-aided THz massive MIMO system over frequency-selective fading channels.
We consider an RIS-aided THz massive MIMO system that operates in time division duplex (TDD) mode, as illustrated in Fig. \ref{fig:SysMod}. The base station (BS) and user-equipments (UEs) are equipped with half-wavelength spaced uniform planar arrays (UPAs) that consist of $M^{\rm B} = M^{\rm B}_x \times M^{\rm B}_y$ and $M^{\rm U} = M^{\rm U}_x \times M^{\rm U}_y$ antennas, respectively, {where $M^{\rm B}_x$ ($M^{\rm U}_x$) and $M^{\rm B}_y$ ($M^{\rm U}_y$) are the numbers of antennas along the azimuth and elevation directions, respectively}. An RIS of physical size $A_x \times A_y$ is deployed to enhance the effective coverage of the BS.
{A low-power RIS controller is connected to the RIS and is operated by the BS through a control link. The BS carries out beamforming optimization and feeds the results to the RIS controller through the control link so that the reflection coefficients of each reflecting element of the RIS can be adjusted accordingly.}
As shown in Fig. \ref{fig:SysMod}(a), we assume that the LoS link between the BS and the UE is blocked by an obstacle or by a human body. Thus, the UE communicates with the BS only via the RIS, which is regarded as a {\it virtual LoS} transmission in RIS-aided systems \cite{My1}.
To reduce the power consumption and hardware cost, a hybrid analog-digital architecture is considered at the BS, i.e., there are only $N_{\rm RF} \ll M^{\rm B}$ RFCs at the BS, and each of them is connected to $M^{\rm B}$ antennas through $M^{\rm B}$ phase shifters. In addition, each UE employs analog beamforming, in which only one RFC is connected to $M^{\rm U}$ antennas through $M^{\rm U}$ phase shifters.
An orthogonal frequency division multiplexing (OFDM) transmission scheme with $K$ subcarriers and sampling period $T_{\rm s}$ is adopted. The cyclic prefix (CP) of length $N_{\rm CP}T_{\rm s}$ is added before each OFDM symbol to avoid inter-symbol interference. The center-carrier frequency is $f_{\rm c}$ corresponding to a wavelength $\lambda$.

Through an appropriate deployment of the RIS, we assume that there exists a LoS path between the BS and the RIS. The LoS angles between the BS and the RIS are assumed to be known in advance based on the location of the RIS \cite{My1}. In THz channels, the path loss of the non-LoS (NLoS) paths is known to be much larger than that of the LoS paths. Therefore, we neglect the NLoS paths in the channel between the BS and the RIS.

In the following text, we first introduce the physical channel model that is widely considered in previous works on RIS-aided MIMO systems \cite{XYuan,My1,AA,SLiu,THzIRS}. Then, we introduce an effective baseband channel model by taking into account the beamforming design at the BS, the RIS and the UE.

{\it a) Physical Channel Model with Discrete RISs:} Under the assumptions that the RIS has $N$ reflecting elements, the downlink spatial channel ${\bf G} \in \mathbb{C}^{N \times M^{\rm B}}$ from the BS to the RIS can be modeled as
\begin{equation}
\label{equ:PhyCh1}
{\bf{G}} = \alpha^{\rm DL} {\bf{a}}_{{\rm{R}}}\left( {{\bm \psi^{{\rm{R}}}}} \right){\bf{a}}_{{\rm{B}}}^H\left( {{\bm \psi^{{\rm{B}}}}} \right) \text{,}
\end{equation}
where $\alpha^{\rm DL}$ is the channel coefficient, ${\bm \psi^{\rm B}} = [\psi_{\rm BS}^{\rm azi},\psi_{\rm BS}^{\rm ele}]^T$ and ${\bm \psi^{\rm R}} = [\psi_{\rm RIS}^{\rm azi},\psi_{\rm RIS}^{\rm ele}]^T$ are the LoS angle of departure (AoD) and LoS angle of arrival (AoA) of the BS-RIS channel, respectively.
%The specific definition of ``virtual'' angle will be given in the next section. 
${\bm \psi^{\rm B}}$ (${\bm \psi^{\rm R}}$) includes both the azimuth part $\psi_{\rm BS}^{\rm azi}$ ($\psi_{\rm RIS}^{\rm azi}$) and elevation part $\psi_{\rm BS}^{\rm ele}$ ($\psi_{\rm RIS}^{\rm ele}$), which are assumed to be fixed and known as detailed in previous text. ${\bf a}_{\rm B}\left( {{\bm \psi^{{\rm{B}}}}} \right) \in \mathbb{C}^{M^{\rm B} \times 1}$ and ${\bf a}_{\rm R}\left( {{\bm \psi^{{\rm{R}}}}} \right) \in \mathbb{C}^{N \times 1}$ denote the steering vectors at the BS and the RIS, respectively. ${\bf a}_{\rm B}\left( {{\bm \psi^{{\rm{B}}}}} \right)$ is given by
\begin{align}
\label{equ:StrVec}
{\bf a}_{\rm B}\left( {{\bm \psi^{{\rm{B}}}}} \right) = & \left[ 1,...,{e^{ - jd_{\rm UPA}\left( {{m_x}\psi _{{\rm{BS}}}^{{\rm{azi}}} + {m_y}\psi _{{\rm{BS}}}^{{\rm{ele}}}} \right)}},..., \right. \nonumber \\
& \left. {e^{ - jd_{\rm UPA}\left[ {\left( {M_x^{\rm{B}} - 1} \right)\psi _{{\rm{BS}}}^{{\rm{azi}}} + \left( {M_y^{\rm{B}} - 1} \right)\psi _{{\rm{BS}}}^{{\rm{ele}}}} \right]}} \right]^T \text{,}
\vspace{-1mm}
\end{align}
where $d_{\rm UPA} = \lambda/2$ is the element spacing of the UPA at the BS, and $0 \le m_x \le (M_{x}^{\rm B}-1)$, $0 \le m_y \le (M_{y}^{\rm B}-1)$. ${\bf a}_{\rm R}\left( {{\bm \psi^{{\rm{R}}}}} \right)$ can be written by using a similar notation and assumptions.

{As for the RIS-UE spatial channel ${\bf H} \in \mathbb{C}^{M^{\rm U} \times N}$, we consider a Rician fading channel model that consists of one LoS path and $L$ NLoS paths, as shown in Fig. \ref{fig:SysMod}(b). In particular, we have
\begin{align}
\label{equ:PhyCh2}
{\bf{H}} = & \beta^{\rm DL,LoS} {{\bf{a}}_{{\rm{U}}}}\left( {{\bm \nu^{{\rm{LoS}}}}} \right){\bf{a}}_{{\rm{R}}}^H\left( {{\bm \mu^{{\rm{LoS}}}}} \right) \nonumber \\
& + \frac{1}{{\sqrt {L{K_{\rm{f}}}} }}\sum\limits_{l = 1}^L {{\beta}^{\rm DL}_{l}{{\bf{a}}_{{\rm{U}}}}\left( {{\bm \nu _l}} \right){\bf{a}}_{{\rm{R}}}^H\left( {{\bm \mu _l}} \right)}  \text{,}
\end{align}
where $\beta^{\rm DL,LoS}$ and ${\beta}^{\rm DL}_{l}$ are the channel coefficients of the LoS component and the $l$-th NLoS component, respectively,}
$K_{\rm f}$ is the Rician factor that denotes the ratio of the energy between the LoS and NLoS channels, $\bm \mu^{{\rm{LoS}}}$ and $\bm \nu^{{\rm{LoS}}}$ are the LoS AoD and the LoS AoA, respectively, $\bm \mu_{l}$ and $\bm \nu_{l}$ are the NLoS AoD and the NLoS AoA of the $l$-th NLoS path. The steering vector at the UE ${\bf a}_{\rm U}\left( \cdot \right) \in \mathbb{C}^{M^{\rm U} \times 1}$ can be formulated similar to \eqref{equ:StrVec}.

{\it b) Effective Baseband Channel Model:} The effective baseband channel is defined as the inner product between the analog beamforming vector at the transceiver and the steering vector of the physical channels. Assuming that the $r$-th RFC ($1 \le r \le N_{\rm RF}$) of the BS and the UE use the analog beamforming vectors ${\bf f}_r \in \mathbb{C}^{M^{\rm B}\times 1}$ and ${\bf w} \in \mathbb{C}^{M^{\rm U}\times 1}$, respectively, and that the RIS uses $\bm \varphi \in \mathbb{C}^{N \times 1}$ as the reflection coefficients, the effective baseband channel $h_r^{\rm DL,eff}$ can be written as
{\begin{align}
\label{equ:EffCh}
h_r^{\rm DL,eff} & = {{\bf{w}}^H}{\bf{H}}{\bm \Phi}{\bf{G}}{{\bf{f}}_r} \nonumber \\
& = \alpha^{\rm DL} {{g_r^{{\rm{B}}}}\left( {{\bm \psi^{{\rm{B}}}}} \right)} \bigg[ \beta^{\rm DL,LoS} g\left( {{\bm \mu^{{\rm{LoS}}}},{\bm \psi^{{\rm{R}}}}} \right){g^{{\rm{U}}}}\left( {{\bm \nu^{{\rm{LoS}}}}} \right) \nonumber \\
& + \frac{1}{{\sqrt {L{K_{\rm{f}}}} }}\sum\limits_{l = 1}^L {\beta^{\rm DL}_{l} g\left( {{\bm \mu _l},{\bm \psi^{{\rm{R}}}}} \right){g^{{\rm{U}}}}\left( {{\bm \nu _l}} \right)}  \bigg] \text{,}
\vspace{-2mm}
\end{align}
where ${\bf \Phi} = {\rm diag} (\bm \varphi) \in \mathbb{C}^{N \times N}$, and ${g_r^{{\rm{B}}}} \left( {{\bm \psi^{{\rm{B}}}}} \right) = {\bf{a}}_{{\rm{B}}}^H\left( {{\bm \psi^{{\rm{B}}}}} \right){\bf f}_r$, $g\left( {{\bm \mu},{\bm \psi^{{\rm{R}}}}} \right) = {\bf{a}}^H_{{\rm{R}}}\left( {{\bm \mu}} \right) {\bm \Phi} {\bf{a}}_{{\rm{R}}}\left( {{\bm \psi^{{\rm{R}}}}} \right)$, and ${g^{{\rm{U}}}}\left( {{\bm \nu}} \right) = {\bf w}^H{{\bf{a}}_{{\rm{U}}}}\left( {{\bm \nu}} \right)$ are the {\it beam patterns} of the BS, the RIS, and the UE, respectively.
}
%We observe that the dimension of the effective baseband channel is agnostic to the dimensions of the arrays deployed in the system, and thus this model simplifies the channel representation.

The effective baseband channel model in \eqref{equ:EffCh} can be generalized to the case of frequency-selective channel. In a frequency-selective channel, in particular, the BS-RIS-UE effective baseband channel in the delay domain can be formulated as
\begin{equation}
\label{equ:DLmodel1}
h^{\rm DL}_{r}\left( \tau  \right) = \underbrace {\alpha^{\rm DL}  g_r^{{\rm{B}}}\left( {{\bm \psi^{{\rm{B}}}}} \right)}_{\text{BS to RIS}} \underbrace {\left[ {h^{{\rm{DL,LoS}}}\left( \tau  \right) + h^{{\rm{DL,NLoS}}}\left( \tau  \right)} \right]}_{\text{RIS to UE}} \text{,}
\end{equation}
\vspace{-1mm}
where
\vspace{-1mm}
\begin{equation}
\label{equ:DLmodel2}
{h^{{\rm{DL,LoS}}}\left( \tau  \right)} = \beta^{\rm DL,LoS} g\left( {\bm \mu^{{\rm{LoS}}},{\bm \psi^{{\rm{R}}}}} \right) g^{{\rm{U}}}\left( {\bm \nu^{{\rm{LoS}}}} \right) p\left( {\tau  - {\tau^{\rm DL,LoS}}} \right) \text{,}
\end{equation}
\vspace{-1mm}
and
\vspace{-1mm}
\begin{equation}
\label{equ:DLmodel3}
{h^{{\rm{DL,NLoS}}}\left( \tau  \right)}=\frac{1}{{\sqrt {L{K_{\rm{f}}}} }}\sum\limits_{l = 1}^L {\beta^{\rm DL}_{l} g\left( {{\bm \mu_{l}},{\bm \psi^{{\rm{R}}}}} \right) g^{{\rm{U}}}\left( {{\bm \nu_{l}}} \right)} p\left( {\tau  - {\tau^{\rm DL}_{l}}} \right) \text{.}
\end{equation}
In (\ref{equ:DLmodel2}) and (\ref{equ:DLmodel3}), $\tau^{\rm DL,LoS}$ and $\tau^{\rm DL}_{l}$ denote the delay offsets corresponding to the LoS path and the $l$-th NLoS path, respectively, and $p(\tau)$ is the pulse shaping filter function. 

It is worth noting that, in contrast with RISs made of discrete elements \cite{CHuang,Model,Reviewer,THzIRS,RZhang2,MSlim,FFFCE,FSFCE,XYuan,My1,SLiu,AA}, a holographic RIS is modeled as an array with a spatially continuous aperture \cite{Holo4} (i.e., $N \to \infty$). The physical channels associated with a holographic RIS cannot be represented in terms of the finite-dimensional matrices in \eqref{equ:PhyCh1} and \eqref{equ:PhyCh2}. Therefore, in this paper, we utilize the effective baseband channel model in (\ref{equ:DLmodel1})-(\ref{equ:DLmodel3}) to describe the channels associated with a holographic RIS.
The beam patterns in (\ref{equ:EffCh}) are elaborated in detail in the next section.

\color{black}
Since we consider that the system operates in TDD mode, we assume that the channel reciprocity between the uplink and downlink transmissions holds. In particular, the uplink channels can be modeled based on the downlink channels reported in previous text. We omit the details of the uplink channel model for brevity.

\begin{figure*}[t]
\vspace{-2mm}
%\captionsetup{font={footnotesize}, name = {Fig.}, labelsep = period}
%\captionsetup[subfigure]{singlelinecheck = on, justification = raggedright, font={footnotesize}}
\centering
\begin{minipage}[t]{1\linewidth}
\centering
\includegraphics[width=6.5in]{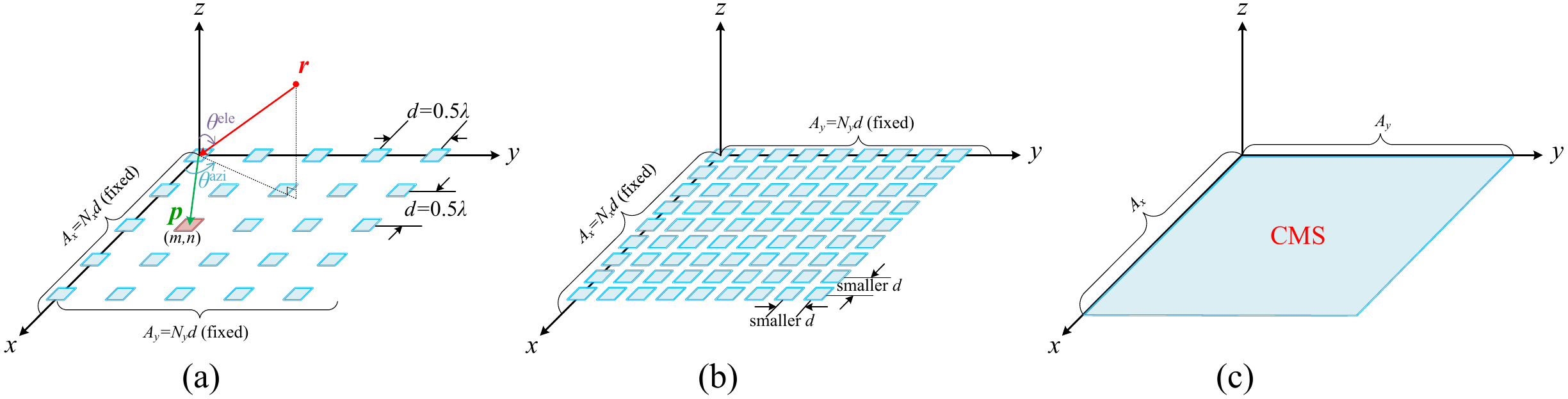}
%\centerline{(a)}
\end{minipage}
\caption{Illustration of different types of RIS. (a) Critically-spaced RIS with $d = 0.5\lambda$; (b) ultra-dense RIS with $d < 0.5\lambda$, which has a spatial quasi-continuous aperture; (c) CMS with $d \to 0$ (ignoring the physical size of a single reflecting element), which has an ideal reconfigurable continuous aperture.}
\label{fig:model}
\end{figure*}

{{\it Remark}: Although it has been reported that the overall MIMO channel model in the THz band is consistent with that in the mmWave band (see, e.g., \cite{THzModel1,THzModel2,THzIRS}), some distinct differences between the THz and the mmWave bands, as detailed in the following text, may cause different setting of channel parameters and thus deserve more attention. First, the atmospheric attenuation caused by molecular absorption \cite{THz1,THz2} becomes non-negligible in the THz band, thus introducing an extra path loss for link budget. Second, THz signals with an extremely short wavelength are likely to undergo {\it diffuse scattering} on the scattering surface \cite{DifSca}. This results in a limited number of effective NLoS paths (e.g., $L=1$) and a large Rician factor (e.g. $K_{\rm f} = 30$ dB). Third, the THz are more vulnerable to the presence of blockages, which may result in weak LoS directive paths. These factors motivate the deployment of RISs to provide strong virtual LoS links for THz communications in the absence of reliable LoS/NLoS links. In addition, due to the large number of reflecting elements, which yield a better controllability of the radio signals and a high beamforming gain, holographic-based RISs provide further advantages compared with their non-holographic counterpart. This is further discussed next.
}
\vspace{-3mm}
\section{Beamforming Design for Holographic RISs}
To complete the formulation of the effective baseband channel model in (\ref{equ:DLmodel1})-(\ref{equ:DLmodel3}), we first derive and analyze the beam pattern of RISs based on discrete planar arrays (DPAs).
Then, we extend the results to RISs with spatially continuous apertures, which are referred to as continuous metasurfaces (CMSs).
In addition, we derive closed-form beamforming solutions in two important cases, {i.e., narrow beam steering and spatial bandpass filtering}, which both play an essential role in the proposed CE scheme for RIS-aided THz massive MIMO systems.
\vspace{-2mm}
\subsection{RISs Based on Discrete Planar Arrays}
As shown in Fig. \ref{fig:model}(a) and (b), a DPA-based RIS placed on the $x$-$y$ plan consists of numerous evenly-spaced reflecting elements. The number of elements along the $x$- and $y$-directions is $N_x \in \mathbb{Z}$ and $N_y \in \mathbb{Z}$, respectively. The distance between two adjacent elements is $d$, and we assume $d \le \lambda/2$ due to the Nyquist sampling theorem. We define the total physical size of a DPA-based RIS as $A_x \times A_y$ with $A_x =N_xd$ and $A_y = N_yd$, and assume that $A_x$ and $A_y$ remain unchanged unless stated otherwise. Let $(x_m,y_n)$ be the coordinate of the $(m,n)$-th reflecting element, then we have
\begin{equation}
{x_m} = (m-1)d \text{, } {y_n} = (n-1)d \text{,}
\end{equation}
where $1 \le m \le N_x$, $1 \le n \le N_y$. Assume that a narrowband reference signal ``1'' impinges on the RIS with an azimuth AoA ${\theta_{\rm{in}}^{{\rm{azi}}}} \in [0,2\pi)$ and an elevation AoA ${\theta_{\rm{in}}^{{\rm{ele}}}} \in [0,\pi/2]$ (defined in Fig. 2(a)). The phase-difference of the incident signal at the $(m,n)$-th element {(compared to the reference point $(0,0)$)} can be written as
\begin{align}
\label{equ:a}
a\left( {x_m,y_n;{\bm \psi _{{\rm{in}}}}} \right) & = {e^{j\frac{{2\pi }}{\lambda} {\bf p}^T {\bf r} }} \nonumber \\
& = {e^{j\frac{{2\pi }}{\lambda}(x_m\cos{\theta_{\rm{in}}^{{\rm{azi}}}}\sin{\theta_{\rm{in}}^{{\rm{ele}}}} + y_n\sin{\theta_{\rm{in}}^{{\rm{azi}}}}\sin {\theta_{\rm{in}}^{{\rm{ele}}}} )}} \nonumber \\
& = {e^{j({x_m}\psi _{{\rm{in}}}^{{\rm{azi}}} + {y_n}\psi _{{\rm{in}}}^{{\rm{ele}}})}} \text{,}
\end{align}
where ${\bf{p}} = [x_n,y_m,0]^T$ is the position vector of the $(m,n)$-th element, ${\bf{r}} = [\cos{\theta_{\rm{in}}^{{\rm{azi}}}}\sin {\theta_{\rm{in}}^{{\rm{ele}}}},$ $ \sin{\theta_{\rm{in}} ^{{\rm{azi}}}}\sin {\theta_{\rm{in}}^{{\rm{ele}}}},\cos {\theta_{\rm{in}}^{{\rm{ele}}}}]^T$ is the vector of the incident direction, and ${\bm \psi _{{\rm{in}}}} \buildrel \Delta \over = \left[ {\psi _{{\rm{in}}}^{{\rm{azi}}},\psi _{{\rm{in}}}^{{\rm{ele}}}} \right]^T$ with $\psi _{{\rm{in}}}^{{\rm{azi}}} \buildrel \Delta \over = \frac{{2\pi }}{\lambda }\cos{\theta_{\rm{in}}^{{\rm{azi}}}}\sin{\theta_{\rm{in}}^{{\rm{ele}}}}$, $\psi _{{\rm{in}}}^{{\rm{ele}}} \buildrel \Delta \over = \frac{{2\pi }}{\lambda }\sin{\theta_{\rm{in}}^{{\rm{azi}}}}\sin {\theta_{\rm{in}}^{{\rm{ele}}}}$ is a 2-tuple variable accounting for the AoA of the incident signal{\footnote{We assume that $\bm \psi_{\rm in}=\left[ {\psi _{{\rm{in}}}^{{\rm{azi}}},\psi _{{\rm{in}}}^{{\rm{ele}}}} \right]^T$ is known and is treated as a constant in the beamforming design. This follows from the assumption that the LoS angles of the BS-RIS channel are fixed and known.}}. Similarly, for the AoD denoted by ${\theta _{\rm{out}}^{{\rm{azi}}}}$ and ${\theta _{\rm{out}}^{{\rm{ele}}}}$, we define ${\bm \psi _{{\rm{out}}}} \buildrel \Delta \over = \left[ {\psi _{{\rm{out}}}^{{\rm{azi}}},\psi _{{\rm{out}}}^{{\rm{ele}}}} \right]^T$ with $\psi _{{\rm{out}}}^{{\rm{azi}}} \buildrel \Delta \over = \frac{{2\pi }}{\lambda }\cos{\theta_{\rm{out}}^{{\rm{azi}}}}\sin {\theta_{\rm{out}}^{{\rm{ele}}}}$, $\psi _{{\rm{out}}}^{{\rm{ele}}} \buildrel \Delta \over = \frac{{2\pi }}{\lambda }\sin{\theta_{\rm{out}}^{{\rm{azi}}}}\sin {\theta_{\rm{out}}^{{\rm{ele}}}}$. The definitions of ${\bm \psi _{{\rm{in}}}}$ and ${\bm \psi _{{\rm{out}}}}$ simplify the notation. The received signal along the direction of observation (i.e., reflection) ${\bm \psi}_{\rm{out}}$ after the incident signal is reflected by the RIS is denoted by $g({\bm \psi}_{\rm{out}},{\bm \psi}_{\rm{in}})$, which is given by the superposition of the signals reflected by all the individual elements of the RIS
\begin{align}
\label{equ:g1}
g\left( {{{\bm \psi} _{{\rm{out}}}},{{\bm \psi} _{{\rm{in}}}}} \right) & = \sum\limits_{n = 1}^{{N_x}} {\sum\limits_{m = 1}^{{N_y}} {{a^*}\left( {{x_m},{y_n};{{\bm \psi} _{{\rm{out}}}}} \right)} } \Phi \left( {m,n} \right)a\left( {{x_m},{y_n};{{\bm \psi} _{{\rm{in}}}}} \right) \nonumber \\
&= \sum\limits_{n = 1}^{{N_x}} {\sum\limits_{m = 1}^{{N_y}} {\Phi \left( {m,n} \right){{e^{ - j\left( {{x_m}{\Delta _x} + {y_n}{\Delta _y}} \right)}}}} } \text{,}
\end{align}
where $\Delta_x = {\psi}^{\rm{azi}} _{{\rm{out}}} - {\psi}^{\rm{azi}} _{{\rm{in}}}$, $\Delta_y = {\psi}^{\rm{ele}} _{{\rm{out}}} - {\psi}^{\rm{ele}} _{{\rm{in}}}$, and $\Phi \left( {m,n} \right) \in \mathbb{C}$ is the reflection coefficient of the $(m,n)$-th element of the RIS, whose amplitude and phase are software-programmable via the RIS controller.

We refer to $g\left( {{{\bm \psi} _{{\rm{out}}}},{{\bm \psi} _{{\rm{in}}}}} \right)$ as the {beam pattern} of the DPA-based RIS, which is consistent with the definition in \eqref{equ:EffCh}. The amplitude of $g\left( {{{\bm \psi} _{{\rm{out}}}},{{\bm \psi} _{{\rm{in}}}}} \right)$ can be used to evaluate the intensity of the signal along the direction ${\bm \psi}_{\rm{out}}$ after reflection from the RIS.
In general terms, $\Phi \left( {m,n} \right)$ can be treated as a two-dimensional discrete signal in the spatial domain with spatial sampling period equal to $d$ in both the azimuth and elevation directions. By using the discrete-time Fourier transform (DTFT), $\Phi \left( {m,n} \right)$ can be represented as

\begin{equation}
\label{equ:2dDTFT}
\Phi \left( {m,n} \right) = {\left( {\frac{d}{{2\pi }}} \right)^2}\int_{\frac{{2\pi }}{d}} {\int_{\frac{{2\pi }}{d}} {\omega \left( {k,l} \right){e^{j\left( {dmk + dnl} \right)}}} } {\rm{d}}k{\rm{d}}l \text{,}
\end{equation}
where 
\begin{equation}
\label{equ:2dIDTFT}
\omega \left( {k,l} \right) = \sum\limits_{m = 1}^{{N_x}} {\sum\limits_{n = 1}^{{N_y}} {\Phi \left( {m,n} \right)} } {e^{ - j\left( {dmk + dnl} \right)}}
\end{equation}
is a 2-dimensional periodic function whose period is $\frac{2\pi}{d}$ with respect to both $k$ and $l$, and $\int_{\frac{{2\pi }}{d}}(\cdot){\rm d}(\cdot)$ denotes the integral in an arbitrary interval of length $\frac{2\pi}{d}$. Substituting (\ref{equ:2dDTFT}) into (\ref{equ:g1}), we obtain \eqref{equ:g2}, as shown at the top of the next page,
\begin{figure*}
\begin{align}
\label{equ:g2}
g\left( {{{\bm \psi} _{{\rm{out}}}},{{\bm \psi} _{{\rm{in}}}}} \right) & = {\left( {\frac{d}{{2\pi }}} \right)^2} {e^{ - j\left( {{x_m}{\Delta _x} + {y_n}{\Delta _y}} \right)}} \sum\limits_{m = 1}^{{N_x}} \sum\limits_{n = 1}^{{N_y}} \int_{\frac{{2\pi }}{d}} {\int_{\frac{{2\pi }}{d}} {\omega \left( {k,l} \right){e^{j\left( {dmk + dnl} \right)}}} } {\rm{d}}k{\rm{d}}l \nonumber \\
& = {\left( {\frac{d}{{2\pi }}} \right)^2}\int_{\frac{{2\pi }}{d}} \int_{\frac{{2\pi }}{d}} \omega \left( {k,l} \right) \sum\limits_{m = 1}^{{N_x}} {{e^{j\left( {dmk - {x_m}{\Delta_x}} \right)}}}  \sum\limits_{n = 1}^{{N_y}} {{e^{j\left( {dnl - {y_n}{\Delta _y}} \right)}}}  {\rm{d}}k{\rm{d}}l \nonumber \\
& = \frac{{{A_x}{A_y}}}{{4{\pi ^2}}}{e^{j\frac{{d - {A_x}}}{2}{\Delta _x}}}{e^{j\frac{{d - {A_y}}}{2}{\Delta _y}}} \int_{\frac{{2\pi }}{d}} {\int_{\frac{{2\pi }}{d}} {\omega '\left( {k,l} \right)}} {\Xi _{{N_x}}}\left[ {d\left( {k - \Delta_x} \right)} \right] {\Xi _{{N_y}}}\left[ {d\left( {l - \Delta_y} \right)} \right]  {\rm{d}}k{\rm{d}}l \text{,}
\end{align}
\hrule
\end{figure*}
where 
\begin{equation}
\label{equ:EffOmega}
\omega '\left( {k,l} \right) = \omega \left( {k,l} \right){e^{j\frac{{{N_x} + 1}}{2}dk}}{e^{j\frac{{{N_y} + 1}}{2}dl}} \text{.}
\end{equation}
Equation (\ref{equ:g2}) reveals that the beam pattern $g\left( {{{\bm \psi} _{{\rm{out}}}},{{\bm \psi} _{{\rm{in}}}}} \right)$ can be formulated as a weighted integral of Dirichlet kernel functions $\Xi(\cdot)$ \cite{korean} whose weighting factors are $\omega '\left( {k,l} \right)$.
This is a generalization of the far-field results in \cite{SJin}.
Equations (\ref{equ:2dDTFT})-(\ref{equ:g2}) shed light on the design of angular-domain beamforming for DPA-based RIS. In particular, the proposed design is based on two steps: {(i)} first,  $\omega' \left( {k,l} \right)$ in \eqref{equ:g2} is optimized in order to design $g\left( {{{\bm \psi} _{{\rm{out}}}},{{\bm \psi} _{{\rm{in}}}}} \right)$ that corresponds to the desired direction of observation; and {(ii)} then, the corresponding reflection coefficients $\Phi \left( {m,n} \right)$ are reconstructed via (\ref{equ:2dDTFT}). In the next sub-section, we further explain this procedure by providing the optimal beamforming for two specific cases.

\vspace{-2mm}
\subsection{Beamforming Design for DPA-Based RIS}
In this sub-section, we describe the following two use cases for beamforming design based on the proposed approach.

{\it a) Narrow beam steering.} Given the desired beamforming direction ${{\bm \psi}_{{\rm{opt}}}} = \left[ {\psi _{{\rm{opt}}}^{{\rm{azi}}},\psi _{{\rm{opt}}}^{{\rm{ele}}}} \right]^T$, the target of NBS is to maximize $|g\left( {{{\bm \psi} _{{\text{out}}}},{{\bm \psi} _{{\text{in}}}}} \right)|$ for ${{\bm \psi}_{{\rm{out}}}}={{\bm \psi}_{{\rm{opt}}}}$, and to minimize (null) $|g\left( {{{\bm \psi} _{{\text{out}}}},{{\bm \psi} _{{\text{in}}}}} \right)|$ for ${{\bm \psi}_{{\rm{out}}}}\ne{{\bm \psi}_{{\rm{opt}}}}$. Since $\left|{\Xi _{{N_x}}}\left[ {d\left( {k -{\Delta _x}} \right)} \right]\right|$ and $\left|{\Xi _{{N_y}}}\left[ {d\left( {l - {\Delta _y}} \right)} \right]\right|$ attain their maximum if and only if $k=\frac{2\pi u}{d} + \Delta_x$, $u \in \mathbb{Z}$ and $l=\frac{2\pi v}{d} + \Delta_y$, $v \in \mathbb{Z}$, respectively, we can impose the following design for the angular-domain coefficients of NBS
\vspace{-1mm}
\begin{equation}
\label{equ:OmegaStr}
{{\omega}_{{\rm{NBS}}}}\left( {k,l}; {\bm \psi}_{\rm opt} \right) \propto \sum\limits_{u \in \mathbb{Z}} {\delta (k - {k_{{\rm{opt}},u}})} \sum\limits_{v \in \mathbb{Z}} {\delta (l - {l_{{\rm{opt}},v}})} \text{,}
\vspace{-1mm}
\end{equation}
where $k_{{\rm opt},u} = \frac{2\pi u}{d}+\left(\psi_{\rm{opt}}^{\rm{azi}} - \psi_{\rm{in}}^{\rm{azi}}\right)$ and
$l_{{\rm{opt}},v} = \frac{2\pi v}{d}+\left(\psi_{\rm{opt}}^{\rm{ele}} - \psi_{\rm{in}}^{\rm{ele}}\right)$.
It can be readily verified that (\ref{equ:OmegaStr}) ensures the required periodicity of $\frac{2\pi}{d}$ with respect to both $k$ and $l$. 
The corresponding beam pattern for NBS can be obtained by substituting (\ref{equ:OmegaStr}) into \eqref{equ:g2} and \eqref{equ:EffOmega}, which yields
\begin{align}
\label{equ:gNBS}
g_{\rm{NBS}}& \left( {{{\bm \psi} _{{\rm{out}}}},{{\bm \psi} _{{\rm{in}}}}};{\bm \psi}_{\rm opt} \right) \propto {e^{j\frac{{d - {A_x}}}{2}{\Delta _x}}}{e^{j\frac{{d - {A_y}}}{2}{\Delta _y}}} \nonumber \\
& \times {\Xi _{{N_x}}}\left[ {d} \left( {k_{\rm{opt}} -\Delta_x} \right) \right] {\Xi _{{N_y}}}\left[ {d} \left( {l_{\rm{opt}} - \Delta_y} \right) \right] \text{,}
\end{align}
where $k_{\rm opt} \buildrel \Delta \over =  k_{\rm{opt},0}$ and $l_{\rm opt} \buildrel \Delta \over =  l_{\rm{opt},0}$, since in (\ref{equ:g2}) we choose the integral intervals containing $k_{{\rm opt},0}$ and $l_{{\rm opt},0}$ for $k$ and $l$, respectively.
From (\ref{equ:2dDTFT}), the reflection coefficients for NBS can be formulated as follows
\vspace{-3mm}
\begin{equation}
\label{equ:PhiStr}
\Phi_{\rm NBS} \left( {m,n};{\bm \psi}_{\rm opt} \right) \propto {e^{j\left( {dm{k_{{\rm{opt}}}} + dn{l_{{\rm{opt}}}}} \right)}}
\text{.}
\vspace{-2mm}
\end{equation}
{This NBS design results in a narrowest beam pattern towards a certain transmission direction ${\bm \psi}_{\rm opt}$, and thus provides the optimal beamforming gain in the point-to-point communications.}

{\it b) Spatial bandpass filtering.} As far as SBF is concerned, we aim to design $|g\left( {{{\bm \psi} _{{\text{out}}}},{{\bm \psi} _{{\text{in}}}}} \right)|$ to be quasi-constant for $\psi_{\min }^{{\rm{azi}}} \le \psi_{\rm{out}}^{{\rm{azi}}} \le \theta _{\max }^{{\rm{azi}}}$ and $\psi_{\min }^{{\rm{ele}}} \le \psi_{\rm{out}}^{{\rm{ele}}} \le \theta _{\max }^{{\rm{ele}}}$, and to be almost zero, i.e., $|g\left( {{{\bm \psi} _{{\text{out}}}},{{\bm \psi} _{{\text{in}}}}} \right)| \approx 0$ otherwise, where ${{\bm \psi}_{{\rm{min}}}} = \left[ {\psi_{\min }^{{\rm{azi}}},\psi_{\min }^{{\rm{ele}}}} \right]^T$ and ${{\bm \psi}_{{\rm{max}}}}=\left[ {\psi _{\max }^{{\rm{azi}}},\psi _{\max }^{{\rm{ele}}}} \right]^T$ are referred to as the cut-off angles. To clearly explain the design of SBF, we first decompose the values of $k$ in one period as follows
\begin{equation}
\label{equ:k_pass}
k_{\min} \le k \le k_{\max} \text{,}
\end{equation}
\begin{equation}
\label{equ:k_stop}
a < k < k_{\min} {\text{ or }}k_{\max} < k \le a+\frac{{2\pi}}{d}\text{,}
\end{equation}
where $k_{\min} = \psi _{{\rm{min}}}^{{\rm{azi}}} - \psi _{{\rm{in}}}^{{\rm{azi}}}$, $k_{\max} = \psi _{{\rm{max}}}^{{\rm{azi}}} - \psi _{{\rm{in}}}^{{\rm{azi}}}$, and $a \in \mathbb{R}$ is an arbitrary number that satisfies the constraints $a < k_{\min}$ and $a+\frac{2\pi}{d} > k_{\max}$. Since we assume $d\le\lambda/2$, the existence of $a$ is guaranteed. Similarly, the values of $l$ in one period can be decomposed as follows
\begin{equation}
\label{equ:l_pass}
l_{\min} \le l \le l_{\max} \text{,}
\end{equation}
\begin{equation}
\label{equ:l_stop}
b < l < l_{\min} {\text{ or }} l_{\max} < l \le b + \frac{{2\pi}}{d}\text{,}
\end{equation}
where $l_{\min} = \psi _{{\rm{min}}}^{{\rm{ele}}} - \psi _{{\rm{in}}}^{{\rm{ele}}}$, $l_{\max} = \psi _{{\rm{max}}}^{{\rm{ele}}} - \psi _{{\rm{in}}}^{{\rm{ele}}}$, and $b \in \mathbb{R}$ is an arbitrary number that satisfies the constraints $b < l_{\min}$ and $b+\frac{2\pi}{d} > l_{\max}$. 

The angular-domain coefficients for SBF (in one period of $k$ and $l$) can be designed as follows
\begin{align}
\label{equ:OmegaSBF}
&{\omega _{{\rm{SBF}}}}\left( {k,l;{\bm \psi _{\min }},{\bm \psi _{\max }}}  \right) \nonumber \\
& \propto \left\{ {\begin{array}{*{20}{c}}
{\begin{array}{*{20}{c}}
{{e^{ - j\frac{{{N_x} + 1}}{2}dk}}{e^{ - j\frac{{{N_y} + 1}}{2}dl}},}\\
{0,}
\end{array}}&{\begin{array}{*{20}{l}}
{\text{for (\ref{equ:k_pass}) and (\ref{equ:l_pass})}}\\
{\text{for (\ref{equ:k_stop}) and (\ref{equ:l_stop})}}
\end{array}}
\end{array}} \right. \text{.}
\end{align}
Substituting (\ref{equ:OmegaSBF}) into (\ref{equ:g2}) and \eqref{equ:EffOmega}, and choosing the intervals of integration equal to $\left( {a,a+\frac{{2\pi}}{d}} \right]$ for $k$ and equal to $\left( {b,b+\frac{{2\pi}}{d}} \right]$ for $l$, we obtain the following beam pattern
\begin{align}
\label{equ:gSBF}
&{g_{{\rm{SBF}}}}\left( {{\bm \psi _{{\rm{out}}}},{\bm \psi _{{\rm{in}}}};{\bm \psi _{\min }},{\bm \psi _{\max }}} \right) \propto {e^{j\frac{{d - {A_x}}}{2}{\Delta _x}}}{e^{j\frac{{d - {A_y}}}{2}{\Delta _y}}} \nonumber \\
& \times \int_{l_{\min}}^{l_{\max}} {\int_{k_{\min}}^{k_{\max}} {{\Xi _{{N_x}}}\left[ {d\left( {k - {\Delta _x}} \right)} \right]} } {\Xi _{{N_y}}}\left[ {d\left( {l - {\Delta _y}} \right)} \right]{\rm{d}}k{\rm{d}}l \text{.}
\end{align}
The corresponding reflection coefficients can be calculated from (\ref{equ:2dDTFT}), which yields
\begin{align}
\label{equ:PhiSBF}
&{\Phi _{{\rm{SBF}}}}\left( {m,n} ; \bm \psi_{\min},\bm \psi_{\max}\right) \nonumber \\
& \propto \int_{l_{\min}}^{l_{\max}} {\int_{k_{\min}}^{k_{\max}} {e^{ - j\frac{{{N_x} + 1}}{2}dk}}{e^{ - j\frac{{{N_y} + 1}}{2}dl}}{{e^{j\left( {dmk + dnl} \right)}}{\rm{d}}k{\rm{d}}l} } \nonumber \\
& = {e^{ - jd\left( {{\bar m}\psi _{{\rm{in}}}^{{\rm{azi}}} + {\bar n}}\psi _{{\rm{in}}}^{{\rm{ele}}} \right)}}\frac{{{e^{jd{\bar m}\psi _{{\rm{max}}}^{{\rm{azi}}}}} - {e^{jd{\bar m}\psi _{\min }^{{\rm{azi}}}}}}}{{d{\bar m}}}\frac{{{e^{jd{\bar n}\psi _{{\rm{max}}}^{{\rm{ele}}}}} - {e^{jd{\bar n}\psi _{\min }^{{\rm{ele}}}}}}}{{d{\bar n}}}
 \text{,}
\end{align}
where ${\bar m}=m-\frac{N_x+1}{2}$ and ${\bar n}=n-\frac{N_y+1}{2}$.

To validate the proposed designs, we show two realizations of $|g_{\rm{NBS}}\left( {{{\bm \psi} _{{\text{out}}}},{{\bm \psi} _{{\text{in}}}}};{\bm \psi}_{\rm opt} \right)|$ and $|g_{\rm{SBF}}\left( {{{\bm \psi} _{{\text{out}}}},{{\bm \psi} _{{\text{in}}}}};{\bm \psi _{\min }},{\bm \psi _{\max }} \right)|$ with normalized amplitudes in Fig. \ref{fig:NBS3d} and Fig. \ref{fig:SBF3d}, respectively. It can be observed that the obtained beam patterns based on the proposed beamforming framework well fulfill the desired design. It is noteworthy that the reflection coefficients in (\ref{equ:PhiStr}) and (\ref{equ:PhiSBF}) are given in closed-form and are physically realizable. By direct inspection of Fig. \ref{fig:NBS3d} and Fig. \ref{fig:SBF3d}, we observe that (i) the NBS design criterion allows one to obtain a small focused region, which can be useful for beamforming applications; and (ii) the SBF design criterion allows one to obtain a wide focused region, which can be useful for broadcasting applications.
\begin{figure}[t]
%\captionsetup{font={footnotesize}, name = {Fig.}, labelsep = period}
\centering
\includegraphics[width=3in]{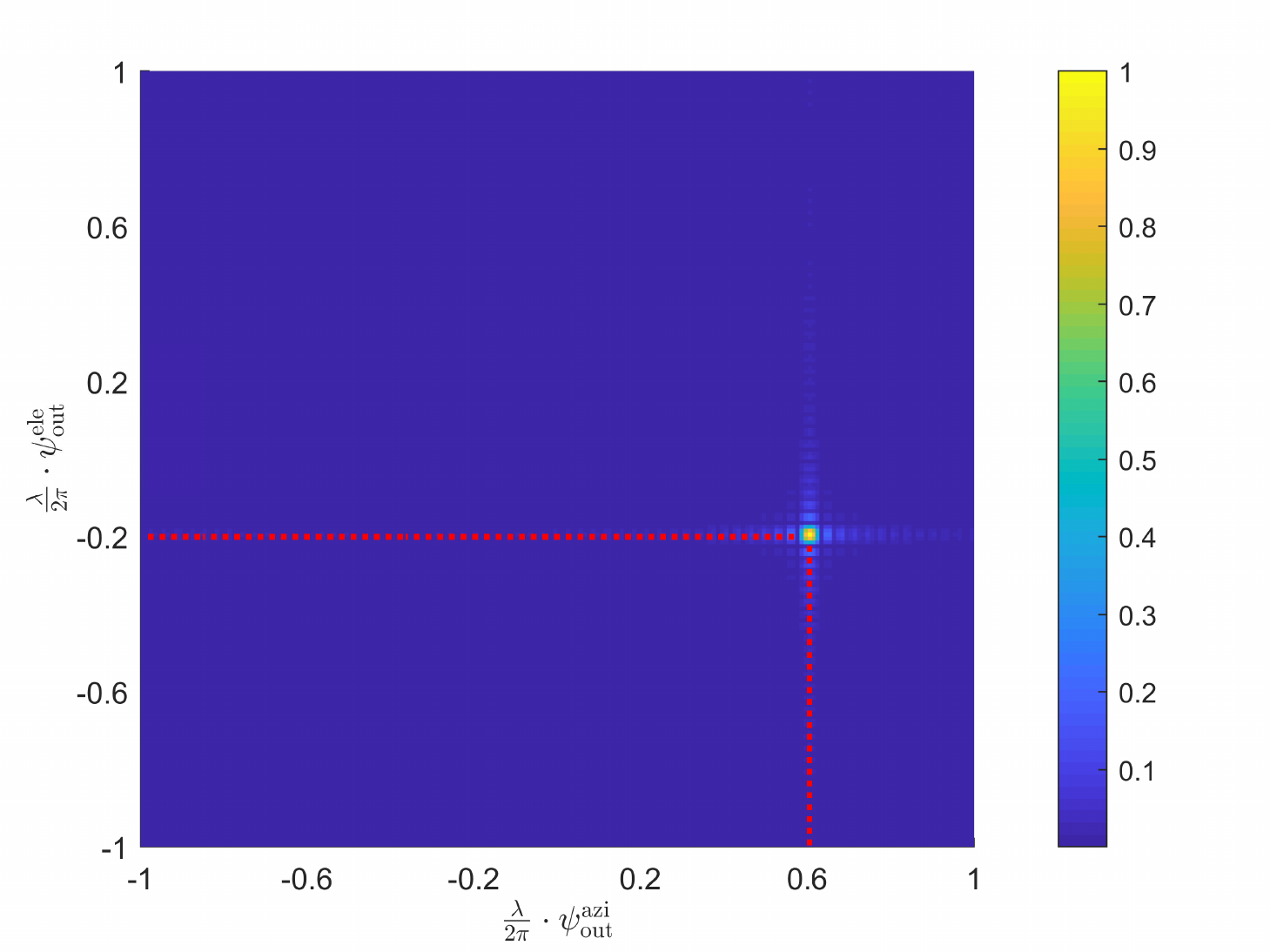}
\vspace{-3mm}
\caption{An example of beam pattern of NBS based on (\ref{equ:gNBS}). $N_x = N_y = 64$, $d = \lambda/2$. The desired angle is $\frac{\lambda}{2\pi}{\bm \psi}_{\rm{opt}}=\left[ 0.6,-0.2\right]^T$.}
\label{fig:NBS3d}
\end{figure}
\begin{figure}[t]
\centering
\includegraphics[width=3in]{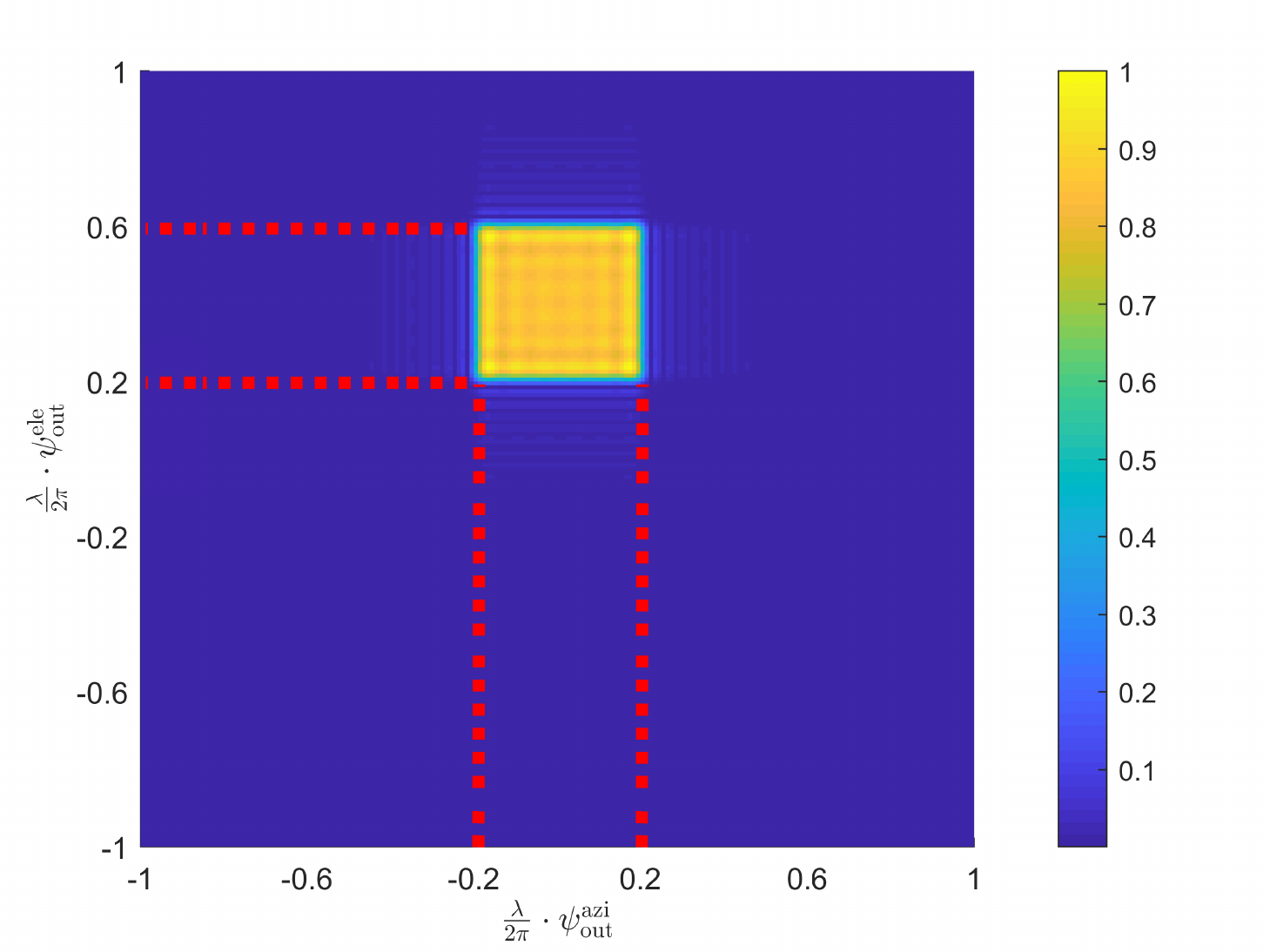}
\vspace{-3mm}
\caption{An example of beam pattern of SBF based on (\ref{equ:gSBF}). $N_x = N_y = 64$, $d = \lambda/2$. The cut-off angles are $\frac{\lambda}{2\pi}{\bm \psi}_{\min}=\left[ -0.2,0.2\right]^T$ and $\frac{\lambda}{2\pi}{\bm \psi}_{\max}=\left[ 0.2,0.6\right]^T$.}
\label{fig:SBF3d}
%\vspace{-8mm}
\end{figure}

\vspace{-3mm}
\subsection{Extension to Holographic RISs}
As far as the design of DPA-based RISs is concerned, we have considered an element spacing equal to $d=\lambda/2$, which is a {\it critical spacing} based on the Nyquist sampling theorem. This assumption has been, implicitly or explicitly, adopted in previous research works \cite{Model,Reviewer,THzIRS,RZhang2,MSlim,CHuang,FFFCE,FSFCE,XYuan,My1,SLiu,AA}. However, the design and optimization of RISs based on elements spaced at the critical distance has some inherent drawbacks. In particular, (i) due to the periodicity and the non-negligible sidelobes of the Dirichlet kernel functions, a {\it power leakage} phenomenon \cite{My2} is usually observed, which may result in inter-beam interference; (ii) the energy received and reflected by the RIS highly depends on its {\it effective reflection area} \cite{SJin}. More specifically, the use of critically-spaced RISs usually degrades the effective reflection area, which may result in a reduced energy efficiency.

Since the physical size of an RIS is limited by several practical factors, an effective solution to overcome the just mentioned drawbacks is to increase the number of reflecting elements and to reduce their spacing ($d<\lambda/2$) while keeping $A_x$ and $A_y$ unchanged, as illustrated in Fig. \ref{fig:model}(b). 
A DPA-based RIS whose elements have an element spacing $d<\lambda/2$ is referred to as {\it ultra-dense} RIS. 
It is worth mentioning that the analysis reported in the previous sub-section can be applied to critically-spaced and ultra-dense RISs by appropriately choosing the value of $d$. 
In Figs. \ref{fig:NBS} and \ref{fig:SBF}, to elucidate these aspects, we report the beam patterns of DPA-based RISs for different values of $d$ (only the beam patterns along the azimuth direction, i.e., with respect to $\bm \psi_{\rm{out}}^{\rm{azi}}$, are reported for ease of illustration).
It can be observed that the beam patterns of an ultra-dense RIS with $d=\{\lambda/4, \lambda/8\}$ are quite similar to those of a critically-spaced RIS ($d=\lambda/2$), but the power leakage is suppressed.
By increasing the number of reflecting elements, more importantly, the effective reflection area of an ultra-dense RIS is expected to increase as compared with critically-spaced RIS, and thus a larger fraction of the energy of the incident EM signal can be steered towards the desired direction.
This point is elaborated in detail in Section V, where the numerical results are presented.

The performance improvement from critically-spaced RISs to ultra-dense RISs naturally motivates us to ask: what is the performance of DPA-based RISs in the asymptotic regime $d \to 0$ and $N_x,N_y \to \infty$ (while keeping $N_xd = A_x$ and $N_yd = A_y$ fixed)? An RIS that is obtained by letting $d \to 0$ is referred to as CMS. In the following text, we show that the beam patterns and the design of the reflection coefficients of a CMS can be obtained by extending the analysis of DPA-based RISs.
Consider the CMS illustrated in Fig. \ref{fig:model}(c), where each point $(x,y)$, $x\in \left[ 0, A_x\right]$, $y\in \left[ 0, A_y\right]$ is capable of manipulating the phase and amplitude of the incident signal. The reflection coefficients of a CMS, which are denoted by $\tilde \Phi \left( {x,y} \right)$, are continuously distributed within $\left[ 0, A_x\right] \times \left[ 0, A_y\right]$. Based on the properties of the Fourier transform, the DTFT in (\ref{equ:2dDTFT}) and (\ref{equ:2dIDTFT}) tends to the continuous-time Fourier transformation (CTFT). In particular, we have
\begin{figure}[t]
%\captionsetup{font={footnotesize}, name = {Fig.}, labelsep = period}
\centering
\includegraphics[width=3in]{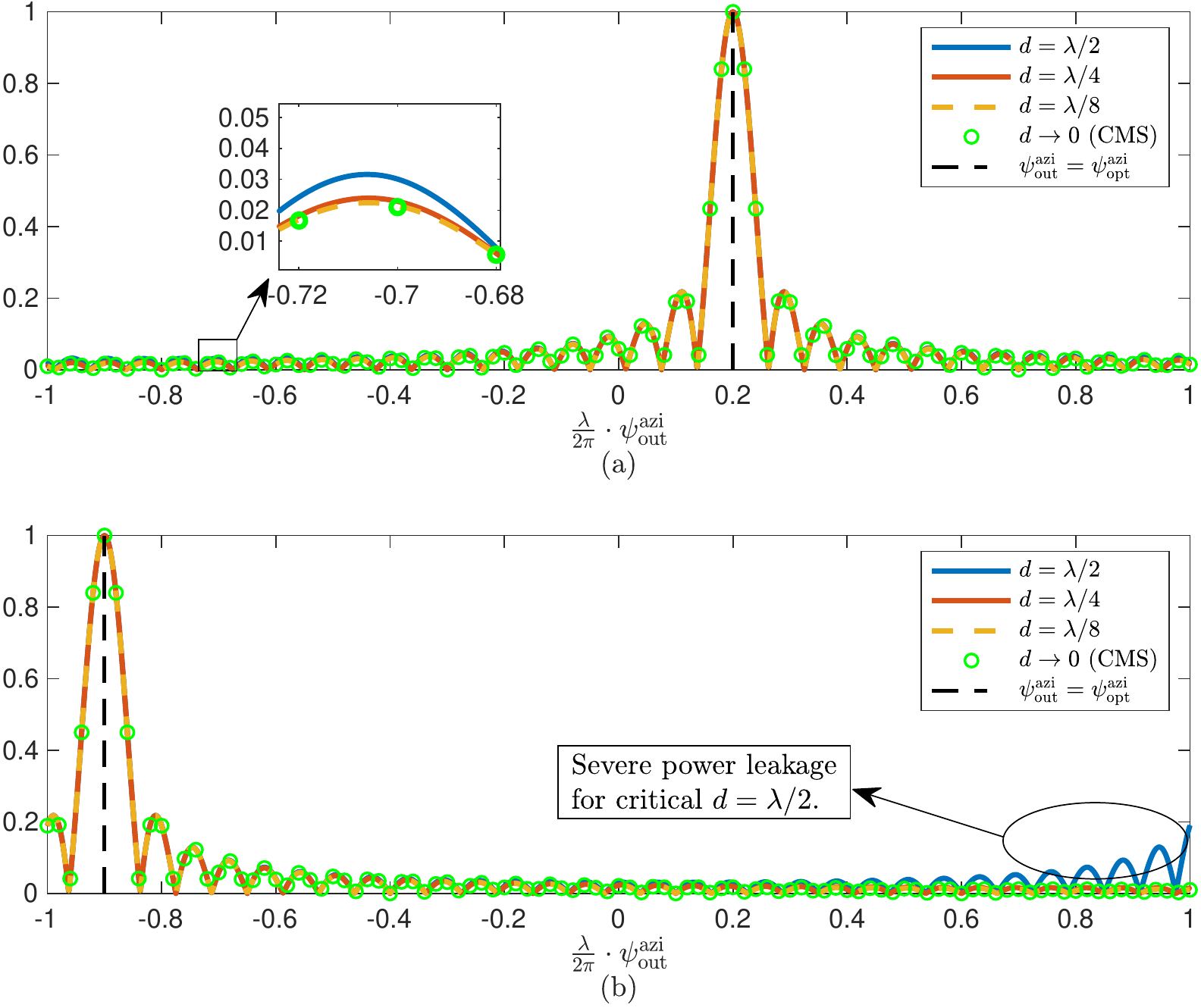}
\vspace{-3mm}
\caption{The beam patterns of NBS with normalized amplitudes when $A_x = 16\lambda$. (a) $\frac{\lambda}{2\pi}{\psi}^{\rm{azi}}_{\rm{opt}}=0.2$; (b) $\frac{\lambda}{2\pi}{\psi}^{\rm{azi}}_{\rm{opt}}=0.9$. Power leakage is observed in (b) for $d=\lambda/2$.}
\label{fig:NBS}
\end{figure}
\begin{figure}[t]
\centering
\includegraphics[width=3in]{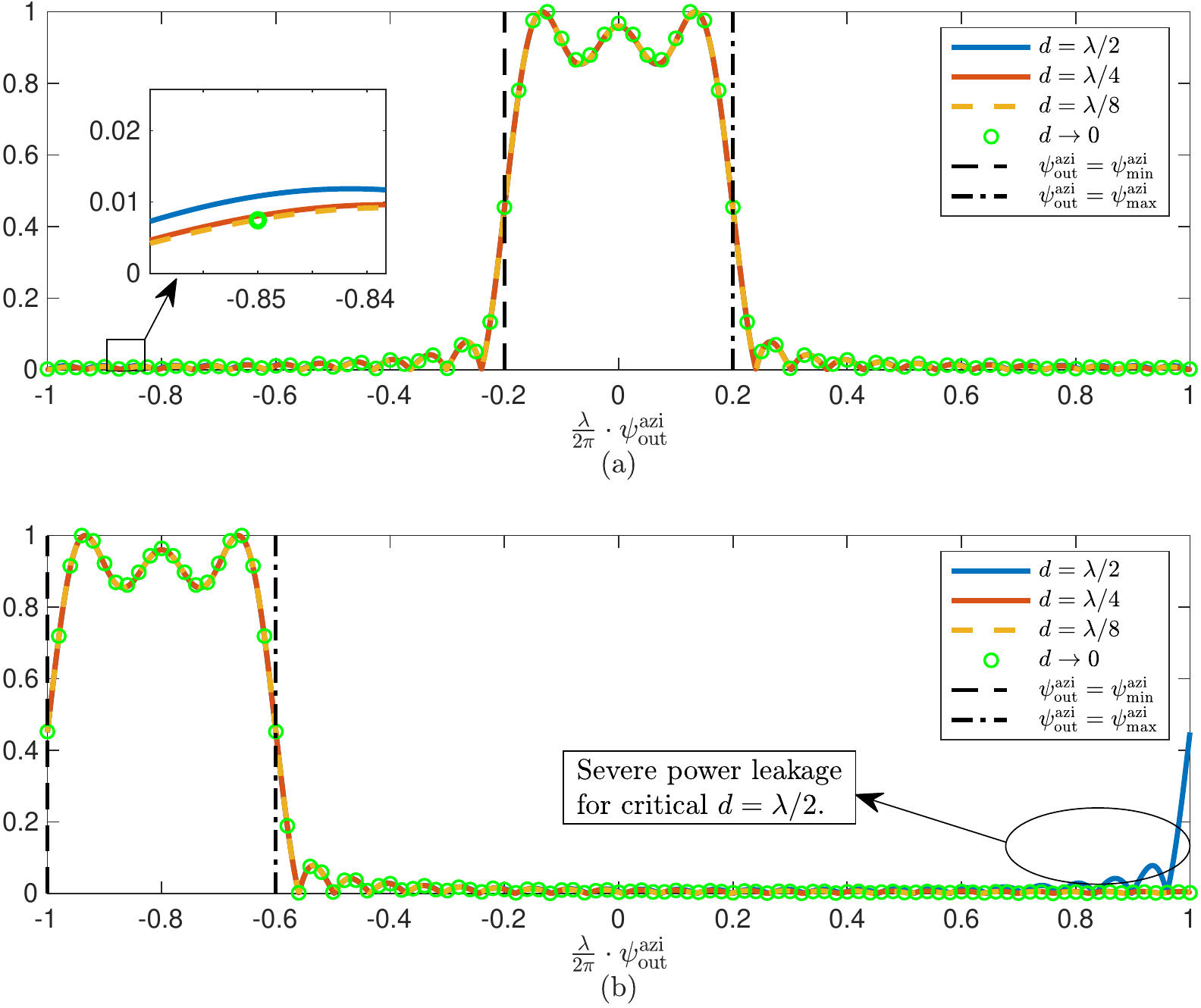}
\vspace{-3mm}
\caption{The beam patterns of SBF with normalized amplitudes when $A_x = 16\lambda$. (a) $\frac{\lambda}{2\pi}{\psi}^{\rm{azi}}_{\rm{min}}=-0.2$, $\frac{\lambda}{2\pi}{\psi}^{\rm{azi}}_{\rm{max}}=0.2$; (b) $\frac{\lambda}{2\pi}{\psi}^{\rm{azi}}_{\rm{min}}=-1$, $\frac{\lambda}{2\pi}{\psi}^{\rm{azi}}_{\rm{max}}=-0.6$. Power leakage is observed in (b) for $d=\lambda/2$.}
\label{fig:SBF}
%\vspace{-10mm}
\end{figure}
\begin{equation}
\label{equ:PhiCMS}
\tilde \Phi \left( {x,y} \right) \propto \lim_{d\to 0} \Phi \left( {m,n} \right)\left| {_{dm = x,dn = y}} \right. \text{, } x\in \left[ 0, A_x\right] \text{, } y\in \left[ 0, A_y\right] \text{.}
\end{equation}
The corresponding beam pattern of a CMS, $\tilde g\left( {{\bm \psi _{{\rm{out}}}},{\bm \psi _{{\rm{in}}}}} \right)$ can be calculated as
\begin{equation}
\label{equ:gCMS}
\tilde g\left( {{\bm \psi _{{\rm{out}}}},{\bm \psi _{{\rm{in}}}}} \right) \propto \lim_{d\to 0} g\left( {{\bm \psi _{{\rm{out}}}},{\bm \psi _{{\rm{in}}}}} \right) \text{.}
\end{equation}
In this sub-section, the parameters related to CMSs are top-marked with $\mathop {\left(  \cdot  \right)}\limits^\sim $ in order to distinguish them from DPA-based RISs. We note, in particular, that the constraints $N_x d=A_x$ and $N_y d = A_y$ need to be inherently enforced in (\ref{equ:PhiCMS}) and (\ref{equ:gCMS}). Based on (\ref{equ:PhiCMS}) and (\ref{equ:gCMS}), the next two corollaries report the beamforming design of NBS and SBF for application to CMSs.

\begin{corollary}
Consider a CMS based on the NBS-based beamforming design. The reflection coefficients and the beam pattern, which are denoted by $\tilde \Phi_{\rm NBS} \left( {x,y};{\bm \psi}_{\rm opt} \right)$ and $\tilde g_{\rm{NBS}}\left( {{{\bm \psi} _{{\rm{out}}}},{{\bm \psi} _{{\rm{in}}}}};{\bm \psi}_{\rm opt} \right)$, respectively, can be formulated as
\begin{equation}
\label{equ:PhiCMSNBS}
\tilde \Phi_{\rm NBS} \left( {x,y};{\bm \psi}_{\rm opt} \right) \propto {e^{j\left( {x{k_{{\rm{opt}}}} + y{l_{{\rm{opt}}}}} \right)}}
\text{,}
\end{equation}
\begin{align}
\label{equ:gCMSNBS}
&\tilde g_{\rm{NBS}}\left( {{{\bm \psi} _{{\rm{out}}}},{{\bm \psi} _{{\rm{in}}}}};{\bm \psi}_{\rm opt} \right) \propto {e^{-j\frac{{A_x}}{2}{\Delta _x}}}{e^{-j\frac{A_y}{2}{\Delta _y}}}  \nonumber \\
& \quad \times {\rm sinc}\left[ \frac{A_x}{2} \left( {k_{\rm{opt}} -\Delta_x} \right) \right] {\rm sinc}\left[ \frac{A_y}{2} \left( {l_{\rm{opt}} - \Delta_y} \right) \right] \text{.}
\end{align}
\end{corollary}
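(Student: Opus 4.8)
The plan is to derive both claims by passing to the continuous limit $d \to 0$ in the discrete NBS results, namely the reflection coefficients \eqref{equ:PhiStr} and the beam pattern \eqref{equ:gNBS}, in accordance with the CMS definitions \eqref{equ:PhiCMS} and \eqref{equ:gCMS}. Throughout, the area constraints $N_x d = A_x$ and $N_y d = A_y$ must be enforced as $d \to 0$, so that $N_x, N_y \to \infty$ at matched rates. Since every quantity is defined only up to a multiplicative constant (the $\propto$ symbol), I will not track the overall scaling and will instead follow the functional dependence on $(x,y)$ and on $\bm \psi_{\rm out}$.

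For the reflection coefficients \eqref{equ:PhiCMSNBS}, the argument is immediate. Starting from \eqref{equ:PhiStr} and substituting $dm = x$, $dn = y$ as prescribed by \eqref{equ:PhiCMS} yields $\Phi_{\rm NBS} \propto e^{j(x k_{\rm opt} + y l_{\rm opt})}$. Because $k_{\rm opt}$ and $l_{\rm opt}$ depend only on the fixed angles $\bm \psi_{\rm opt}$ and $\bm \psi_{\rm in}$ (via the branch $u = v = 0$) and not on $d$, the resulting expression carries no residual $d$-dependence, so the limit $d \to 0$ leaves it unchanged and establishes \eqref{equ:PhiCMSNBS}.

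For the beam pattern \eqref{equ:gCMSNBS}, I would start from \eqref{equ:gNBS} and take $d \to 0$ factor by factor. The two phase prefactors converge trivially, since $\frac{d - A_x}{2} \to -\frac{A_x}{2}$ and $\frac{d - A_y}{2} \to -\frac{A_y}{2}$, producing $e^{-j\frac{A_x}{2}\Delta_x} e^{-j\frac{A_y}{2}\Delta_y}$. The heart of the proof is the limit of the Dirichlet kernel to the sinc function. Writing $\Xi_{N_x}[d(k_{\rm opt} - \Delta_x)] = \frac{\sin(N_x d (k_{\rm opt} - \Delta_x)/2)}{N_x \sin(d (k_{\rm opt} - \Delta_x)/2)}$, I would replace $N_x d$ by $A_x$ in the numerator argument, and in the denominator use the small-angle equivalence $N_x \sin(d (k_{\rm opt} - \Delta_x)/2) \sim N_x \cdot d (k_{\rm opt} - \Delta_x)/2 = A_x (k_{\rm opt} - \Delta_x)/2$ as $d \to 0$. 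This gives $\lim_{d \to 0} \Xi_{N_x}[d(k_{\rm opt} - \Delta_x)] = \frac{\sin(A_x (k_{\rm opt} - \Delta_x)/2)}{A_x (k_{\rm opt} - \Delta_x)/2} = {\rm sinc}[\frac{A_x}{2}(k_{\rm opt} - \Delta_x)]$, and the identical argument in the $y$-direction yields the matching factor ${\rm sinc}[\frac{A_y}{2}(l_{\rm opt} - \Delta_y)]$. Combining the phase prefactors with the two sinc limits reproduces \eqref{equ:gCMSNBS}.

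The only step requiring care is this Dirichlet-to-sinc limit, and even there the obstacle is mild: it rests entirely on the denominator linearization $\sin(\epsilon) \sim \epsilon$ together with the area constraint $N_x d = A_x$, which freezes the numerator argument while the denominator collapses to $A_x (k_{\rm opt} - \Delta_x)/2$. I would also note the qualitative meaning, since it confirms the result is not accidental: the period $\frac{2\pi}{d}$ of the angular coefficients diverges as $d \to 0$, so the grating replicas of the Dirichlet kernel recede to infinity and only the central lobe survives, which is precisely why a continuous aperture produces a single sinc-shaped beam with no grating lobes. The limit is locally uniform for every finite argument $\frac{A_x}{2}(k_{\rm opt} - \Delta_x)$, and the peak value at $k_{\rm opt} = \Delta_x$ is consistent under the conventions ${\rm sinc}(0) = 1$ and $\Xi_{N_x}(0) = 1$.
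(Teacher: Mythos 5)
Your proposal is correct and follows essentially the same route as the paper's Appendix A proof: the reflection coefficients are obtained by substituting $dm = x$, $dn = y$ into \eqref{equ:PhiStr}, the phase prefactors converge trivially, and the core step is the Dirichlet-to-sinc limit under the constraint $N_x d = A_x$, with the degenerate case $k_{\rm opt} = \Delta_x$ checked separately against the conventions $\Xi_{N_x}(0) = 1 = {\rm sinc}(0)$. The only cosmetic difference is that you compute the denominator limit via the small-angle equivalence $N_x \sin\left[\tfrac{d}{2}(k_{\rm opt}-\Delta_x)\right] \to \tfrac{A_x}{2}(k_{\rm opt}-\Delta_x)$, whereas the paper applies the \emph{De l'H{\^o}pital} rule with respect to $d$; the two computations are equivalent.
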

%\vspace{-3mm}
\begin{proof}
See {\bf Appendix A}.
%\vspace{-3mm}
\end{proof}

\begin{corollary}
Consider a CMS based on the SBF-based beamforming design. The reflection coefficients, $\tilde \Phi_{\rm SBF} \left( {x,y};{\bm \psi}_{\rm opt} \right)$, can be formulated as
\begin{align}
\label{equ:PhiCMSSBF}
&\tilde \Phi_{\rm SBF} \left( {x,y};{\bm \psi}_{\min}, {\bm \psi}_{\max} \right) \propto \nonumber \\
& {e^{ - j\left( {{\bar x}\psi _{{\rm{in}}}^{{\rm{azi}}} + {\bar y}\psi _{{\rm{in}}}^{{\rm{ele}}}} \right)}}\frac{{{e^{j{\bar x}\psi _{{\rm{max}}}^{{\rm{azi}}}}} - {e^{j{\bar x}\psi _{\min }^{{\rm{azi}}}}}}}{{{\bar x}}}\frac{{{e^{j{\bar y}\psi _{{\rm{max}}}^{{\rm{azi}}}}} - {e^{j{\bar y}\psi _{\min }^{{\rm{azi}}}}}}}{{{\bar y}}}
\text{,}
\end{align}
where $\bar x = x - A_x/2$ and $\bar y = y - A_y/2$. The corresponding beam pattern, $\tilde g_{\rm{SBF}}\left( {{{\bm \psi} _{{\rm{out}}}},{{\bm \psi} _{{\rm{in}}}}};{\bm \psi}_{\min}, {\bm \psi}_{\max} \right)$, can be formulated as follows

%\vspace{-8mm}
\begin{align}
\label{equ:gCMSSBF}
&\tilde g_{\rm{SBF}}\left( {{{\bm \psi} _{{\rm{out}}}},{{\bm \psi} _{{\rm{in}}}}};{\bm \psi}_{\min}, {\bm \psi}_{\max} \right) \propto {e^{-j\frac{{A_x}}{2}{\Delta _x}}}{e^{-j\frac{{A_y}}{2}{\Delta _y}}} \nonumber \\
& \times \int_{l_{\min}}^{l_{\max}} {\int_{k_{\min}}^{k_{\max}} {\rm sinc}\left[ \frac{A_x}{2} \left( {k -\Delta_x} \right) \right] {\rm sinc}\left[ \frac{A_y}{2} \left( {l - \Delta_y} \right) \right] } {\rm{d}}k{\rm{d}}l \text{.}
\end{align}
\end{corollary}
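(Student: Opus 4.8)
The plan is to derive both formulas as the $d\to 0$ limits of the discrete SBF expressions in (\ref{equ:PhiSBF}) and (\ref{equ:gSBF}), exactly mirroring the limiting argument already used for NBS in {\bf Appendix A}, while enforcing the constraints $N_x d = A_x$ and $N_y d = A_y$ throughout so that the aperture size stays fixed as the spacing vanishes.

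First, for the reflection coefficients I would start from (\ref{equ:PhiSBF}) and apply the substitution $dm = x$, $dn = y$ prescribed by (\ref{equ:PhiCMS}). The only quantities requiring care are the products $d\bar m$ and $d\bar n$. Writing $d\bar m = d\left(m - \frac{N_x+1}{2}\right) = dm - \frac{N_x d + d}{2} = x - \frac{A_x+d}{2}$ and letting $d\to 0$ gives $d\bar m \to x - A_x/2 = \bar x$, and similarly $d\bar n \to y - A_y/2 = \bar y$. Substituting these limits into each of the three factors of (\ref{equ:PhiSBF}) — the incident-angle phase term and the two finite-difference ratios — immediately produces (\ref{equ:PhiCMSSBF}).

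Next, for the beam pattern I would take the limit of (\ref{equ:gSBF}) term by term. The prefactor is elementary: $e^{j\frac{d-A_x}{2}\Delta_x} \to e^{-j\frac{A_x}{2}\Delta_x}$ and likewise in $\Delta_y$. The crux is the pointwise limit of the Dirichlet kernel. Using ${\Xi _{N_x}}\!\left[d(k-\Delta_x)\right] = \frac{\sin\!\left(N_x d (k-\Delta_x)/2\right)}{N_x \sin\!\left(d(k-\Delta_x)/2\right)}$ and inserting $N_x d = A_x$, the numerator becomes $\sin\!\left(\frac{A_x}{2}(k-\Delta_x)\right)$, independent of $d$, while the denominator obeys $N_x \sin\!\left(d(k-\Delta_x)/2\right) \to N_x \cdot d(k-\Delta_x)/2 = \frac{A_x}{2}(k-\Delta_x)$ as $d\to 0$. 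Hence ${\Xi _{N_x}}\!\left[d(k-\Delta_x)\right] \to {\rm sinc}\!\left[\frac{A_x}{2}(k-\Delta_x)\right]$, and analogously for the $l$-factor with $A_y$. Because the integration domain $[k_{\min},k_{\max}]\times[l_{\min},l_{\max}]$ is fixed and finite and the Dirichlet kernels are uniformly bounded in magnitude by one, the limit may be passed inside the integral by dominated convergence, yielding (\ref{equ:gCMSSBF}).

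The only genuinely delicate step is this Dirichlet-to-sinc limit together with the interchange of limit and integral; everything else reduces to direct substitution of the two scaling constraints. I expect to justify the interchange simply by noting the bounded integrand on a fixed, finite domain and invoking dominated convergence — the same mechanism that underlies the NBS derivation in {\bf Appendix A}.
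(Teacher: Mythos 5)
Your proposal is correct and follows essentially the same route as the paper, which proves this corollary by the same $d \to 0$ limiting argument used for NBS in \textbf{Appendix A} (the paper merely remarks that the proof is ``similar'' to that of \textbf{Corollary 1}). Your explicit treatment of the one step genuinely new to the SBF case --- passing the limit through the integral via dominated convergence, using the bound $\left| \Xi_N(x) \right| \le 1$ on a fixed finite domain --- is handled correctly and fills in what the paper leaves implicit.
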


\begin{proof}
The proof of {\bf Corollary 2} is similar to that of {\bf Corollary 1} in {\bf Appendix A}. 
%Thus, it is omitted for brevity.
\end{proof}

In Figs. \ref{fig:NBS} and \ref{fig:SBF}, we report the beam patterns of a CMS and compare them with those of a critically-spaced RIS and an ultra-dense RIS.
As far as CMSs are concerned, we observe that the Dirichlet kernel functions that characterize the beam patterns of critically-spaced and ultra-dense RISs are replaced by the ``sinc'' functions, which have relatively small sidelobes and no periodicity compared with Dirichlet kernel functions. 
This is in agreement with the far-field results in \cite{Y}.
\begin{figure*}[t]
%\vspace{-3mm}
%\captionsetup{font={footnotesize}, name = {Fig.}, labelsep = period}
%\captionsetup[subfigure]{singlelinecheck = on, justification = raggedright, font={footnotesize}}
\centering
\includegraphics[width=5in]{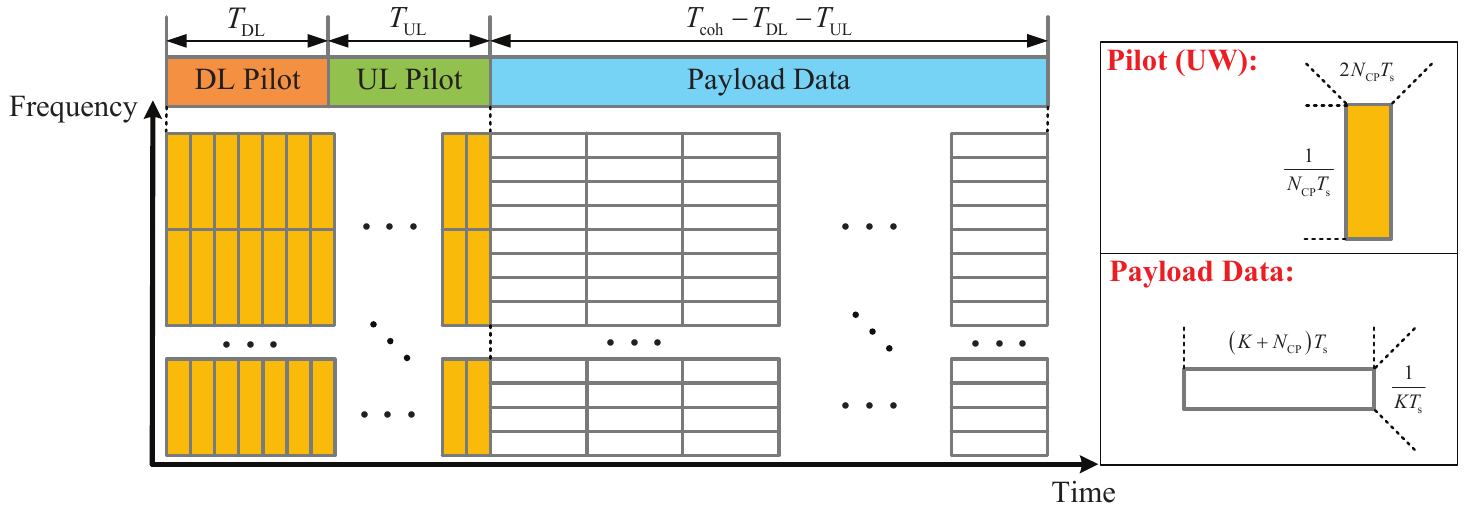}
\caption{The transmission frame structure of the system. $T_{\rm DL}$($T_{\rm UL}$): Pilot overhead of downlink (uplink) CE stage. $T_{\rm coh}$: Channel coherence time, during which the CSI is assumed to be invariant.}
\label{fig:frame}
\end{figure*}

Although CMSs are not realizable in practice, since it is not possible to build surfaces with an infinite number of reflecting elements, Figs. \ref{fig:NBS} and \ref{fig:SBF} show that the beam pattern of a CMS can be well approximated by an ultra-dense RIS, which is a practical extension of conventional critically-spaced RISs, especially in the THz band. Based on the examples illustrated in Figs. \ref{fig:NBS} and \ref{fig:SBF}, we observe that a spacing $d \le \lambda/4$ makes the beam patterns of ultra-dense RISs and CMSs almost indistinguishable from each other (see also Fig. 6 in \cite{Y}). In this context, a holographic RIS can be defined as a CMS in theory and as an ultra-dense RIS in practice.

As illustrated in Fig. \ref{fig:NBS}, the minimum width of the mainlobe of the NBS beam pattern is mainly determined by the physical size, i.e., $A_x$ and $A_y$, of the surface. We refer to this minimum width as {\it spatial resolution}. This parameter characterizes the ability of RISs of distinguishing different UEs that are closely located. Based on Fig. \ref{fig:NBS}, we observe that the spatial resolution is the same for the three implementations, i.e., critically-spaced RISs, ultra-dense RISs, and CMSs, as long as their physical size is kept unchanged \cite{Holo2}.

%\vspace{-3mm}
\section{Closed-Loop Channel Estimation Scheme}
In this section, we investigate the CE problem for the holographic RIS described in the previous section and for application to THz massive MIMO systems, as illustrated in Fig. 1.
The transmission frame structure is illustrated in Fig. \ref{fig:frame}. During the whole CE stage, short-length OFDM symbols, which are also referred to as unique words (UWs) \cite{UW}, are transmitted as pilot signals.
Each UW consists of $N_{\rm CP}$ subcarriers in the frequency domain and has length (duration) equal to $2{N_{{\rm{CP}}}}{T_{\rm{s}}}$ (half of which constitutes the mainbody of $N_{\rm CP}$-point OFDM symbol while the rest constitutes the $N_{\rm CP}$-point CP) in the time domain, as elaborated in Fig. \ref{fig:frame}. 
For each RFC at the BS, the beamforming is assumed to be the NBS in order to obtain the high beamforming gain, and it is obtained by assuming that the LoS direction of the BS-RIS channel is known. In particular, the beamforming design of each RFC at the BS is{\footnote{The proposed beamforming design can be applied to the active beamforming for MIMO systems as well. In this case, we can set $\bm \psi_{\rm in} = [0,0]^T$, and $\Phi(m,n)$ can be considered as the phase shift induced by the phase shifter.}}
%\vspace{-2mm}
\begin{equation}
\label{equ:R1}
\Phi_r^{{\rm{B}}}\left( {m,n} \right) = \Phi_{{\rm{NBS}}}^{\rm{B}}\left( {m,n;{\bm \psi^{{\rm{B}}}}} \right) \text{, } \forall r = 1,..., N_{\rm RF} \text{.}
%\vspace{-2mm}
\end{equation}
The specific expression of $\Phi_{{\rm{NBS}}}^{\rm{B}}$ is obtained from (\ref{equ:PhiStr}). It is worth noting that the constraint of constant modulus in the hybrid analog-digital architecture is implicitly satisfied when the NBS beamforming is considered. The signals of all the RFCs are added together and a single data stream is obtained. Thus, by dropping the index $r$ in (\ref{equ:DLmodel1}), we re-write it as
%\vspace{-2mm}
\begin{equation}
\label{equ:htau}
h^{\rm DL}(\tau) = \alpha^{\rm DL} G^{\rm B} \left[ {{h^{{\rm{DL,LoS}}}}\left( \tau  \right) + {h^{{\rm{DL,NLoS}}}}\left( \tau  \right)} \right] \text{,}
%\vspace{-2mm}
\end{equation}
where $G^{\rm B} = g_{\rm NBS}^{{\rm{B}}}\left({{\bm \psi^{{\rm{B}}}}};{{\bm \psi^{{\rm{B}}}}} \right)$ is the beam pattern corresponding to $\Phi_{{\rm{NBS}}}^{\rm{B}}\left( {m,n;{\bm \psi^{{\rm{B}}}}} \right)$. 

The proposed CS scheme consists of two phases, which are applied to the downlink and uplink transmissions. The two phases are described in the following two sub-sections, respectively. 
\vspace{-3mm}
\subsection{Downlink CE Stage}
During the downlink CE stage, the UE coarsely estimates $\bm \mu^{\rm LoS}$ and $\bm \nu^{\rm LoS}$ by constraining their values within smaller ranges. {First, the whole range of AoD at the RIS is divided into $G_x$ groups along the azimuth direction and into $G_y$ groups along the elevation direction.} The range of azimuth-AoD in the $g_x$-th azimuth group, $g_x = 1,...,G_x$, is $[\psi _{{\rm{min,}}{g_x}}^{{\rm{azi}}} , \psi _{{\rm{max,}}{g_x}}^{{\rm{azi}}} ]$ with
\begin{align}
\label{equ:COazi}
\begin{split}
\psi_{{\rm{min,}}{g_x}}^{{\rm{azi}}} = \frac{{2\pi }}{\lambda }\left[ { - 1 + \frac{{2\left( {{g_x} - 1} \right)}}{{{G_x}}}} \right] \text{, } \\ 
\psi _{{\rm{max,}}{g_x}}^{{\rm{azi}}} = \frac{{2\pi }}{\lambda }\left[ { - 1 + \frac{{2{g_x}}}{{{G_x}}} - \frac{\lambda }{{{A_x}}}} \right] \text{,}
\end{split}
\end{align}
where $\frac{\lambda}{A_x}$ is the azimuth resolution of the RIS. Without loss of generality, a gap equal to $\frac{\lambda}{A_x}$ between two adjacent groups is assumed. Similarly, the range of elevation-AoD in the $g_y$-th elevation group, $g_y = 1,...,G_y$, is $[\psi _{{\rm{min,}}{g_y}}^{{\rm{ele}}},\psi _{{\rm{max,}}{g_y}}^{{\rm{ele}}}]$, where
\begin{equation}
\label{equ:COele}
\begin{split}
\psi _{{\rm{min,}}{g_y}}^{{\rm{ele}}} = \frac{{2\pi }}{\lambda }\left[ { - 1 + \frac{{2\left( {{g_y} - 1} \right)}}{{{G_y}}}} \right] \text{, } \\
\psi _{{\rm{max,}}{g_y}}^{{\rm{ele}}} = \frac{{2\pi }}{\lambda }\left[ { - 1 + \frac{{2{g_y}}}{{{G_y}}} - \frac{\lambda }{{{A_y}}}} \right] \text{.}
\end{split}
\end{equation}
Then, we aim to find the group that contains the LoS AoD of the RIS-UE channel, ${\bm \mu}_{\rm LoS}$. In particular, the downlink CE stage can be formulated as follows
\begin{align}
\text{Find }(\hat g_x,\hat g_y) & = \Big\{ ({g_x},{g_y})\left| \psi _{\min ,{g_x}}^{{\rm{azi}}} \le \mu _{{\rm{LoS}}}^{{\rm{azi}}} \le \psi _{\max ,{g_x}}^{{\rm{azi}}} \right. \nonumber \\
& \qquad {\text{ and }} \psi _{\min ,{g_y}}^{{\rm{ele}}} \le \mu _{{\rm{LoS}}}^{{\rm{ele}}} \le \psi _{\max ,{g_y}}^{{\rm{ele}}} \Big\} \text{.}
\end{align}

To this end, for the $(g_x,g_y)$-th group, the RIS uses the SBF beamforming with the cut-off angles given in (\ref{equ:COazi}) and (\ref{equ:COele}), that is
\vspace{-2mm}
\begin{equation}
\label{equ:temp1}
\tilde \Phi(x,y) = \tilde \Phi_{\rm SBF}(x,y;\bm \psi_{{\rm min},g_x,g_y},\bm \psi_{{\rm max},g_x,g_y}) \text{,}
\vspace{-2mm}
\end{equation}
where $\bm \psi_{{\rm min},g_x,g_y} = \left[ \psi _{{\rm{min,}}{g_x}}^{{\rm{azi}}}, \psi _{{\rm{min,}}{g_y}}^{{\rm{ele}}} \right]^T$, $\bm \psi_{{\rm min},g_x,g_y} = \left[ \psi _{{\rm{max,}}{g_x}}^{{\rm{azi}}}, \psi _{{\rm{max,}}{g_y}}^{{\rm{ele}}} \right]^T$, and the specific expression of \eqref{equ:temp1} is obtained from (\ref{equ:PhiSBF}) (for the DPA-based RIS) or from (\ref{equ:PhiCMSSBF}) (for the CMS).

At the UE, given that the dimension of the antenna array at the UE is relatively small, the NBS beamforming towards $M^{\rm U}$ desired directions is employed to coarsely estimate $\bm \nu_{\rm LoS}$. Specifically, for the $(n_x,n_y)$-th desired direction, $1 \le n_x \le M^{\rm U}_x$, $1 \le n_y \le M^{\rm U}_y$, the UE uses the beamforming design
%\vspace{-2mm}
\begin{equation}
\label{equ:temp2}
\Phi^{{\rm{U}}}\left( {m,n} \right) = \Phi_{{\rm{NBS}}}^{\rm{U}}\left( {m,n;{\bm \psi_{{\rm opt},n_x,n_y}}} \right) \text{,}
\end{equation}
where ${\bm \psi_{{\rm opt},n_x,n_y}} = \frac{{2\pi }}{\lambda }{\left[ { - 1 + \frac{{2\left( {{n_x} - 1} \right)}}{{{M^{\rm U}_x}}}, - 1 + \frac{{2\left( {{n_y} - 1} \right)}}{{{M^{\rm U}_y}}}} \right]^T}$.

In an OFDM-based system, the effective baseband channel in the delay domain (\ref{equ:htau}) can be transformed to the frequency domain as follows

\vspace{-6mm}
\begin{align}
\label{equ:temp3}
h^{{\rm{DL,Fd}}}_k & = \frac{1}{{\sqrt{N_{\rm CP} }}}\sum\limits_{d = 0}^{N_{\rm CP} - 1} {{h}^{\rm DL}\left( {d{T_s}} \right){e^{ - j\frac{{2\pi d}}{N_{\rm CP}}k}}} \nonumber \\
& = \gamma^{\rm DL, LoS}_k \tilde g\left( {{\bm \mu^{\rm LoS}},{\bm \psi^{{\rm{R}}}}} \right) g^{{\rm{U}}}\left( {{\bm \nu^{\rm LoS}}}\right) \nonumber \\
& \quad + \frac{{1}}{{\sqrt {L{K_{\rm{f}}}} }}\sum\limits_{l = 1}^L {\gamma^{\rm DL}_{l,k} \tilde g\left( {{\bm \mu_{l}},{\bm \psi^{{\rm{R}}}}} \right)} g^{{\rm{U}}}\left( {{\bm \nu_{l}}}\right) \text{,}
\end{align}
where $1 \le k \le N_{\rm CP}$, $h_k^{\rm DL,Fd}$ is the frequency-domain (Fd) channel in the $k$-th subcarrier, ${\gamma^{\rm DL,LoS}_k} = \frac{{\alpha^{\rm DL} \beta^{\rm DL,LoS} G^{\rm B}}}{{\sqrt {{N_{{\rm{CP}}}}} }}\sum\limits_{d = 0}^{{N_{{\rm{CP}}}} - 1} {p\left( {d{T_{\rm{s}}} - {\tau^{\rm DL,LoS}}} \right)} {e^{ - j\frac{{2\pi d}}{{{N_{{\rm{CP}}}}}}k}}$, and ${\gamma^{\rm DL}_{l,k}} = \frac{{\alpha^{\rm DL} \beta_{l}^{\rm DL} {G^{{\rm{B}}}}}}{{\sqrt {{N_{{\rm{CP}}}}} }}\sum\limits_{d = 0}^{{N_{{\rm{CP}}}} - 1}$ ${p\left( {d{T_{\rm{s}}} - {\tau^{\rm DL}_l}} \right)} {e^{ - j\frac{{2\pi d}}{{{N_{{\rm{CP}}}}}}k}}$.
Substituting the beamforming designs (\ref{equ:temp1}) and (\ref{equ:temp2}) into (\ref{equ:temp3}), the received pilot signal at the UE in the $k$-th subcarrier can be expressed as
\begin{align}
\label{equ:DLRxpilot}
& \qquad {y_{k,{g_x},{g_y},{n_x},{n_y}}} \nonumber \\
& = \sqrt {\frac{{P_{{\rm{Tx}}}^{{\rm{DL}}}}}{{{N_{{\rm{CP}}}}}}} \Bigg( \gamma_k^{{\rm{DL,LoS}}}\tilde G_{{g_x},{g_y}}^{{\rm{LoS}}}G_{{n_x},{n_y}}^{{\rm{U,LoS}}} \nonumber \\
& + \frac{1}{{\sqrt {L{K_{\rm{f}}}} }}\sum\limits_{l = 1}^L {{\gamma^{\rm DL}_{l,k}}{{\tilde G}_{l,{g_x},{g_y}}}G_{l,{n_x},{n_y}}^{{\rm{U}}}}  \Bigg) + {n_{k,{g_x},{g_y},{n_x},{n_y}}}  \text{,}
\end{align}
where $P_{\rm Tx}^{\rm DL}$ is the total downlink transmit power, ${n_{k,{g_x},{g_y},{n_x},{n_y}}}$ is the additive white Gaussian noise (AWGN) with distribution ${\mathcal{CN}}(0,\sigma^2_{\rm n})$, $G_{{n_x},{n_y}}^{{\rm{U,LoS}}} = g_{{\rm{NBS}}}^{{\rm{U}}}\left( {{\bm \nu^{{\rm{LoS}}}};{\bm \psi _{{\rm{opt}},{n_x},{n_y}}}} \right)$, $G_{l,{n_x},{n_y}}^{{\rm{U}}} = g_{{\rm{NBS}}}^{{\rm{U}}}\left( {{\bm \nu_{l}};{\bm \psi _{{\rm{opt}},{n_x},{n_y}}}} \right)$,
$\tilde G_{{g_x},{g_y}}^{{\rm{LoS}}} = {{\tilde g}_{{\rm{SBF}}}}\left( {{\bm \mu^{{\rm{LoS}}}},{\bm \psi^{{\rm{R}}}};{\bm \psi _{\min ,{g_x},{g_y}}},{\bm \psi _{\max ,{g_x},{g_y}}}} \right)$, 
and
$\tilde G_{l,{g_x},{g_y}} = {{\tilde g}_{{\rm{SBF}}}} \left( {\bm \mu_{l}}, {\bm \psi^{{\rm{R}}}}; {\bm \psi _{\min ,{g_x},{g_y}}},{\bm \psi _{\max ,{g_x},{g_y}}} \right)$.
In particular, $\tilde g_{\rm SBF}$ and $g_{\rm SBF}^{\rm U}$ are the beam patterns corresponding to $\tilde \Phi_{\rm SBF}$ in (\ref{equ:temp1}) and $\Phi_{\rm NBS}^{\rm U}$ in (\ref{equ:temp2}), respectively.

After collecting ${y_{k,{g_x},{g_y},{n_x},{n_y}}}$ for $1 \le g_x \le G_x$, $1 \le g_y \le G_y$, $1 \le n_x \le M^{\rm U}_x$ and $1 \le n_y \le M^{\rm U}_y$ in successive UWs, the UE  conducts a search to decide which groups $\bm \mu^{\rm LoS}$ and $\bm \nu^{\rm LoS}$ belong to, i.e.,
\begin{equation}
\label{equ:FindMax}
\left({\hat g_x,\hat g_y,\hat n_x,\hat n_y} \right) = \mathop {\arg \max }\limits_{\left( {{g_x},{g_y},{n_x},{n_y}} \right)} \sum\limits_{k = 1}\limits^{N_{\rm CP}} {\left| {{y_{k,{g_x},{g_y},{n_x},{n_y}}}} \right|} \text{.}
\end{equation}
Considering the energy focusing property of NBS (as shown in Fig. \ref{fig:NBS3d}) and the bandpass property of SBF (as shown in Fig. \ref{fig:SBF3d}), we can expect that $\bm \nu^{\rm LoS} \approx \bm \psi_{{\rm opt},\hat n_x, \hat n_y}$, $\psi^{\rm azi}_{\min,\hat g_x, \hat g_y} \le \mu^{\rm azi}_{\rm LoS} \le \psi^{\rm azi}_{\max,\hat g_x, \hat g_y}$, and $\psi^{\rm ele}_{\min,\hat g_x, \hat g_y} \le \mu^{\rm ele}_{\rm LoS} \le \psi^{\rm ele}_{\max,\hat g_x, \hat g_y}$. In other words, after the downlink CE stage, the range of possible values for $\bm \mu_{\rm LoS}$ is narrowed down to $\psi^{\rm azi}_{\max,\hat g_x} - \psi^{\rm azi}_{\min,\hat g_x} = \frac{2\pi}{\lambda}\left(\frac{2}{G_x}-\frac{\lambda}{A_x}\right)$ along the azimuth direction,
and to $\psi^{\rm ele}_{\max,\hat g_y} - \psi^{\rm ele}_{\min,\hat g_y} = \frac{2\pi}{\lambda}\left(\frac{2}{G_y}-\frac{\lambda}{A_y}\right)$ along the elevation direction.
This significantly reduces the search space of the uplink finer-grained CE discussed in the next sub-section. The proposed downlink CE scheme is summarized in {\bf Algorithm 1}.
\begin{algorithm}[t]
\caption{The downlink CE scheme}
\label{alg:alg1}
\begin{algorithmic}[1]
\renewcommand{\algorithmicrequire}{\textbf{Input:}}
\renewcommand{\algorithmicensure}{\textbf{Output:}}
\State {Determine $\bm \psi^{\rm B}$, $\bm \psi^{\rm R}$, $G_x$ and $G_y$.}
\State Set the beamforming design for each RFC at the BS as in (\ref{equ:R1});
%\Ensure {The combiner ${{\bf{W}}_i^{\rm{d}}}$ at the BS.}
\For{the $(g_x,g_y)$-th group at the RIS, $1 \le g_x \le G_x$, $1 \le g_y \le G_y$,}
    \State Set the beamforming design at the RIS as in (\ref{equ:temp1});
    \For{ the $(n_x,n_y)$-th group at the UE, $1 \le n_x \le M^{\rm U}_x$, $ 1\le n_y \le M^{\rm U}_y$,}
    		\State Set the beamforming design at each UE as in (\ref{equ:temp2});   		
    		\State The BS broadcasts one UW as pilot signals;
		\State Each UE receives the pilot signals ${y_{k,{g_x},{g_y},{n_x},{n_y}}}$ as in (\ref{equ:DLRxpilot});
    \EndFor
\EndFor
\State Each UE obtains its index of optimal groups $(\hat g_x,\hat g_y,\hat n_x,\hat n_y)$ via (\ref{equ:FindMax});
\end{algorithmic}
\end{algorithm}

\vspace{-5mm}
\subsection{Uplink CE stage}
After the downlink CE stage, each UE obtains the indices of the optimal groups $(\hat g_x,\hat g_y,\hat n_x,\hat n_y)$, and this information is fed back to the BS via the control links for UE scheduling. Specifically, the BS schedules the UEs having the same $(\hat g_x,\hat g_y)$ into the same {\it uplink group}. Each scheduled group performs the finer-grained uplink CE and the subsequent payload data transmission. In this sub-section, therefore, we consider a generic group as follows
\vspace{-2mm}
\begin{align}
{\cal{G}}_{\hat g_x,\hat g_y} & = \big\{\text{All the UEs that have the same } \nonumber \\
& \qquad (\hat g_x,\hat g_y) \text{ obtained from the donwlink CE}\big\} \text{,}
\vspace{-2mm}
\end{align}
and we assume {${\rm card}({\cal{G}}_{\hat g_x,\hat g_y}) = N_{\rm UE}$}. During the uplink CE and payload data transmission stages, the beamforming design at the $u$-th UE ($1 \le u \le N_{\rm UE}$) in ${\cal{G}}_{\hat g_x,\hat g_y}$ is chosen as follows
\vspace{-2mm}
\begin{equation}
\label{equ:temp4}
\Phi^{{\rm{U}}}_u\left( {m,n} \right) = \Phi_{{\rm{NBS}}}^{\rm{U}}\left( {m,n;{\bm \psi_{{\rm opt},\hat n^u_x,\hat n^u_y}}} \right) \text{,}
\vspace{-2mm}
\end{equation}
where $\{\hat n^u_x,\hat n^u_y\}$ is the index of the optimal group obtained by the $u$-th UE during the downlink CE stage. 

As mentioned in the previous sub-section, after the downlink CE stage, the LoS angle of the $u$-th UE in ${\cal{G}}_{\hat g_x,\hat g_y}$, i.e., ${\bm \mu}^{\rm LoS}_{u}$, is coarsely estimated and its value is confined to the range between $\bm \psi_{{\rm min},\hat g_x,\hat g_y} = \left[ \psi _{{\rm{min,}}{\hat g_x}}^{{\rm{azi}}}, \psi _{{\rm{min,}}{\hat g_y}}^{{\rm{ele}}} \right]^T$ and $\bm \psi_{{\rm max},\hat g_x,\hat g_y} = \left[ \psi _{{\rm{max,}}{\hat g_x}}^{{\rm{azi}}}, \psi _{{\rm{max,}}{\hat g_y}}^{{\rm{ele}}} \right]^T$. Hence, during the uplink CE stage, we only need to search within this range in order to determine the LoS angle with a finer resolution. Given that the spatial resolution of the considered RIS is $\lambda/A_x$ along the azimuth direction and $\lambda/A_y$ along the elevation direction, we define the search space for the uplink CE stage as
\begin{equation}
\label{equ:temp6}
{\zeta}_{b_x}^{\rm azi} = \psi _{{\rm{min,}}{\hat g_x}}^{{\rm{azi}}} + \frac{2\pi}{\lambda} \frac{(b_x-1)\lambda}{A_x} \text{, } 1 \le b_x \le B_x \text{,}
\end{equation}
\begin{equation}
\label{equ:temp7}
{\zeta}_{b_y}^{\rm ele} = \psi _{{\rm{min,}}{\hat g_y}}^{{\rm{ele}}} + \frac{2\pi}{\lambda} \frac{(b_y-1)\lambda}{A_y} \text{, } 1 \le b_y \le B_y \text{,}
\end{equation}
where $B_x \buildrel \Delta \over = \frac{\lambda}{2\pi} \frac{\psi _{{\rm{max,}}{g_x}}^{{\rm{azi}}}-\psi _{{\rm{min,}}{g_x}}^{{\rm{azi}}}}{\lambda/A_x}+1 = \frac{2A_x}{\lambda G_x}$, $B_y \buildrel \Delta \over = \frac{\lambda}{2\pi} \frac{\psi _{{\rm{max,}}{g_y}}^{{\rm{ele}}}-\psi _{{\rm{min,}}{g_y}}^{{\rm{ele}}}}{\lambda/A_y}+1 = \frac{2A_y}{\lambda G_y}$ ($B_x$,$B_y \in \mathbb{Z}$ without loss of generality). 
In particular, the target of the uplink finer-grained CE is to solve the following optimization problem
%\vspace{-3mm}
\begin{equation}
({\hat b_x},{\hat b_y}) = \mathop {\arg \max }\limits_{({b_x},{b_y})} {\tilde g_{{\rm{NBS}}}}\left( {\bm \mu^{{\rm{LoS}}}_u,{\bm \psi^{{\rm{R}}}};{\bm \zeta _{{b_x},{b_y}}}} \right) \text{,}
%\vspace{-2mm}
\end{equation}
where ${\bm \zeta_{b_x,b_y}} = \left[\zeta_{b_x}^{\rm azi},\zeta_{b_y}^{\rm ele}\right]^T$.
Instead of executing an exhaustive beam scanning over the $B_xB_y$ directions in \eqref{equ:temp6} and \eqref{equ:temp7}, during the $i$-th time slot of the uplink CE stage, the RIS employs the {\it overlapped NBS beamforming} towards all $B_xB_y$ directions with different random phases. This can be formulated as follows
\vspace{-2mm}
\begin{equation}
\label{equ:temp5}
\tilde \Phi(x,y) = \frac{1}{\sqrt{B_xB_y}}\sum\limits_{{b_x} = 1}^{{B_x}} {\sum\limits_{{b_y} = 1}^{{B_y}} {{e^{j{\theta _{i,{b_x},{b_y}}}}}\tilde \Phi _{{\rm{NBS}}}^{}\left( {x,y;{\bm \zeta _{{b_x},{b_y}}}} \right)} } \text{,}
\vspace{-2mm}
\end{equation}
where $\theta _{i,{b_x},{b_y}} \sim {\cal U}[0,2\pi)$.

By using the beamforming designs in \eqref{equ:temp4} and \eqref{equ:temp5}, and by capitalizing on the channel reciprocity between the downlink and uplink transmissions, the uplink effective baseband channel related to the $u$-th UE in the $i$-th time slot can be formulated as
\begin{equation}
\label{equ:ULmodel1}
{h^{\rm UL}_{i,u}}\left( \tau  \right) =  {h_{i,u}^{{\rm{UL,LoS}}}\left( \tau  \right) + h_{i,u}^{{\rm{UL,NLoS}}}\left( \tau  \right)} \text{,}
\end{equation}
which consists of the LoS part and the NLoS part. 
${h_{u,i}^{{\rm{UL,LoS}}}\left( \tau  \right)}$ can be expressed as \eqref{equ:ULmodel2}, as shown at the top of the next page,
\begin{figure}[b]
%\captionsetup{font={footnotesize}, name = {Fig.}, labelsep = period}
\centering
\includegraphics[width=3in]{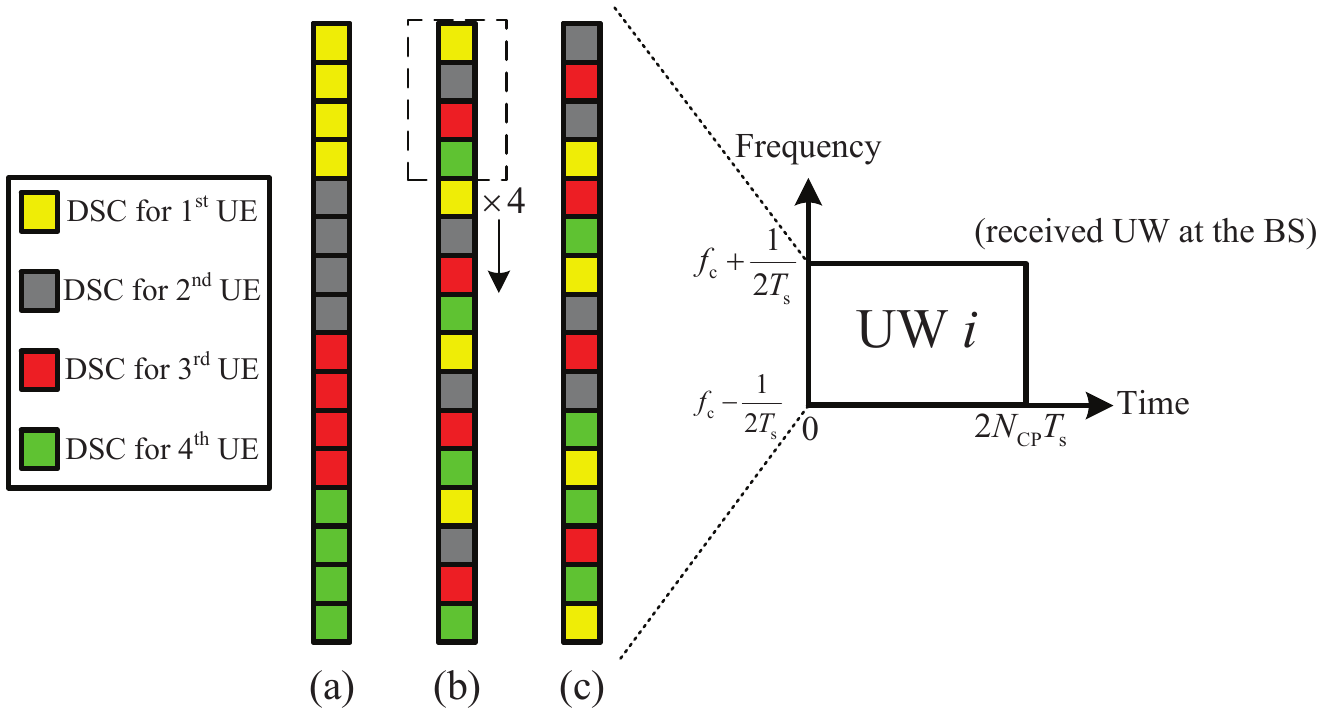}
%\vspace{-5mm}
\caption{Examples of the received UW at the BS with DSCs, where ${{N_{{\rm{CP}}}} = 16}$ and ${{N_{{\rm{UE}}}} = 4}$. Three different DSC allocation schemes are illustrated. (a) Block allocation; (b) Uniform allocation; (c) Random allocation.}
\label{fig:UW}
%\vspace{-8mm}
\end{figure}
%\vspace{-2mm}
\begin{figure*}
\begin{align}
\label{equ:ULmodel2}
h_{i,u}^{{\rm{UL,LoS}}}\left( \tau  \right) & = \alpha^{\rm UL} \beta^{\rm UL,LoS}  {G^{{\rm{B}}}} \sum\limits_{{b_x} = 1}^{{B_x}} {\sum\limits_{{b_y} = 1}^{{B_y}} {\frac{{{e^{j{\theta _{i,{b_x},{b_y}}}}}}}{{\sqrt {{B_x}{B_y}} }}{{\tilde g}_{{\rm{NBS}}}}\left( {\bm \mu^{{\rm{LoS}}}_u,{\bm \psi^{{\rm{R}}}};{\bm \zeta _{{b_x},{b_y}}}} \right)} } g_{{\rm{NBS}}}^{{\rm{U}}}\left( {\bm \nu^{{\rm{LoS}}}_u;{\bm \psi _{{\rm{opt}},\hat n_x^u,\hat n_y^u}}} \right)p\left( {\tau  - {{\tau}_u^{{\rm{UL,LoS}}}}} \right) \nonumber \\
& = {\bf{w}}_i^T{\bf{h}}_u^{{\rm{UL,LoS}}} p\left( {\tau  - {\tau_u^{{\rm{UL,LoS}}}}} \right) \text{,}
\vspace{-2mm}
\end{align}
\begin{align}
\label{equ:hLoS}
{\bf{h}}_u^{{\rm{UL,LoS}}} = & \alpha^{\rm UL} \beta^{\rm UL,LoS} {G^{{\rm{B}}}} g_{{\rm{NBS}}}^{{\rm{U}}}\left( {\bm \nu^{\rm LoS}_u;{\bm \psi _{{\rm{opt}},\hat n_x^u,\hat n_y^u}}} \right) \times \nonumber \\
& \left[ {{{\tilde g}_{{\rm{NBS}}}}\left( {\bm \mu^{\rm LoS}_u,{\bm \psi^{{\rm{R}}}};{\bm \zeta _{1,1}}} \right)},...,{{{\tilde g}_{{\rm{NBS}}}}\left( {\bm \mu^{\rm LoS}_u,{\bm \psi^{{\rm{R}}}};{\bm \zeta _{b_x,{b_y}}}} \right)},...,{{{\tilde g}_{{\rm{NBS}}}}\left( {\bm \mu^{\rm LoS}_u,{\bm \psi^{{\rm{R}}}};{\bm \zeta _{{B_x},{B_y}}}} \right)} \right]^T \in \mathbb{C}^{B_xB_y \times 1} \text{.}
\end{align}
\hrule
\end{figure*}
where ${{\bf{w}}_i} = \frac{1}{{\sqrt {{B_x}{B_y}} }}{\left[ {{e^{j{\theta _{i,1,1}}}},...,{e^{j{\theta _{i,1,{B_y}}}}},{e^{j{\theta _{i,2,1}}}},...,{e^{j{\theta _{i,{B_x},{B_y}}}}}} \right]^T} \in \mathbb{C}^{B_xB_y \times 1}$, and ${\bf{h}}_u^{{\rm{UL,LoS}}}$, as shown in \eqref{equ:hLoS} at the top of the next page, is the effective angular-domain LoS channel with finer angular resolution. The elements in ${\bf h}_u^{\rm UL,LoS}$ show the relation between the LoS angle ${\bm \mu}^{\rm LoS}_u$ and all the pre-defined codewords ${\bm \zeta}_{b_x,b_y}$.

\vspace{1mm}
Further, ${h_{i,u}^{{\rm{UL,NLoS}}}\left( \tau  \right)}$ in \eqref{equ:ULmodel1} can be written as follows
\vspace{-2mm}
\begin{equation}
\label{equ:ULmodel3}
h _{i,u}^{{\rm{UL,NLoS}}}\left( \tau  \right)  = {\bf{w}}_i^T\sum\limits_{l = 1}^L {{\bf{h}}_{u,l}^{{\rm{UL,NLoS}}}}  p\left( {\tau  - {\tau_{u,l}^{\rm UL}}} \right) \text{,}
\end{equation}
where ${{\bf{h}}_{u,l}^{{\rm{UL,NLoS}}}}$ is the effective angular-domain channel that corresponds to the $l$-th NLoS path. The formulation of ${{\bf{h}}_{u,l}^{{\rm{UL,NLoS}}}}$ is similar to \eqref{equ:hLoS} and thus it is omitted for brevity.

In order to simultaneously perform the uplink CE for all the UEs in ${\cal{G}}_{\hat g_x,\hat g_y}$ and to avoid the interference of different UEs' pilot signals at the BS, we consider the set of dedicated subcarriers (DSCs) for the $u$-th UE as follows
\begin{equation}
{{\cal K}_u} = \left\{ {{k_{u,n}}\left| {n = 1,....,{N_{{\rm{used}}}}} \right.} \right\} \text{,}
\end{equation}
where ${N_{{\rm{used}}}} \buildrel \Delta \over = {N_{{\rm{CP}}}}/{N_{{\rm{UE}}}}$ (${N_{{\rm{used}}}} \in \mathbb{Z}$ without loss of generality), ${{\cal K}_u} \subseteq \left\{ {1,...,{N_{{\rm{CP}}}}} \right\}$, {${\rm card}({{{\cal K}_u}} ) = {N_{{\rm{used}}}}$}, and ${{\cal K}_{{u}}} \cap {{\cal K}_{{u'}}} = \emptyset$ for $\forall {u} \ne {u'}$. During the uplink CE stage, the $u$-th UE transmits its own pilot signals by using only the $N_{\rm used}$ subcarriers indexed by ${{\cal K}_u}$ out of the $N_{\rm CP}$ available subcarriers. On the other hand, no signals are transmitted in other subcarriers. This makes easier to separate the pilot signals of different UEs at the BSs, since different DSCs are used. Three examples of the possible structure of an UW with DSCs are illustrated in Fig. \ref{fig:UW}.

Taking into account the DSC allocation above, the uplink channel in the delay domain \eqref{equ:ULmodel1} can be transformed to the frequency domain, which is similar to \eqref{equ:temp3}. In particular, the frequency-domain representation of the channel related to the $u$-th UE corresponding to its $n$-th DSC in the $i$-th time slot can be written as
\begin{align}
h_{i,u,n}^{{\rm{UL,Fd}}} & = \frac{1}{{\sqrt {{N_{{\rm{CP}}}}} }}\sum\limits_{d = 0}^{{N_{{\rm{CP}}}} - 1} {{{h}^{\rm UL}_{i,u}}\left( {d{T_{\rm{s}}}} \right)} {e^{ - j\frac{{2\pi }}{{{N_{{\rm{CP}}}}}}{k_{u,n}}}} = {\bf{w}}_i^T{\bf{H}}_u^{{\rm{AdDd}}}{{\bf{f}}_{{k_{u,n}}}} \text{,}
\end{align}
where ${{\bf{f}}_{{k_{u,n}}}}$ is the ${{k_{u,n}}}$-th column vector of the $N_{\rm CP} \times N_{\rm CP}$ DFT matrix ${{\bf{F}}_{{N_{{\rm{CP}}}}}}$, and
\begin{equation}
\label{equ:HAdDd}
{\bf{H}}_u^{{\rm{AdDd}}} = \left[ {{{\bf{h}}_1},...,{{\bf{h}}_{{N_{{\rm{CP}}}}}}} \right] \in \mathbb{C}^{B_xB_y \times N_{\rm CP}}
\end{equation}
with ${{\bf{h}}_t} = {\bf{h}}_u^{{\rm{UL,LoS}}} p\left[ {\left( {t - 1} \right) {T_{\rm{s}}} - \tau^{{\rm{UL,LoS}}}_u} \right] + \sum\limits_{l = 1}^L {{\bf{h}}_{u,l}^{{\rm{UL,NLoS}}} p\left[ {\left( {t - 1} \right) {T_{\rm{s}}} - \tau_{u,l}^{\rm UL}} \right]}$, $t = 1,..,N_{\rm CP}$, being the effective angular-domain and delay-domain (AdDd) uplink channel to be estimated.

We assume that each UE transmits the pilot signal $\sqrt{\frac{P^{\rm UL}_{\rm Tx}}{N_{\rm used}}}$ over its DSCs during the uplink CE stage, where $P^{\rm UL}_{\rm Tx}$ is the total transmit power of the UE. The received pilot signal in the $i$-th time slot at the BS can be expressed as
\begin{equation}
y_{i,u,n} = \sqrt {\frac{{P_{{\rm{Tx}}}^{{\rm{UL}}}}}{{{N_{{\rm{used}}}}}}} h_{i,u,n}^{{\rm{UL,Fd}}} + n_{i,u,n} = \sqrt {\frac{{P_{{\rm{Tx}}}^{{\rm{UL}}}}}{{{N_{{\rm{used}}}}}}}  {\bf{w}}_i^T{\bf{H}}_u^{{\rm{AdDd}}}{{\bf{f}}_{{k_{u,n}}}} + n_{i,u,n} \text{,}
\vspace{-1.8mm}
\end{equation}
where $n_{i,u,n} \sim {\cal{CN}}(0,\sigma^2_{\rm n})$ is the AWGN.
By collecting the received pilot signals of all the DSCs for the $u$-th UE $\left\{ {{{\bf{y}}_{i,u,n}}} \right\}_{n = 1}^{{N_{{\rm{used}}}}}$, we have
\begin{equation}
\label{equ:inA2}
{\bf y}^T_{i,u} = \sqrt {\frac{{P_{{\rm{Tx}}}^{{\rm{UL}}}}}{{{N_{{\rm{used}}}}}}} {\bf{w}}_i^T{\bf{H}}_u^{{\rm{AdDd}}}{{\bf{F}}_u} + {\bf n}^T_{i,u} \text{,}
\end{equation}
where ${\bf y}_{i,u} = \left[ y_{i,u,1},...,y_{i,u,N_{\rm used}} \right]^T\in \mathbb{C}^{N_{\rm used}\times 1}$, ${\bf n}_{i,u} = \left[ n_{i,u,1},...,n_{i,u,N_{\rm used}} \right]^T \in \mathbb{C}^{N_{\rm used}\times 1}$, and ${\bf F}_{u} = {\left[ {{{\bf{F}}_{{N_{{\rm{CP}}}}}}} \right]_{{{\cal K}_u}}} \in \mathbb{C}^{N_{\rm CP} \times N_{\rm used}}$ is the partial DFT matrix.
For ${N_{\rm{P}}}$ successive time slots, we aggregate the channel observations of the $u$-th UE $\left\{ {{{\bf{y}}^T_{i,u}}} \right\}_{i = 1}^{{N_{\rm{P}}}}$ into the matrix ${{\bf{Y}}_u} \in \mathbb{C}^{N_{\rm P}\times N_{\rm used}}$, which can be formulated as follows
\begin{equation}
\label{equ:CSpro}
{{\bf{Y}}_u} = {\left[ {{{\bf{y}}_{1,u}},...,{{\bf{y}}_{{N_{\rm{P}}},u}}} \right]^T} = {\bf{W}}{\bf{H}}_u^{{\rm{AdDd}}}{{\bf{F}}_u} + {\bf N}_{u} \text{,}
\end{equation}
where ${\bf{W}} =\sqrt{\frac{ P^{\rm UL}_{\rm Tx}}{N_{\rm used}}} \left[ {\bf{w}}_1,...,{\bf{w}}_{N_{\rm P}} \right]^T \in \mathbb{C}^{N_{\rm P}\times B_xB_y}$ and ${\bf N}_{u} = \left[ {\bf{n}}_{1,u},...,{\bf{n}}_{N_{\rm P},u} \right]^T\in \mathbb{C}^{N_{\rm P}\times N_{\rm used}}$. The objective of the uplink CE stage is therefore, to estimate ${\bf{H}}_u^{{\rm{AdDd}}}$ by exploiting the knowledge of ${\bf W}$, ${\bf F}_u$, and the noisy matrix ${\bf Y}_u$.
Usually, we have ${N_{\rm{P}}} < B_xB_y$ due to the limited channel coherence time, and ${N_{\rm{used}}} < {N_{\rm{CP}}}$ due to the allocation of different DSCs among multiple UEs.
These constraints make (\ref{equ:CSpro}) an under-determined system, which is not possible to solve by using traditional estimation techniques such as the LS estimator \cite{FFFCE,FSFCE}. 
Fortunately, due to the strong LoS link between the UE and the RIS, the channel matrix ${\bf{H}}^{\rm AdDd}_u$ is expected to be sparse, and, in particular, only the elements whose indices fulfill the conditions ${\bm \mu}^{\rm LoS}_u \approx {\bm \zeta_{b_x,b_y}}$, $b_x \in \{1,...,B_x\}$, $b_y \in \{1,...,B_y\}$, and $\tau^{{\rm{UL,LoS}}}_u \approx (t-1)T_{\rm s}$, $t \in \{1,...,N_{\rm CP}\}$ have a non-negligible absolute value. On the other hand, the other entries have a much smaller absolute value, which can be safely ignored. This property of ${\bf{H}}^{\rm AdDd}_u$ is referred to as the dual sparsity in both the angular domain and delay domain, which can be exploited to solve the CE problem.
%In particular, we use CS methods to estimate ${\bf{H}}^{\rm AdDd}_u$ directly from the under-determined measurements in \eqref{equ:CSpro}.
%by solving the following optimization problem
%\begin{equation}
%\label{equ:OptPro1}
%\mathop {\min }\limits_{{\bf{H}}^{\rm AdDd}_u}  {{{\left\| {\rm vec}({\bf{H}}^{\rm AdDd}_u) \right\|}_0}}
%\end{equation}
%\begin{equation}
%\label{equ:OptPro2}
%\text{s.t. } {\left\| {{{\rm vec}({\bf Y}_u)} - ({\bf{F}}_u^T \otimes {\bf{W}}) {\rm{vec(}}{{\bf{H}}^{\rm AdDd}_u}{\rm{)}} } \right\|_2^2}  < \varepsilon \text{,}
%\end{equation}
%where $\varepsilon$ is a threshold for stop criterion. 

Based on these considerations, {\bf Algorithm 2} reports the details of the proposed uplink CE scheme, where the CS-based orthogonal matching pursuit (OMP) algorithm is adopted to recover sparse channels based on \eqref{equ:CSpro}. Once the estimated channel ${\bf{\hat H}}^{\rm AdDd}_u$ is obtained, interpolation-based methods can be applied to reconstruct the channels in all $K$ subcarriers \cite{FSFCE}.

{{\it Remark}: The proposed closed-loop CE framework can be extended to the case where the LoS paths between the BS and the UEs exist.
To this end, the RIS can be configured in order to reduce the scattering from it (i.e., absorption state \cite{Holo4}). Under this configuration, the UEs do not receive the signal reflected by the RIS. Therefore, the direct channel between the BS and the UEs can be estimated by using state-of-the-art CE algorithms (see, e.g., \cite{korean,GaoTSP,GaoCL,CE1,ALiao}). It is worth mentioning that the proposed closed-loop CE scheme can be applied to conventional MIMO (in the absence of RISs) communications in order to estimate the direct channels between the BS and the UEs.
Once the direct link is estimated, it can be removed from the signal received in the presence of the RIS, and the proposed CE algorithm can be applied to estimate the channel between the RIS and the UEs.
}
\begin{algorithm}[t]
\caption{The uplink CE scheme}
\label{alg:alg2}
\begin{algorithmic}[1]
\renewcommand{\algorithmicrequire}{\textbf{Input:}}
\renewcommand{\algorithmicensure}{\textbf{Output:}}
\State {Determine ${\cal G}_{\hat g_x,\hat g_y}$, and the DSC allocation scheme ${\cal K}_u$, $\forall u \in \{1,...,N_{\rm UE}\}$ for $N_{\rm UE}$ UEs;}
\State Set the beamforming design at each UE as in (\ref{equ:temp4});
\For{the $i$-th time slot, $1 \le i \le N_{\rm P}$,}
	\State Generate ${\bf w}_i$ and accordingly set the beamforming design at the RIS as in (\ref{equ:temp5});		
    	\State All $N_{\rm UE}$ UEs transmit the UWs with pilot signals by using the assigned DSCs;
	\State The BS receives the pilot signals ${\bf y}_{i,u}$ from each UE in (\ref{equ:inA2});
\EndFor
\State Compute ${\bf W}$, ${\bf Y}_u$, and ${\bf F}_u$ in \eqref{equ:CSpro}, $\forall u \in \{1,...,N_{\rm UE}\}$;
\State {\it \% CS-based CE algorithm below. Take a certain $u$-th UE as an example.}
\State {\bf Initialization}: $\text{iter} = 0$, $\mathcal{I} = $ empty set, ${{\bf{r}}} = {\rm vec}({\bf{Y}}_u)$, ${\bf{\hat H}}$ is an all-zero matrix of size $B_xB_y \times N_{\rm CP}$, $N_{\max}$ is the maximum number of iterations;
\While{$\text{iter} < N_{\max}$,}
    \State ${i^*} = \mathop {\arg \max } \limits_i  \left| {{{\left[ { {{{\left( {{\bf{F}}_u^T \otimes {\bf{W}}} \right)}^H}{{\bf{r}}}} } \right]}_i}} \right|$;
    \State ${{\mathcal{I}}}={{\mathcal{I}}} \cup \{{{i}^{*}}\}$;
    \State ${\bf{\hat h}}_{\rm temp} = \left[ {{\bf{F}}^T_u \otimes {\bf{W}}} \right]_{\cal I}^\dag {\rm vec}({\bf{Y}}_u)$;
    \State ${{\bf{r}}} = {\rm vec}({\bf{Y}}_u) - \left[ {{\bf{F}}^T_u \otimes {\bf{W}}}\right]_{\cal I}{{\bf{\hat h}}_{\rm temp}}$;
    \State $\text{iter} = \text{iter} + 1$;
\EndWhile
\State ${\bf{\hat h}} = {\rm vec}({\bf{\hat H}})$;
\State $[{\bf{\hat h}}]_{\cal I} = {\bf{\hat h}}_{\rm temp}$;
\State ${\bf{\hat H}}_u^{\rm AdDd}$ = ${\rm vec}^{-1}({\bf{\hat h}})$;
\Ensure {The estimated channels ${\bf{\hat H}}_u^{\rm AdDd}$, $\forall u \in \{1,...,N_{\rm UE}\}$;}
\end{algorithmic}
\end{algorithm}

{\vspace{-3mm}
\subsection{Pilot Overhead and Computational Complexity Analysis}
In this subsection, we analyze pilot overhead and the computational complexity of the proposed closed-loop CE scheme. As far as the pilot overhead is concerned, we evince from {\bf Algorithm 1} and {\bf Algorithm 2} that $G_xG_yM^{\rm U}_xM^{\rm U}_y$ and $N_{\rm P}$ UWs are required for the downlink and uplink CE stages, respectively. Therefore, the total required pilot overhead of the proposed CE scheme is $T_{\rm DL} + T_{\rm UL} = 2N_{\rm CP}T_{\rm s} (G_xG_yM^{\rm U}_xM^{\rm U}_y+N_{\rm P})$, where $2N_{\rm CP}T_{\rm s}$ is the length (duration) of one UW according to Fig. \ref{fig:frame}. 

The computational complexity of the proposed scheme consists of the following two parts: 

{\it 1) Downlink computational complexity}. This is determined by the search at each UE in order to determine the angular groups that the LoS angles belong to, as detailed in \eqref{equ:FindMax}. In this case, each UE needs to find the index of the maximum among $G_x G_y M^{\rm U}_x M^{\rm U}_y$ signals, and, therefore, the computational complexity at each UE is $\textsf{O}(G_x G_y M^{\rm U}_x M^{\rm U}_y)$, where $\textsf{O}(N)$ stands for ``of the order of $N$''. It is noteworthy that this complexity is affordable even for energy-constrained UEs, since $M^{\rm U}_x$, $M^{\rm U}_y$, $G_x$ and $G_y$ are much smaller than the numbers of elements available at the RIS and BS.

{\it 2) Uplink computational complexity}. This is mainly determined by the uplink CE algorithm. 
Since the CE problem has been formulated in \eqref{equ:CSpro} as a sparse signal recovery problem via under-determined measurements, various off-the-shelf algorithms, such as greedy algorithms (e.g., \cite{My1,THzIRS}), Bayesian algorithms (e.g., \cite{XYuan}), and deep learning methods (e.g., \cite{SLiu}), can be used for CE. The corresponding computational complexity may, therefore, vary significantly. As far as the OMP algorithm adopted in {\bf Algorithm 2} is concerned, we consider the total number of complex-valued multiplications to evaluate the computational complexity, as listed in \cite[Table I]{My2}. The specific results are presented in the next section.
}

\vspace{-3mm}
\section{Simulation Results}
In this section, we present numerical results to evaluate the different types of RISs, and the performance of the proposed CE scheme.
\vspace{-4mm}
\subsection{Experimental Setting}
We consider the system model as shown in Fig. \ref{fig:simulation}. The BS and the RIS serve the active UEs distributed within a sector of radius $R$ and central angle ${120^ \circ }$. The BS and the RIS are elevated to the height of $h_1$ and the UEs have the height of $h_2$. We assume that the normal directions of the arrays at the BS and RIS point towards each other, which yields ${\bm \psi}^{\rm B} = {\bm \psi}^{\rm R} = \left[0,0\right]^T$. The simulation setup is detailed as follows unless stated otherwise: $R = 20$ m, $h_1 = 10$ m, $h_2 = 1.5$ m, $M^{\rm B}_x = M^{\rm B}_y = 64$, $N_{\rm RF} = 4$, $M^{\rm U}_x = M^{\rm U}_y = 8$, $A_x = A_y = 0.2$ m, $f_{\rm c} = 0.15$ THz ($\lambda \approx 2$ mm), $T_{\rm s} = 2\times 10^{-9}$ sec (bandwidth ${\rm BW} = 1/T_{\rm s} = 500$ MHz), $N_{\rm CP} = 64$, $K = 256$. A raised cosine filter $p(\tau)$ with roll-off factor $0.8$ is employed. The noise power spectrum density at the receiver is $\sigma _{{\rm{NSD}}}^2 = -174$ dBm/Hz. Thus, the power of the AWGN $\sigma_{\rm n}^2$ is $\sigma_{\rm n}^2 = \sigma _{{\rm{NSD}}}^2 \times {\rm BW} \approx -87$ dBm. The number of iterations $N_{\max}$ in {\bf Algorithm 2} is set to $20$.
\begin{figure}[b]
%\vspace{-10mm}
\centering
%\captionsetup{font={footnotesize}, name = {Fig.}, labelsep = period}
\subfigure[]{\includegraphics[width=1.5in]{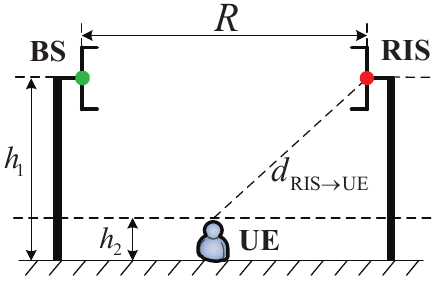}}
\hspace{5mm}
\subfigure[]{\includegraphics[width=1.5in]{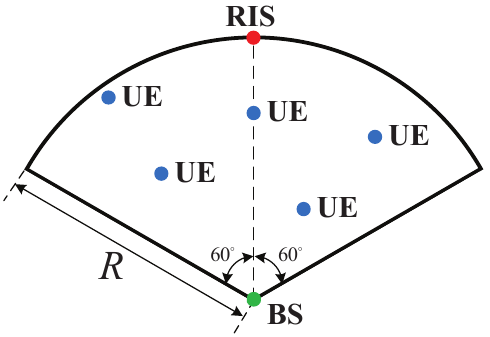}}
%\vspace{-5mm}
\caption{RIS-aided THz massive MIMO systems. (a) side view; (b) top view.}
\label{fig:simulation}
%\vspace{-8mm}
\end{figure}

As for the downlink channel model in \eqref{equ:DLmodel1}, we set $L = 1$, $K_{\rm f} = 30$ dB. The angle ${\bm \mu}^{\rm LoS}$ is calculated according to the position of the UE, and the angles ${\bm \nu}^{\rm LoS}$, ${\bm \mu}_{l}$ and ${\bm \nu}_{l}$ are randomly generated. The delay offsets $\tau^{\rm UL,LoS}$ and $\tau^{\rm DL}_{l}$ ($\tau^{\rm UL,LoS} < \tau^{\rm DL}_{l}$) follow a uniform distribution $\mathcal{U}(0,(N_{\rm CP}-1)T_{\rm s})$. The channel coefficients $\alpha^{\rm DL}$ and $\beta^{\rm DL,LoS}$ ($\beta^{\rm DL}_l$ can be similarly modeled) are modeled as follows \cite[Eq. (20)]{SJin}
\begin{align}
\alpha^{\rm DL} & = e^{j \theta_{\alpha}} \sqrt{\frac{G_{\rm Tx} S_{\rm eff}}{4 \pi R^2 A_{\rm abs}(f_{\rm c},R)}}  \text{,} \\
\beta^{\rm DL,LoS} & = e^{j \theta_{\beta}} \sqrt{\frac{G_{\rm RIS}G_{\rm Rx}}{A_{\rm abs}(f_{\rm c},d_{\rm RIS \to UE})}} \frac{\lambda }{{4\pi d_{\rm RIS \to UE}}} \text{,}
\end{align}
where $\theta_{\alpha},\theta_{\beta} \sim \mathcal{U}[0,2\pi)$ are the phase shifts introduced by the channels, $S_{{\rm{eff}}}$ is the effective reflection area of the RIS, $d_{\rm RIS \to UE}$ is the distance between the UE and the RIS, $G_{\rm Tx}$, $G_{\rm RIS}$, and $G_{\rm Rx}$ are the array gains of the BS, the RIS, and the UE, respectively{\footnote{The array gain can be calculated as $G = \frac{4\pi}{{\int_{\theta _{{\rm{out}}}^{{\rm{azi}}} = {\rm{0}}}^{{\rm{2}}\pi } {\int_{\theta _{{\rm{out}}}^{{\rm{ele}}} = {\rm{0}}}^{\pi {\rm{/2}}} {\left| {g\left( {{{\bm \psi} _{{\rm{out}}}}} \right)} \right|} } ^{\rm{2}}}\sin \theta _{{\rm{out}}}^{{\rm{ele}}}{\rm{d}}\theta _{{\rm{out}}}^{{\rm{ele}}}{\rm{d}}\theta _{{\rm{out}}}^{{\rm{azi}}}}$ \cite{SJin}, where ${g\left( {{{\bm \psi} _{{\rm{out}}}}} \right)}$ is the beam pattern of the array that is discussed in previous text.}}, and $A_{\rm abs}(f_{\rm c},d_{\rm Tx \to Rx})$ is the attenuation caused by molecular absorption \cite{THz1,THz2}. $A_{\rm abs}(f_{\rm c},d_{\rm Tx \to Rx})$ is related to the carrier-frequency $f_{\rm c}$ and the transmission distance $d_{\rm Tx \to Rx}$, and its specific values are obtained based on the recommendations of the International Telecommunications Union (ITU) \cite{ITU}.
%By taking into account that the physical size of the RIS ($0.5m \times 0.5m$) is much smaller than the transmission distance ($R = 20m$),
The effective reflection area $S_{{\rm{eff}}}$ of the RIS can be modeled as its whole physical area (aperture) \cite{SJin}
\begin{equation}
\label{equ:Seff}
{S_{{\rm{eff}}}} = \left\{ {\begin{array}{*{20}{l}}
{{A_x}{A_y},}&\text{for CMS,}\\
{\frac{{{A_x}{A_y}}}{{{d^2}}}{S_{{\rm{ele}}}},}&\text{for DPA-based RIS with spacing $d$,}
\end{array}} \right.
\end{equation}
where $S_{{\rm{ele}}} \le d^2$ is the physical size of a single reflection element in a DPA-based RIS. We assume $S_{{\rm{ele}}} \approx 200$ $\mu$m $\times$ $200$ $\mu$m as demonstrated in \cite[Fig. 1]{THzIRS}.
%Based on Fig. \ref{fig:model}, we assume $d \geq \sqrt{S_{\rm ele}}$.
The parameters of the uplink channel in \eqref{equ:ULmodel1}-\eqref{equ:ULmodel3} can be similarly modeled and thus are omitted for brevity.

\vspace{-5mm}
\subsection{Numerical Results}
Fig. \ref{fig:CC} shows the required pilot overhead and computational complexity versus the number of groups $\{G_x,G_y\}$ when $N_{\rm P}=40$. It is observed that, by setting different groups $\{G_x,G_y\}$, the proposed closed-loop scheme provides a trade-off between the pilot overhead and the computational complexity. Two cases deserve further attention in Fig. \ref{fig:CC}: (i) the case of $G_x=G_y=1$ refers to estimating the complete channels in the uplink (in an open-loop manner). In this case, the uplink channel matrix to be estimated in \eqref{equ:HAdDd} has a total size of $40,000 \times 64$, which causes an unaffordable computational complexity and storage burden \cite{ALiao}; and (ii) the case of $G_x=2A_x/\lambda = 200$ and $G_y = 2A_y/\lambda = 200$ refers to acquiring the complete CSI only in the downlink by exhaustive beam scanning. This case would suffer from a long $T_{\rm DL}$ (about $0.6$ sec as shown in Fig. \ref{fig:CC}), which may degrade the net spectral efficiency. 

Fig. \ref{fig:DL} shows the accuracy of the downlink CE stage by investigating the probability of grouping failure versus the total downlink transmission power at the BS. The indices obtained by {\bf Algorithm 1} are compared with the oracle LoS angles ${\bm \mu}^{\rm LoS}$ and ${\bm \nu}^{\rm LoS}$ to decide whether the downlink CE succeeds or not. Three different types of RISs, i.e., critically-spaced RIS ($d=\lambda/2$), ultra-dense RIS ($d < \lambda/2$) and CMS ($d \to 0$) are considered. 
We observe that the CMS provides the best performance among the three types of RISs. {The performance of ultra-dense RISs is significantly better than traditional critically-spaced RISs, since ultra-dense RISs have larger effective reflection area and smaller sidelobes (i.e., higher array gain in the mainlobe) which bring a better receive signal-to-noise ratio (SNR).
Fig. \ref{fig:DL} also reveals that the performance of a practical ultra-dense RIS can well approach that of an ideal (but unrealistic) CMS. A practical $d = \lambda/8$ renders only a minor performance gap compared with the CMS, so it would be sufficient to treat the ultra-dense RIS as the holographic RIS.}
Moreover, it is observed that the downlink CE performance improves as the number of groups $\{G_x,G_y\}$ increases. This is because with a larger $\{G_x,G_y\}$ the SBF beam pattern has a narrower passband, which enhances the amplitude of the beam pattern (i.e., array gain) in the passband and yields a better receive SNR.
\begin{figure}[t]
%\captionsetup{font={footnotesize, color={red}}, name = {Fig.}, labelsep = period}
\centering
\includegraphics[height=2.4in,width=3in]{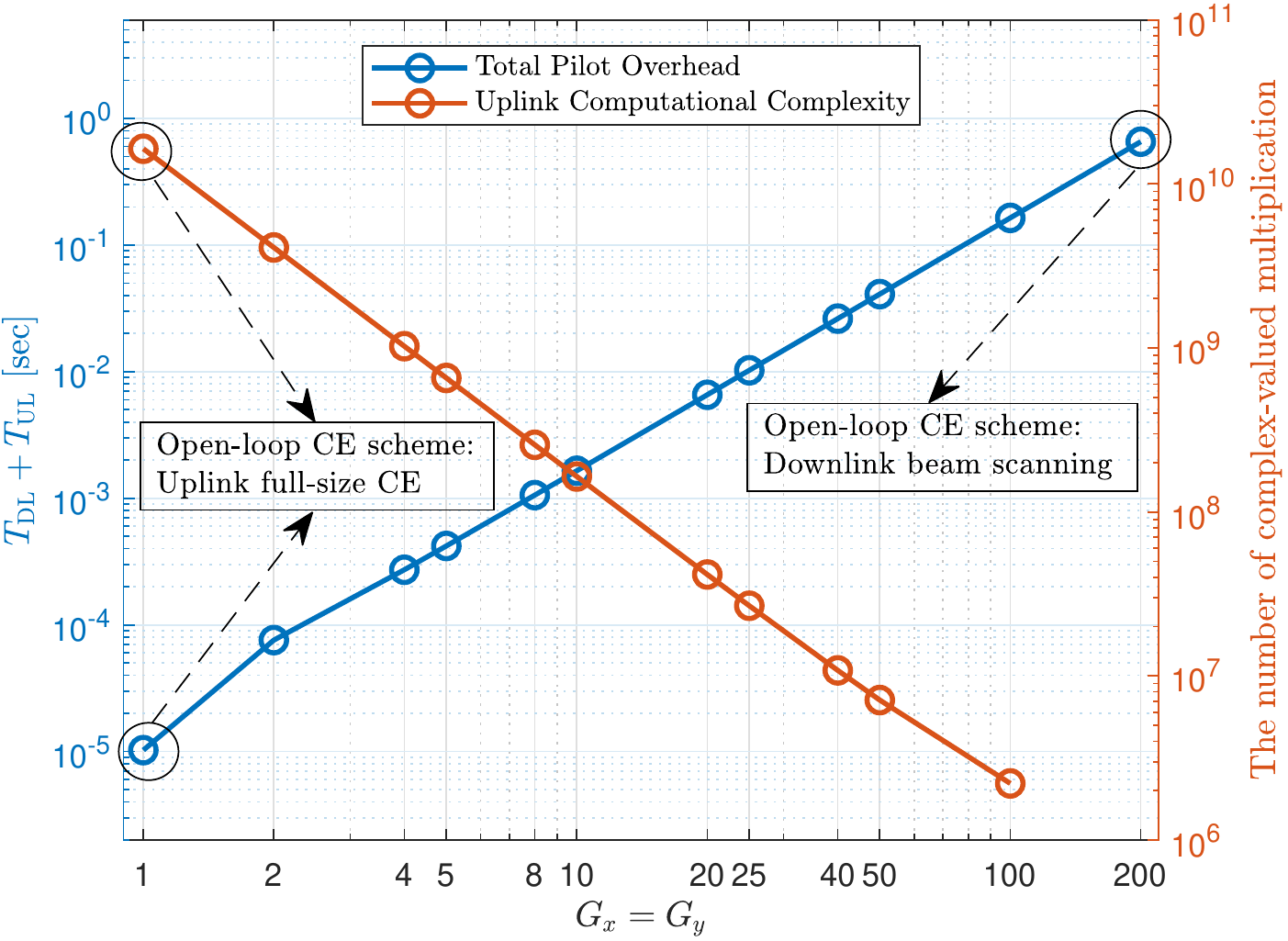}
%\vspace{-7mm}
\caption{The trade-off between the total pilot overhead and computational complexity of the proposed closed-loop CE scheme, where $N_{\rm P} = 40$ is considered.}
\label{fig:CC}
\end{figure}
\begin{figure}[t]
%\captionsetup{font={footnotesize}, name = {Fig.}, labelsep = period}
\centering
\includegraphics[width=3in]{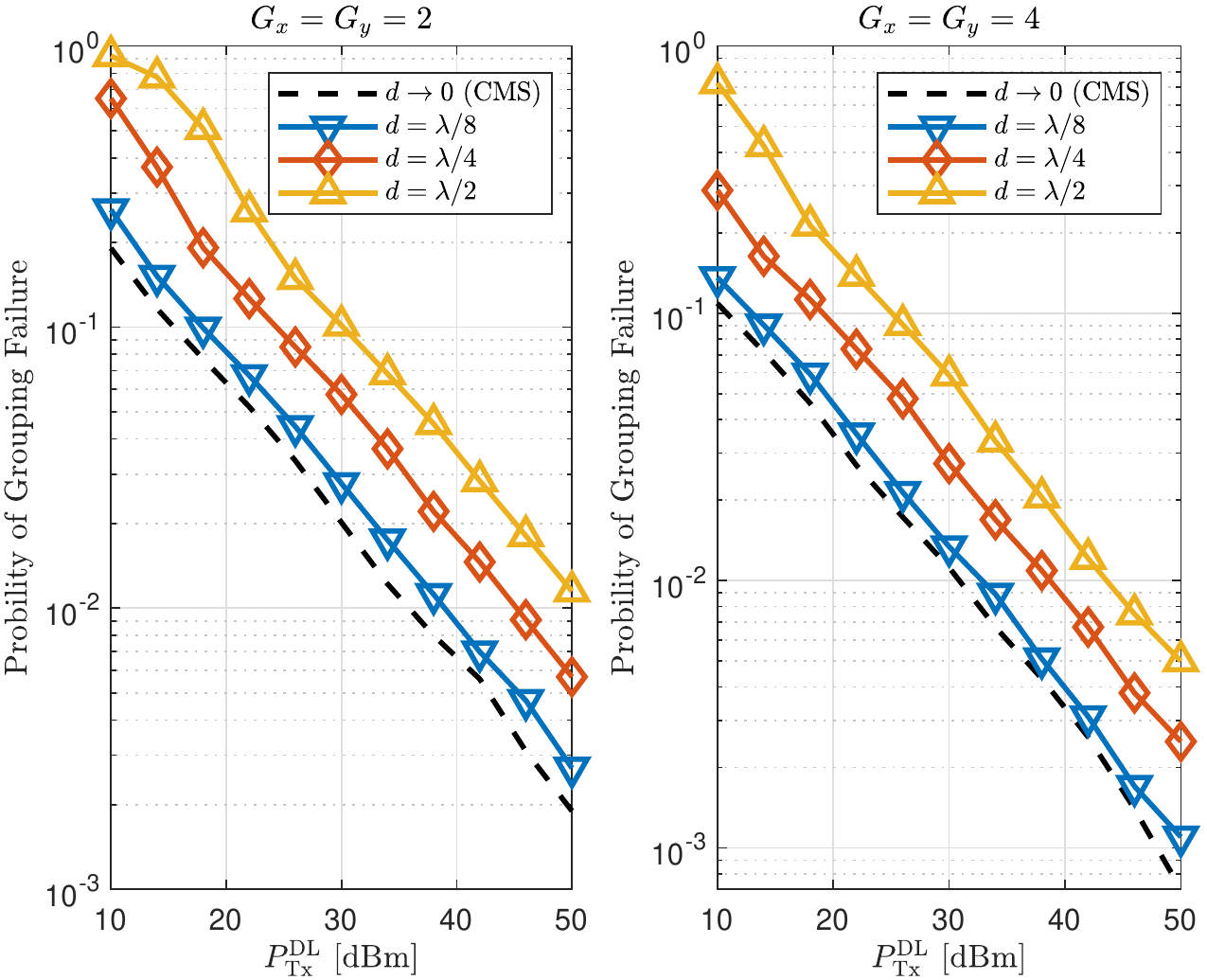}
%\vspace{-7mm}
\caption{The downlink CE performance versus the BS transmit power. Different types of RISs and different values of $\{G_x,G_y\}$ are compared.}
\label{fig:DL}
%\vspace{-12mm}
\end{figure}

Next, we investigate the performance of the uplink finer-grained CE stage that is obtained by using {\bf Algorithm 2}, and compare it with some existing schemes. We set $G_x = G_y = 10$, i.e., $B_x=B_y=20$. In Figs. \ref{fig:diffP}-\ref{fig:diffPtx}, the normalized mean square error (NMSE) is adopted as the performance metric of interest, which is given by $\mathbb{E}\left\{ {\frac{{\left\| {{\bf{\hat H}}_u^{{\rm{AdDd}}} - {\bf{H}}_u^{{\rm{AdDd}}}} \right\|_F^2}}{{\left\| {{\bf{H}}_u^{{\rm{AdDd}}}} \right\|_F^2}}} \right\}$.
Fig. \ref{fig:diffP} depicts the uplink NMSE performance versus the uplink pilot overhead $N_{\rm P}$ and the different DSC allocation schemes as illustrated in Fig. \ref{fig:UW} for $N_{\rm UE}=4$. It can be observed that the random DSC allocation scheme achieves better CE performance, while the uniform allocation scheme fails to work properly. If the random DSC allocation scheme is considered, in addition, a sufficiently high CE accuracy can be ensured even with a low compression ratio $(N_{\rm P}N_{\rm used})/(B_xB_yN_{\rm CP})$ in the range $\{0.0063,0.0187,0.0313,0.0437,0.0563,0.0688\}$ in Fig. \ref{fig:diffP}.
Due to its superior performance, the random DSC allocation strategy is adopted to obtain the rest of the results.
\begin{figure}[t]
%\captionsetup{font={footnotesize}, name = {Fig.}, labelsep = period}
\centering
\includegraphics[width=3in]{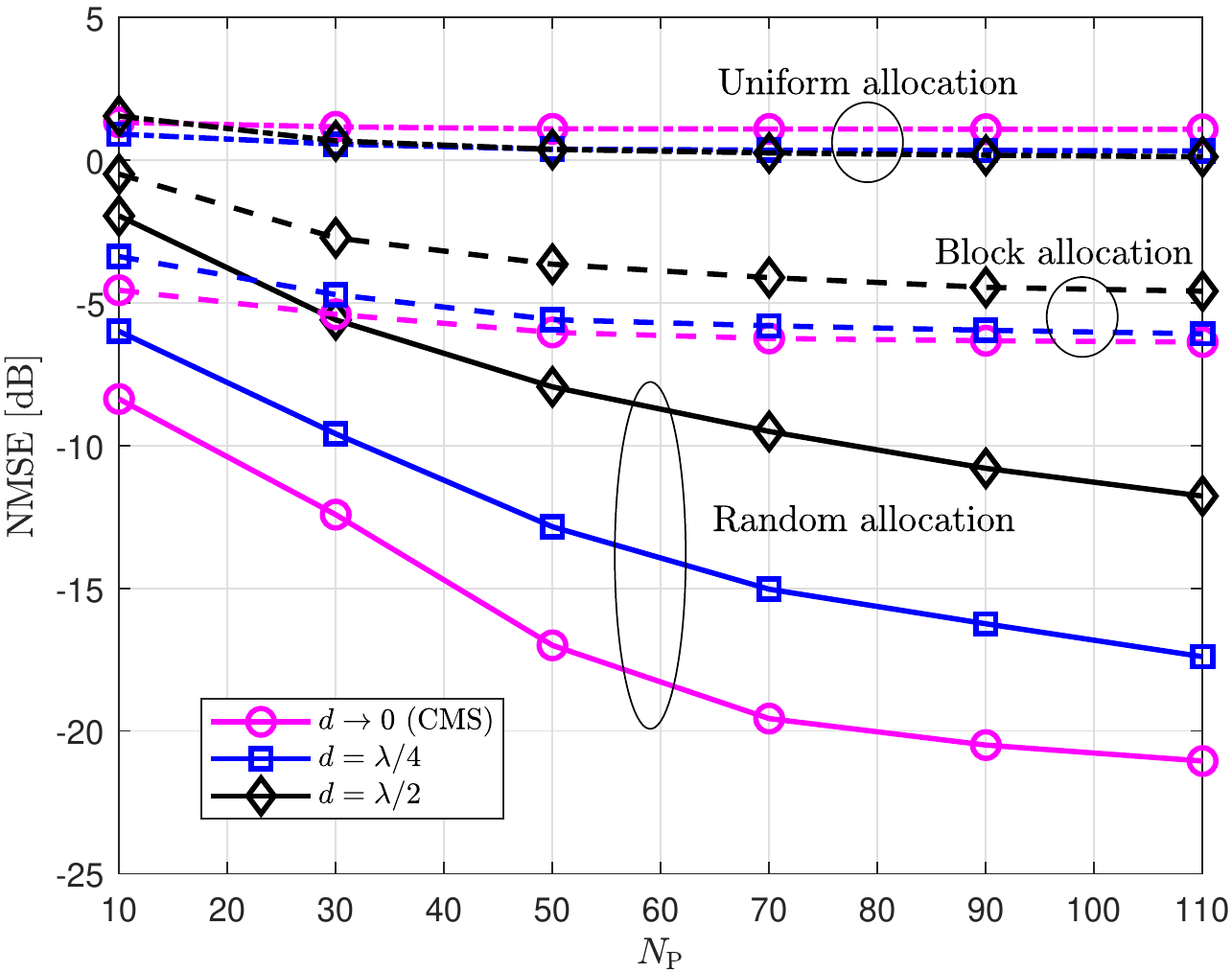}
%\vspace{-7mm}
\caption{NMSE performance versus both the uplink pilot overhead $N_{\rm P}$ and DSC allocation schemes when $N_{\rm UE} = 4$ and $P_{\rm Tx}^{\rm UL}$ = $23$ dBm.}
\label{fig:diffP}
\end{figure}
\begin{figure}[t]
\centering
\includegraphics[width=3in]{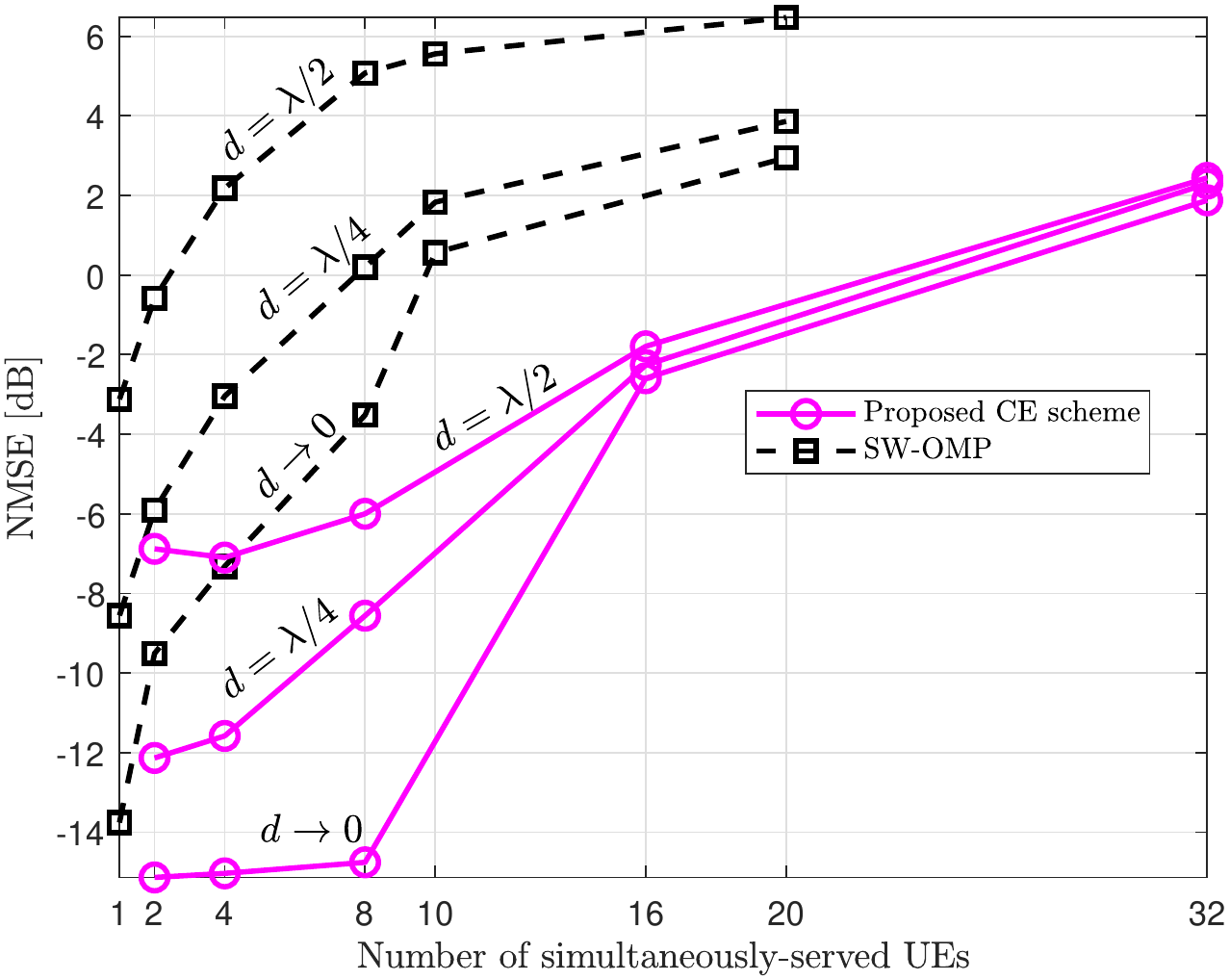}
%\vspace{-7mm}
\caption{NMSE performace veruse the number of UEs $N_{\rm UE}$ when $N_{\rm P} = 40$ and $P_{\rm Tx}^{\rm UL}$ = $23$ dBm. The SW-OMP algorithm \cite{CE1} is adopted as the benchmark.}
\label{fig:diffK}
%\vspace{-12mm}
\end{figure}

In Fig. \ref{fig:diffK}, we plot the NMSE performance of the proposed CE scheme as a function of the number of simultaneously-served UEs $N_{\rm UE}$ for $N_{\rm P}=40$.
{As the benchmark, we adopt the simultaneous weighted OMP (SW-OMP) \cite{CE1} algorithm which estimates the channels in the frequency domain (rather than the delay domain) via well-determined measurements. When SW-OMP is considered, $N_{\rm P}$ UWs are equally divided into $N_{\rm UE}$ parts, each of which is dedicated for one UE. Therefore, only $N_{\rm P}/N_{\rm UE}$ UWs are available for each UE to conduct CE.} It can be observed that the proposed CE scheme outperforms the considered benchmark even if a larger number of UEs are served simultaneously. This is obtained because the proposed CE scheme exploits the dual sparsity of THz MIMO channels in both the angular domain and delay domain, while the CE scheme based on SW-OMP only utilizes the sparsity in the angular domain.
Fig. \ref{fig:diffPtx} compares the NMSE performance of different CE schemes against the uplink transmission power $P_{\rm Tx}^{\rm UL}$ for $N_{\rm UE}=4$. As a benchmark, we consider the LS estimator \cite{FFFCE,FSFCE} with well-determined measurements in both the angular domain and delay domain, which requires a large number $N_{\rm P} = B_xB_yN_{\rm UE} = 1600$ of UWs as pilot signals. By leveraging the dual sparsity of THz MIMO channels in both the angular domain and delay domain, the proposed CS-based CE scheme outperforms the LS estimator even if a much smaller number of pilot signals ($N_{\rm P} = \{40,80\}$) is used. As a second benchmark scheme, we analyze the open-loop CE scheme that implements the uplink CE stage without using the downlink grouping. In such a case, the UEs do not have any prior information of the coarsely-estimated LoS angles and thus the NBS beamforming in \eqref{equ:temp4} is unavailable. Instead, random pilot signals \cite{GaoCL,GaoTSP} are employed at the UEs to realize an omni-directional beam pattern. Compared with the NBS beamforming towards the coarsely-estimated LoS direction, the omni-directional beam pattern disperses the transmit energy towards several directions. This results in the poor CE performance, as demonstrated in Fig. \ref{fig:diffPtx}.

\begin{figure}[t]
%\vspace{-5mm}
%\captionsetup{font={footnotesize}, name = {Fig.}, labelsep = period}
\centering
\includegraphics[width=3in]{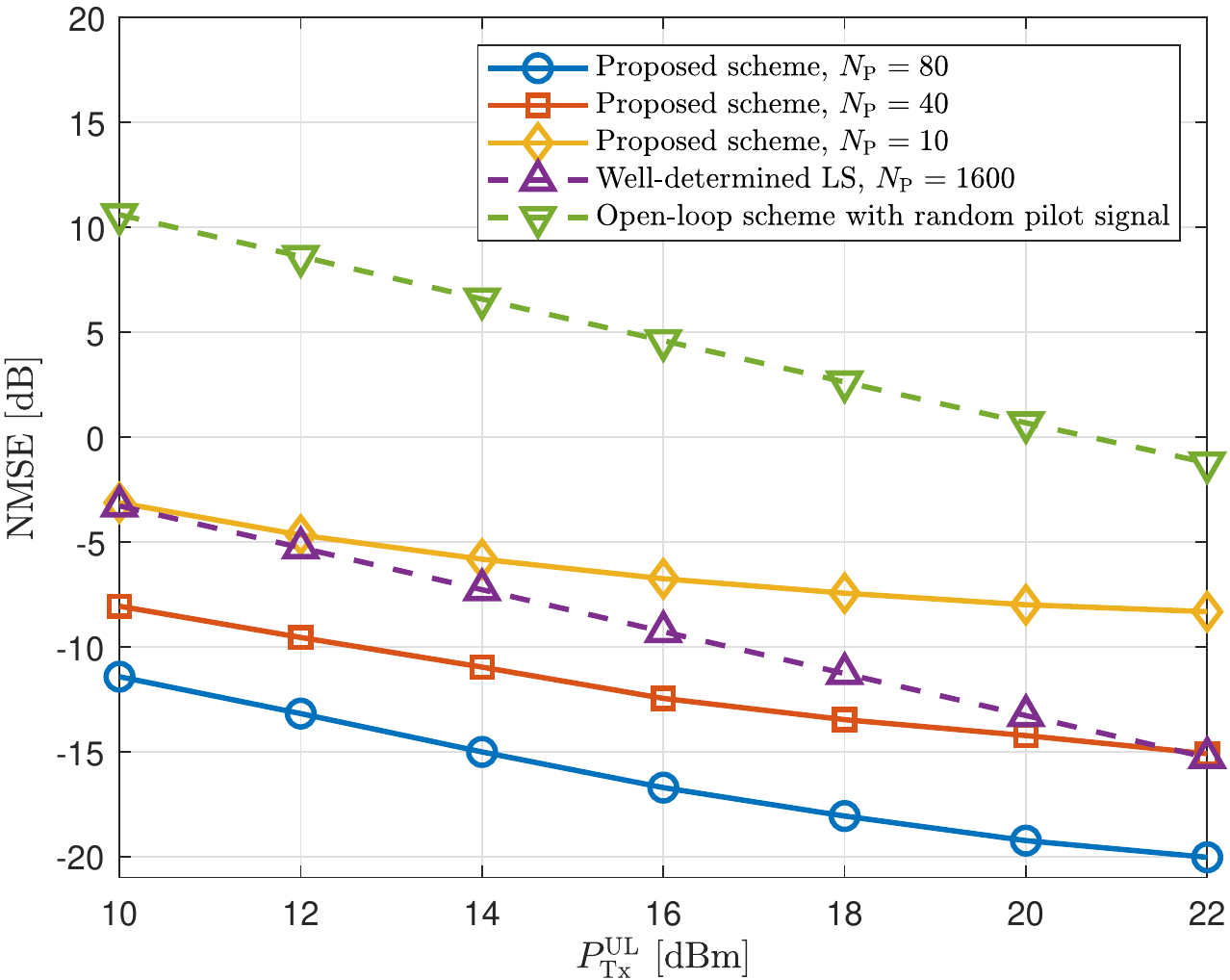}
%\vspace{-7mm}
\caption{NMSE performance of different CE schemes against the uplink transmission power $P_{\rm Tx}^{\rm UL}$ when $N_{\rm UE}=4$. The CMS is considered.}
\label{fig:diffPtx}
\end{figure}
\begin{figure}[t]
\centering
\includegraphics[width=3in]{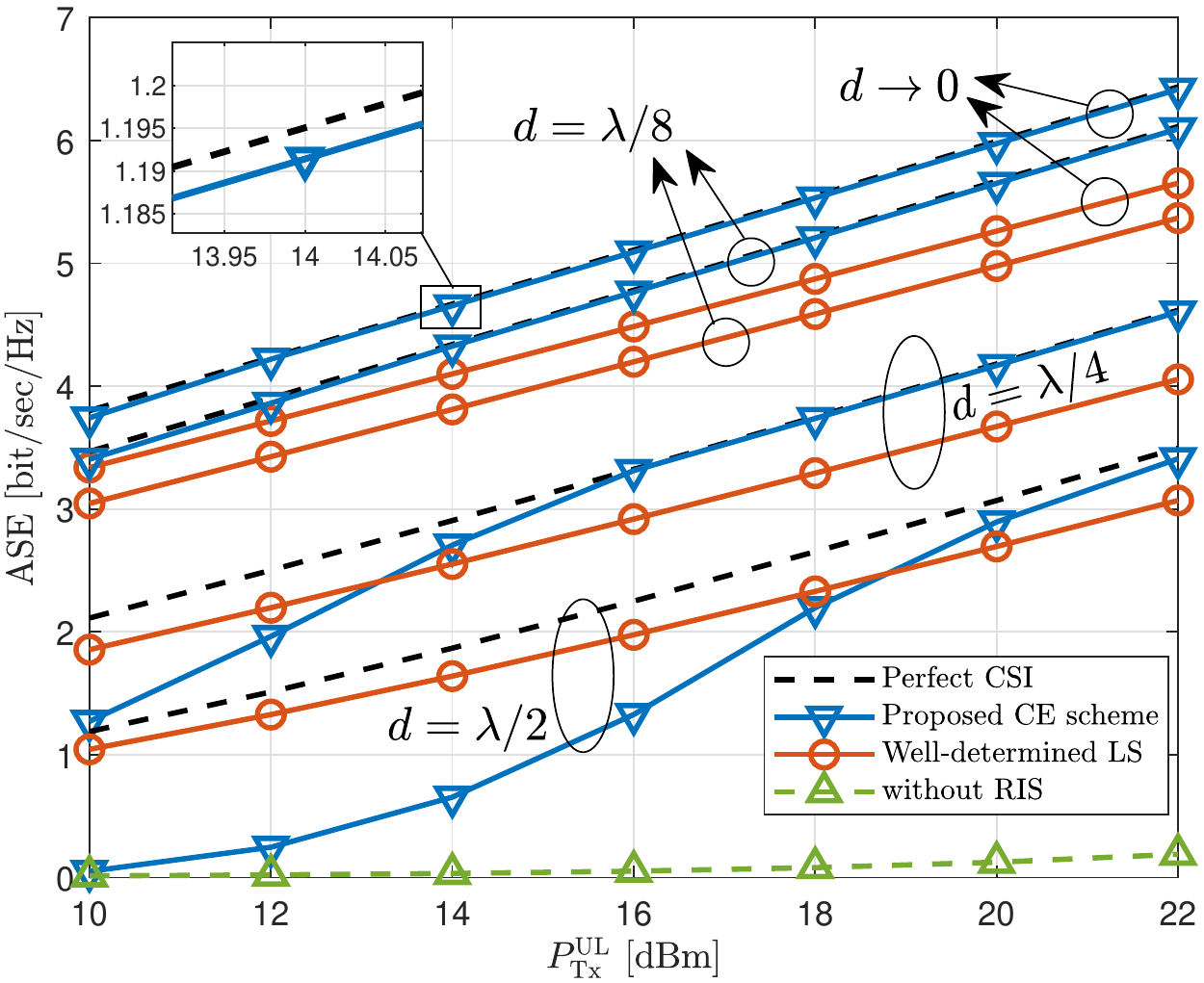}
%\vspace{-7mm}
\caption{Comparison of ASE performance when $N_{\rm UE}=4$ and $T_{\rm coh}=5$ ms. $N_{\rm P}=40$ is considered for the proposed CS-based CE scheme.}
\label{fig:ASE}
%\vspace{-12mm}
\end{figure}

{In Fig. \ref{fig:ASE}, we compare the average spectral efficiency (ASE) performance that is achieved by different schemes in order to further evaluate the accuracy of the CE schemes.} The ASE of the $u$-th UE is defined as
%\vspace{-2mm}
\begin{align}
& \text{ASE} = \left( {1 - \frac{{{T_{{\rm{DL}}}} + {T_{{\rm{UL}}}}}}{{{T_{{\rm{coh}}}}}}} \right) \nonumber \\
& \times \mathbb{E} \left\{\frac{1}{K} {\sum\limits_{k = 1}^K {{{\log }_2}\left[ {1 + P_{\rm Tx}^{\rm UL}{\left| {h^{{\rm{Fd}}}_{u,k}} \right|}^2/({K{\sigma _{\rm{n}}^2}})} \right]} } \right\} \ \text{[bit/sec/Hz],}
\vspace{-2mm}
\end{align}
where ${h_{u,k}^{{\rm{Fd}}}}$ is the frequency-domain effective baseband channel of the $k$-th subcarrier by using the NBS beamforming towards the estimated LoS direction of the RIS.
The NBS beamforming towards the oracle LoS direction is illustrated as an upper-bound (i.e., perfect CSI without uplink pilot overhead). We observe that the ASE obtained by the proposed CE scheme has good tightness with the upper-bound, while the ASE obtained by the well-determined LS estimator is worse due to the long time duration that is required for pilot transmission. {It is also observed that the ASE improves as the element spacing $d$ decreases, and it can approach the performance of the ideal CMS with practical $d$ (e.g., $d = \lambda/8$), which further verify that the ultra-dense RIS is a good realization of the holographic RIS.}
Further, the ASE that is obtained in the absence of an RIS is illustrated. In this case, we assume that the UE communicates with the BS via the NLoS link with perfect CSI and no pilot overhead.
We observe that the virtual LoS link provided by an RIS in THz massive MIMO systems significantly increases the ASE.

{In addition, we investigate the impact of quantization error, i.e., only a finite number of phase shifts can be realized in practice, on the performance of the proposed schemes, similar to \cite{THzIRS,D}. {Note that some state-of-the-art architectures of reflecting element, such as semiconductor diodes in \cite{Model}, are not suited to much higher frequencies (e.g., THz) \cite{APL}. Therefore, the analysis of the impact of the interplay between the phase and the amplitude of the reflecting elements of the THz holographic RIS is postponed to a future search contribution.}
We assume that the phase of each reflecting element of the RIS is quantized by using $B$ quantization bits. This implies that the phases of the reflection coefficients are drawn from the finite set ${\cal{B}} = \left\{ {{{2}}\pi \frac{{1 - {2^{B - 1}}}}{{{{{2}}^B}}},{{2}}\pi \frac{{2 - {2^{B - 1}}}}{{{{{2}}^B}}},...,{{2}}\pi \frac{{{2^{B - 1}}}}{{{{{2}}^B}}}} \right\}$. As far as the proposed beamforming designs (NBS and SBF) are concerned, this correspond to quantizing the reflection coefficients $\Phi(m,n)$ in \eqref{equ:PhiStr} or \eqref{equ:PhiSBF} as ${\Phi ^{{\rm{Q}}}}(m,n) = \left| {\Phi (m,n)} \right|{e^{j{\phi _{m,n}}}}$, where $\phi_{m,n}\in {\cal B}$ is the phase element in ${\cal B}$ that is closest to the phase of $\Phi(m,n)$. Then, the corresponding beam patterns are computed via \eqref{equ:g1} by replacing $\Phi(m,n)$ with $\Phi^{\rm Q}(m,n)$.
In Fig. \ref{fig:SBFQuan}, we illustrate some examples of the SBF beam patterns as a function of the number of quantization bits. If $B = 1$, the quantization error is too large and the accuracy of the beam patterns degrades significantly. If $B = \{2,3\}$, on the other hand, the obtained beam patterns after quantization exhibit the desired band-pass properties, with an acceptable degradation compared with the ideal case without quantization error ($B = \infty$).
In Fig. \ref{fig:NBSQuan}, we evaluate the impact of quantization error on the ASE.
The numerical results confirm that the ASE performance degradation is not significant if $B = \{2,3\}$. In addition, this holds true for any values $d \le \lambda/2$.
}
\begin{figure}[t]
%\captionsetup{font={footnotesize}, name = {Fig.}, labelsep = period}
\centering
\includegraphics[width=3in]{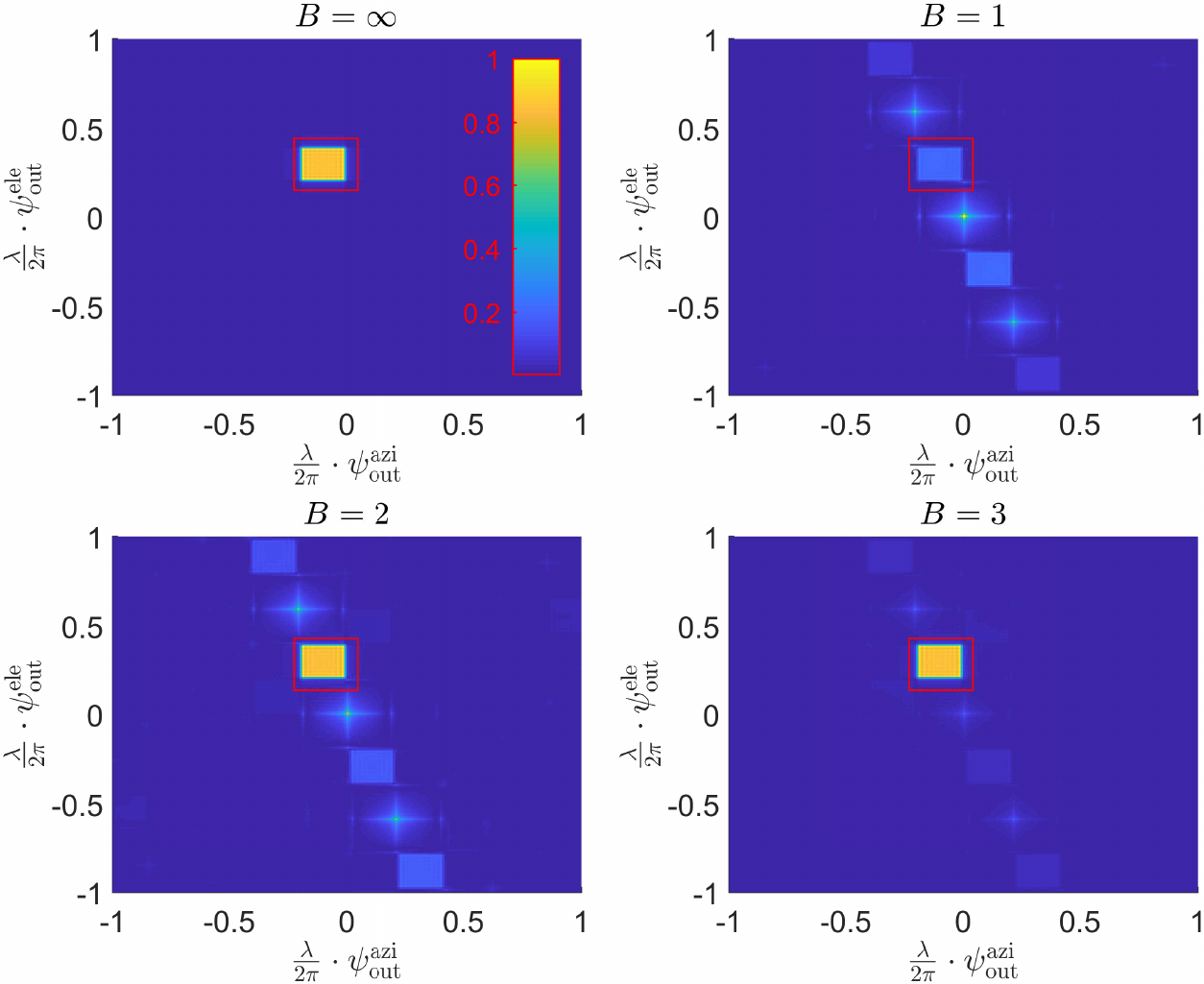}%
%\vspace{-7mm}
\caption{SBF beam patterns used in the proposed CE scheme with different quantization bits. The ultra-dense RIS with elements spacing $d = \lambda/4$ is considered.}
\label{fig:SBFQuan}
\end{figure}
\begin{figure}[t]
\centering
\includegraphics[width=3in]{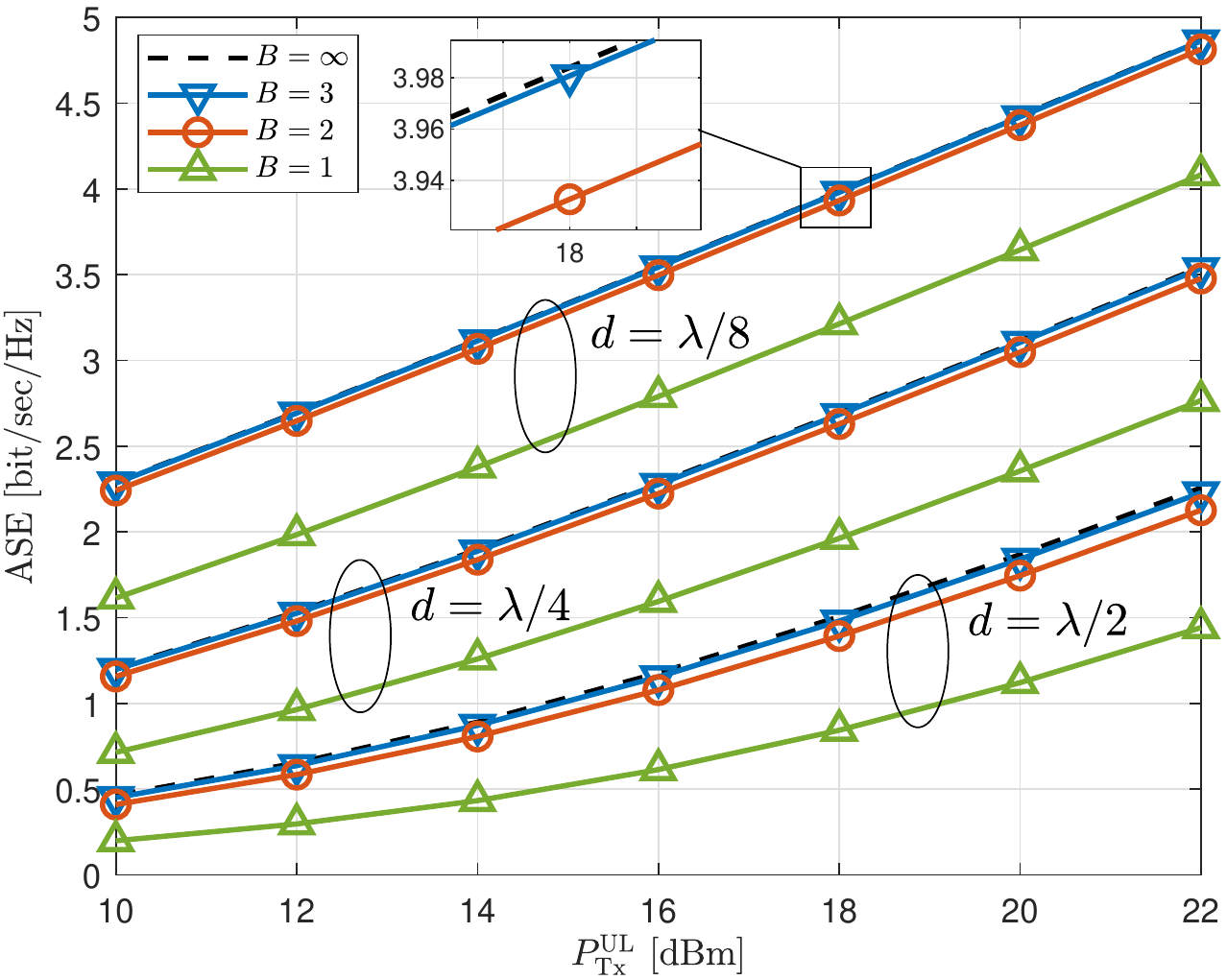}
%\vspace{-7mm}
\caption{ASE performance with different quantization bits under the same scenario as Fig. \ref{fig:ASE}. Perfect CSI is considered.}
\label{fig:NBSQuan}
%\vspace{-12mm}
\end{figure}
%{{\it Remark}: It is worth noting that the cost-effective deployment of holographic RISs whose reflecting elements have a deep sub-wavelength size and inter-distance requires a large number of electronic circuits for controlling each reflecting element individually (assuming that the size of the RIS is kept fixed). Therefore, the development of economies of scale plays an important role in the possible future deployment of nearly-passive RISs.}

\vspace{-3mm}
\section{Conclusions}
%\vspace{-2mm}
Motivated by the concept of holographic communications, we studied the physical layer transmission of holographic RISs in the THz band, where a large number of sub-wavelength reconfigurable elements are densely integrated into a compact space to approach a spatially continuous aperture. We derived the beam pattern of a holographic RIS and proposed two beamforming designs that are formulated {in closed-form expressions}. Based on the proposed beamforming designs, we proposed a closed-loop broadband CE scheme for the RIS-aided THz massive MIMO systems. The proposed CE scheme encompasses a downlink grouping stage and an uplink finer-grained CE stage. In order to reduce the pilot overhead, a CS-based CE algorithm was proposed that exploits the dual sparsity of THz MIMO channels in both the angular domain and delay domain.
Simulation results showed that holographic RISs are able to outperform traditional designs based on non-holographic RISs as well as communication schemes that do not use RISs.
{Possible future research directions based on the results obtained in this paper include the practical amplitude and phase shift model of reflecting elements, the analysis of hardware impairments, the development of robust signal processing methods, the analysis of near-field communications, and the proof-of-concept validations and field experiments.
}

\vspace{-3mm}
\begin{appendices}
\section{Proof of {\bf Corollary 1}}
%\vspace{-3mm}
The result in (\ref{equ:PhiCMSNBS}) can be obtained by letting $dm = x$ and $dn = y$ in (\ref{equ:PhiStr}). As far as (\ref{equ:gCMSNBS}) is concerned, in particular, we have
\vspace{-4mm}
\begin{align}
& \quad \ \tilde g_{\rm{NBS}}\left( {{{\bm \psi} _{{\rm{out}}}},{{\bm \psi} _{{\rm{in}}}}};{\bm \psi}_{\rm opt} \right) \nonumber \\
& \propto \lim_{d \to 0} g_{\rm{NBS}}\left( {{{\bm \psi} _{{\rm{out}}}},{{\bm \psi} _{{\rm{in}}}}};{\bm \psi}_{\rm opt} \right)  \nonumber \\
& = \lim_{d \to 0} {e^{j\frac{{d - {A_x}}}{2}{\Delta _x}}}{e^{j\frac{{d - {A_y}}}{2}{\Delta _y}}} {\Xi _{{N_x}}}\left[ {d} \left( {k_{\rm{opt}} -\Delta_x} \right) \right] {\Xi _{{N_y}}}\left[ {d} \left( {l_{\rm{opt}} - \Delta_y} \right) \right] \nonumber \\
& = {e^{-j\frac{{A_x}}{2}{\Delta _x}}}{e^{-j\frac{A_y}{2}{\Delta _y}}} \lim_{d \to 0} {\Xi _{{N_x}}}\left[ {d} \left( {k_{\rm{opt}} -\Delta_x} \right) \right] {\Xi _{{N_y}}}\left[ {d} \left( {l_{\rm{opt}} - \Delta_y} \right) \right]
 \text{.}
 \vspace{-2mm}
\end{align}
Given the symmetry, we only need to prove $\mathop {\lim }\limits_{d \to 0} {\Xi _{{N_x}}}\left[ {d\left( {k_{\rm{opt}} - {{{\Delta _x}}}} \right)} \right] = {\rm{sinc}}\left[ {\frac{A_x}{2} \left( {k_{\rm{opt}} - {{{\Delta _x}}}} \right)} \right]$ in order to obtain (\ref{equ:gCMSNBS}).
When $k_{\rm{opt}} \ne \Delta_x$, based on the definition of the function $\Xi_{N_x}$, we have
\begin{align}
\label{equ:A1}
& \quad \ \mathop {\lim }\limits_{d \to 0} {\Xi _{{N_x}}}\left[ {d\left( {k_{\rm{opt}} - {{{\Delta _x}}}} \right)} \right] \nonumber \\
& = \mathop {\lim }\limits_{d \to 0} \frac{{\sin \left[ {\frac{{{N_x}d}}{2}\left( {{k_{{\rm{opt}}}} - {\Delta _x}} \right)} \right]}}{{{N_x}\sin \left[ {\frac{d}{2}\left( {{k_{{\rm{opt}}}} - {\Delta _x}} \right)} \right]}} \nonumber \\
& \mathop  = \limits^{{\rm{(a)}}} \mathop {\lim }\limits_{d \to 0} \frac{{d\sin \left[ {\frac{{{A_x}}}{2}\left( {{k_{{\rm{opt}}}} - {\Delta _x}} \right)} \right]}}{{{A_x}\sin \left[ {\frac{d}{2}\left( {{k_{{\rm{opt}}}} - {\Delta _x}} \right)} \right]}}\nonumber \\
& \mathop  = \limits^{{\rm{(b)}}} \mathop {\lim }\limits_{d \to 0} \frac{{\sin \left[ {\frac{{{A_x}}}{2}\left( {{k_{{\rm{opt}}}} - {\Delta _x}} \right)} \right]}}{{\frac{{{A_x}}}{2}\left( {{k_{{\rm{opt}}}} - {\Delta _x}} \right)\cos \left[ {\frac{d}{2}\left( {{k_{{\rm{opt}}}} - {\Delta _x}} \right)} \right]}} \nonumber \\
& = \frac{{\sin \left[ {\frac{{{A_x}}}{2}\left( {{k_{{\rm{opt}}}} - {\Delta _x}} \right)} \right]}}{{\frac{{{A_x}}}{2}\left( {{k_{{\rm{opt}}}} - {\Delta _x}} \right)}}\nonumber \\
& = {\rm{sinc}}\left[ {\frac{{{A_x}}}{2}\left( {{k_{{\rm{opt}}}} - {\Delta _x}} \right)} \right] \text{,}
\end{align}
%\vspace{-1mm}
where the equality (a) follows from $N_x d = A_x$, and the equality (b) is obtained by applying the {\it De l'H{\^o}pital} rule with respect to $d$. In addition, when $k_{\rm{opt}} = \Delta_x$, $\mathop {\lim }\limits_{d \to 0} {\Xi _{{N_x}}}{\left( d \cdot 0 \right)} = 1 = {\rm{sinc}}\left( \frac{A_x}{2} \cdot 0 \right)$. Thus, we obtain (\ref{equ:A1}). This completes the proof of {\bf Corollary 1}.
\end{appendices}

\vspace{-3mm}

\end{document}